\newtheorem{theorem}{Theorem}
\newtheorem{algorithm}{Algorithm}
\newtheorem{case}{Case}
\newtheorem{claim}{Claim}
\newtheorem{corollary}{Corollary}
\newtheorem{definition}{Definition}
\newtheorem{lemma}{Lemma}
\newtheorem{proposition}{Proposition}
\newtheorem{fact}{Fact}
\newcommand{\dn}{correlation metric}
\newcommand{\pr}{\mathbf{P}}
\newcommand{\midasterik}{\textasteriskcentered}
\newcommand{\eps}{\varepsilon}
\DeclareMathOperator*{\argmax}{arg\,max}
\DeclareMathOperator{\profit}{\text{profit}}
\title{Fast Combinatorial Algorithms for Min Max Correlation Clustering}
\author{Sami Davies\thanks{Northwestern University. Supported by an NSF Computing Innovation Fellowship. } \and Benjamin Moseley\thanks{Carnegie Mellon University. Benjamin Moseley and Heather Newman were supported in part by  a Google Research Award, an Inform Research Award, a Carnegie Bosch Junior Faculty Chair, and NSF grants CCF-2121744 and  CCF-1845146.} \and Heather Newman$^\dagger$}
\begin{document}
\maketitle

\abstract{
We introduce fast algorithms for correlation clustering with respect to the Min Max objective that provide constant factor approximations on complete graphs.    
Our algorithms are the \emph{first} purely combinatorial approximation algorithms for this problem.
We construct a novel semi-metric on the set of vertices, which we call the \dn, that indicates to our clustering algorithms whether pairs of nodes should be in the same cluster.
The paper demonstrates empirically that, compared to prior work, 
our algorithms sacrifice little in the objective quality
to obtain significantly better run-time. Moreover, our  algorithms scale to larger networks that are effectively intractable for known algorithms.
}

\section{Introduction}

In correlation clustering, a  graph $G= (V,E)$ is given as input, where each edge is labeled either positive $(+)$ or negative $(-)$. If two vertices are connected by a positive edge, this indicates they are similar and want to being clustered together. Alternatively, a negative edge indicates vertices that are dissimilar and want to be in different clusters. 
We make the common assumption that $G$ is complete; that is, there is a labelled edge between each pair of vertices. 

An edge is in \textit{disagreement} with respect to a clustering 
if it is a negative edge and connects two vertices contained inside the same cluster, or if it is a positive edge and connects two vertices in different clusters.
Note that among three or more vertices, 
the positive and negative labels may be such that \textit{any} clustering has some edges in disagreement, e.g. three vertices where the three labels between them are two positives and one negative.

The goal in correlation clustering is to find a clustering that minimizes an objective capturing the edges' disagreements.   
The most widely considered objective is to minimize the $\ell_p$-norm of the disagreements over the vertices. Here, each vertex $u$ has some number, $y(u)$, of disagreements adjacent to it and the goal is to minimize $\sqrt[p]{\sum_{u \in V} y(u)^p}$.  
The most well-studied objective is when $p = 1$, which minimizes the total number of disagreements; the Pivot algorithm is a popular, combinatorial algorithm in this setting \cite{ACN-pivot}.
The case where $p = \infty$ minimizes the maximum number of disagreements at any vertex, which captures a nice notion of fairness in the clustering.

Our results focus on the $\ell_\infty$-norm objective, which we will refer to as the \textit{Min Max} objective\footnote{Other names for this are the minimax or $\ell_\infty$ objective.}.  
Min Max correlation clustering was first motivated by applications in community detection that are antagonist free,  i.e., there are no members that are largely inconsistent in their community \cite{PM16}. Such problems arise in recommender systems, bioinformatics, and social sciences \cite{cheng2000biclustering, kriegel2009clustering, symeonidis2008nearest, PM16}.

Milenkovic and Puleo \citeyearpar{PM16} initiated the study of Min Max correlation clustering, as well as other $\ell_p$-norms. For all $p \geq 1$, they give a $48$-approximation. Charikar, Gupta, and Schwartz \citeyearpar{CGS17} improved this to a $7$-approximation, and Kalhan, Makarychev, and Zhou \citeyearpar{KMZ19} further improved this to the best known approximation\footnote{Note all of these results are still only for complete graphs.} of 5. 
Milenkovic and Puleo \citeyearpar{PM16} observed that Pivot fails for the $\ell_{\infty}$ norm, even though it gives a 3-approximation in expectation for the $\ell_1$ norm (see the discussion in Appendix A).
See more on correlation clustering in Section~\ref{sec:related}, 
including the interest on fair/local objectives like Min Max.

Previous algorithms on Min Max correlation clustering rounded SDP or LP solutions 
\cite{PM16, CGS17, Khuller2019min, KMZ19}.
The relaxations are large, requiring at least $|V|^3$ constraints and $|V|^2$ variables.  Thus solving them is not practical on even modest sized graphs (e.g. 300-500 vertices). The question looms: does there exist a fast algorithm for Min Max correlation clustering with strong theoretical guarantees?  Moreover, an intriguing direction both towards this question and in its own right is to develop combinatorial algorithms. A key challenge is that it is not clear how to compare to the optimal solution without the LP or SDP, as there are no
non-obvious lower bounds known.

\subsection{Our contributions}

The paper provides fast algorithms for Min Max correlation clustering. 
Our algorithms are the first combinatorial algorithms for the problem with theoretical guarantees.  Our key technical insight that enables these algorithms is the introduction of a semi-metric on the set of vertices, which we call the \textit{\dn}. 
The \dn~can be used to set variables in the problem's linear program so that the resulting solution is feasible and provably close to the optimal number of disagreements for the Min Max objective. Thus, we use the LP to compare to the optimal like prior work, but we do not need to solve it since we only use it in the analysis. Our \dn~gives new insights into both correlation clustering and the linear program.

Let $\omega$ denote the constant for 
the $O(n^{\omega})$ run-time of matrix multiplication on an $n \times n$ matrix, 
where $n = |V|$.  

\begin{theorem} \label{thm: apx_thm}
There is an algorithm that obtains a 40-approximation for Min Max
correlation clustering on complete graphs in time $O(n^{\omega})$.
\end{theorem}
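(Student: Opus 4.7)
The plan is to introduce a semi-metric $d$ on $V$ --- the \dn\ --- that is small on pairs with nearly identical positive neighborhoods and large on pairs whose positive neighborhoods differ substantially. A natural candidate is some normalization of the symmetric difference $|N^+(u) \triangle N^+(v)|$, for instance a Jaccard-style $|N^+(u) \triangle N^+(v)|/|N^+(u) \cup N^+(v)|$, or a variant normalized by a suitable function of $|N^+(u)|$ and $|N^+(v)|$. Given the pairwise values of $d$, the algorithm is a thresholded pivot/ball-growing procedure: repeatedly select an unclustered pivot $u$ and place in one cluster $u$ together with all unclustered $v$ satisfying $d(u,v) \le \tau$, where $\tau$ is a carefully chosen constant; then remove that cluster and iterate.

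For the $O(n^{\omega})$ running time, the key observation is that $|N^+(u) \cap N^+(v)|$ for all pairs $(u,v)$ is simply the entry $(u,v)$ of $A A^{\top}$, where $A$ is the $\{0,1\}$ positive-adjacency matrix. This single matrix multiplication yields every pairwise intersection; combined with the degrees $|N^+(u)|$ (computable in $O(n^2)$ time), we recover every symmetric difference and hence every $d(u,v)$ in $O(n^\omega)$ time. The pivot/ball-growing loop itself performs only $O(n^2)$ additional distance lookups.

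For the $40$-approximation, the plan is to use the standard LP relaxation of min-max correlation clustering purely as an analytical tool. That LP has distance variables $x_{uv} \in [0,1]$ subject to triangle inequalities, and a single variable $r$ upper-bounding each vertex's fractional disagreement $\sum_{v \in N^+(u)} x_{uv} + \sum_{v \notin N^+(u)} (1-x_{uv})$. The strategy is threefold: first, verify that $d$ (perhaps after a mild rescaling) satisfies the triangle inequality, so that setting $x_{uv} := d(u,v)$ yields an LP-feasible point; second, show that the objective of this constructed point is at most a constant times $\mathrm{OPT}$, using the fact that if $y^\ast$ denotes the optimal min-max disagreement then any pair of $\mathrm{OPT}$ cluster-mates $u,v$ satisfies $|N^+(u) \triangle N^+(v)| \le y(u) + y(v) \le 2y^\ast$, which after normalization makes $d$ small inside $\mathrm{OPT}$ clusters; and third, charge the algorithm's per-vertex cost to the induced LP disagreement via a ball-growing argument, since every positive edge our algorithm cuts at a vertex $u$ goes to some $v$ whose distance from the pivot exceeds $\tau$, and every negative edge it retains lies inside the $\tau$-ball.

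The main obstacle is step two: obtaining a clean quantitative link between $d$ and $y^\ast$ while retaining the triangle inequality needed for LP feasibility. The normalization in the definition of $d$ must be chosen simultaneously so that $d$ remains a genuine semi-metric on $V$ and so that the bound $|N^+(u) \triangle N^+(v)| \le 2 y^\ast$ for intra-cluster pairs translates into a correspondingly small value of $d(u,v)$, with constants that do not blow up the final ratio. Together with the choice of ball radius $\tau$ in step three --- trading off positive edges cut across cluster boundaries against negative edges retained inside --- these constants accumulate to yield the factor of $40$.
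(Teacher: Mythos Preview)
Your high-level plan matches the paper almost exactly: the Jaccard-style distance $|N^+(u)\triangle N^+(v)|/|N^+(u)\cup N^+(v)|$ is precisely the paper's \dn\ (once one adds positive self-loops and rewrites $|N^+(u)\cup N^+(v)|=n-|N_u^-\cap N_v^-|$), the $O(n^\omega)$ runtime via one squaring of the positive adjacency matrix is the same, and the three-step analysis (triangle inequality $\Rightarrow$ LP feasibility; fractional cost $\le c\cdot\textsf{OPT}$; constant-factor rounding) is the paper's structure, with $8\times 5=40$.

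There are, however, two places where your sketch does not yet contain the idea that makes the argument go through.

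\emph{First, the fractional-cost bound.} Your proposed justification for step two---that OPT cluster-mates $u,v$ satisfy $|N^+(u)\triangle N^+(v)|\le y(u)+y(v)$---handles only the positive-edge part $\sum_{v\in N_u^+} d_{uv}$, and indeed gives the paper's bound of $3\cdot\textsf{OPT}$ on that sum by the averaging you describe. The genuinely hard term is $\sum_{v\in N_u^-}(1-d_{uv})$, and specifically the contribution from $v\in N_u^-$ lying in a \emph{different} OPT cluster from $u$. Here $(u,v)$ is not a disagreement for OPT, your cluster-mate inequality says nothing, and there is no local reason for $1-d_{uv}=|N_u^+\cap N_v^+|/|N_u^+\cup N_v^+|$ to be small for any individual such $v$. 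The paper controls this sum by ``flipping'' it to range over $w\in N_u^+\cap N_v^+$ and then charging each $w$ to disagreements at a carefully chosen proxy vertex $v^*(w)\in C(w)$---a non-local argument that you have not anticipated. Your diagnosis that the obstacle is ``choosing the normalization so that intra-cluster $d$ is small while keeping the triangle inequality'' misses this; the intra-cluster bound is the easy half.

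\emph{Second, the rounding.} An arbitrary (or random) pivot with a fixed threshold $\tau$ does not give a per-vertex constant-factor guarantee against the LP value; the paper explicitly notes that Pivot fails for the $\ell_\infty$ objective. What the paper uses is the KMZ rounding, which at each step selects the center $u$ maximizing $L_t(u)=\sum_{v\in\textsf{Ball}(u,r)\cap V_t}(r-d_{uv})$ and then cuts the ball of radius $2r$. The choice of center, not just the threshold, is what drives the factor-$5$ analysis; your ball-growing sketch would need to incorporate this.
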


The run-time can be substantially improved when the positive edges of the graph form a sparse graph, which is commonly the case in practice. 

\begin{corollary}\label{cor: apx-sparse}
Suppose all vertices in $V$ have $+$ degree at most $\Delta$,
and for each vertex we are given a list of its positive neighbors.
There is an algorithm that obtains a 40-approximation for 
Min Max correlation clustering on complete graphs in time $O(n \Delta^2 \log n)$.
\end{corollary}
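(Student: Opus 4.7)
The $O(n^{\omega})$ cost in Theorem~\ref{thm: apx_thm} is presumably driven by computing the correlation metric between all $\binom{n}{2}$ pairs of vertices, which reduces naturally to a matrix product (or pairwise intersection query) on the $+$-adjacency matrix. The plan for the sparse case is to avoid this dense step: since the correlation metric is built from positive neighborhoods, two vertices with disjoint $+$-neighborhoods take the default (maximum) value of the semi-metric and need not be materialized. Only pairs sharing at least one common $+$-neighbor, or pairs joined directly by a $+$-edge, can have a metric value strictly below that default, and there are at most $O(n\Delta^2)$ such pairs when every $+$-degree is bounded by $\Delta$.

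The first step is to enumerate these candidate pairs. For each vertex $w$, iterate over all unordered pairs $\{u,v\} \subseteq N^+(w)$ and append $\{u,v\}$ to a candidate list; this costs $O(\Delta^2)$ per vertex, hence $O(n\Delta^2)$ in total, and the list has length $O(n\Delta^2)$. For each candidate pair, I would compute the correlation metric in $O(\Delta)$ time by linearly merging the two sorted positive neighbor lists supplied with the input, obtaining exact values precisely where it matters. With the metric available on the candidate pairs, the clustering routine from Theorem~\ref{thm: apx_thm} can then be simulated: it selects cluster centers and gathers all vertices within a fixed radius in the correlation metric, and because non-candidate pairs exceed this radius by construction, the algorithm never queries them. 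Each ball is then built in time proportional to the relevant portion of the candidate list, and a priority queue (or a sort) over candidate pairs by metric value contributes the $\log n$ factor.

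The main obstacle I anticipate is verifying that the full pipeline used to establish Theorem~\ref{thm: apx_thm}---preprocessing, scaling, and rounding---depends only on comparisons of metric values against fixed thresholds in the low-distance regime, rather than on exact values at pairs with disjoint positive neighborhoods. Once this is checked, the sparse implementation is operationally equivalent to the dense one, so the 40-approximation analysis transfers verbatim and the overall running time collapses to $O(n\Delta^2 \log n)$, as claimed.
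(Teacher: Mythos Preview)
Your high-level strategy matches the paper's: compute $d_{uv}$ only for pairs with a common positive neighbor (equivalently, $v$ in the 2-hop neighborhood $N^2(u)$), treat all other pairs as having $d_{uv}=1$, and observe that the rounding algorithm never needs those pairs because its radii are strictly below $1$. The approximation guarantee then carries over unchanged.

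However, your running-time accounting has a concrete gap. You enumerate $O(n\Delta^2)$ candidate pairs and then propose to compute each $d_{uv}$ by an $O(\Delta)$ merge of the two positive-neighbor lists. Even after deduplication there can be $\Theta(n\Delta^2)$ distinct candidate pairs, so this step costs $\Theta(n\Delta^3)$ in the worst case, overshooting the target of $O(n\Delta^2\log n)$. The paper avoids this by noticing that your own enumeration already computes the intersection sizes: the multiplicity of $\{u,v\}$ in your candidate list is exactly $|N_u^+\cap N_v^+|$, so aggregating counts during the enumeration gives all needed intersections in $O(n\Delta^2)$ total, and each $d_{uv}$ then follows in $O(1)$ from $|N_u^+\cap N_v^+|$, $|N_u^+|$, and $|N_v^+|$.

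A second, smaller issue is the source of the $\log n$ factor. Sorting candidate pairs by metric value does not by itself implement the rounding routine, which at each step must find $\arg\max_{u\in V_t} L_t(u)$ and then update $L_t(\cdot)$ after a cluster is removed. The paper maintains the $L_t$ values in a binary heap keyed by vertex: finding the max is $O(1)$, deleting the $\leq \Delta^2$ vertices in a new cluster is $O(\Delta^2\log n)$, and each removed vertex $v$ triggers at most $|N^2(v)|\leq \Delta^2$ decrease-key operations, for $O(n\Delta^2\log n)$ overall. Your proposal should spell out this data structure rather than gesture at a sort over pairs.
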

Thus, we have a near-linear time algorithm for sparse graphs.  
This substantially improves on prior work, 
which (to the best of our knowledge) has run-time no better than $O(n^{2 \omega})$ even on sparse graphs
(see Appendix \ref{sec: lp_rounding_alg}).

Next, we show that we can approximate the \dn~to
improve the run-time, even when the graph is dense, 
though at some loss in the approximation factor.

\begin{theorem} \label{thm: apx_sampling}
Fix $\eps > 0$ sufficiently small.
For some constant $c(\eps)$ depending only on $\eps$, there is a randomized algorithm that obtains a $c(\eps)$-approximation for Min Max correlation clustering on complete graphs with probability $1-O(1/n)$ in time $O(n^2\log n/\eps^2)$. 
\end{theorem}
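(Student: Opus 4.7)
The plan is to show that the $O(n^{\omega})$ bottleneck in Theorem \ref{thm: apx_thm} is exactly the exact computation of all pairwise \dn~values (which presumably reduces to a matrix product involving the $\pm$ adjacency information), and then to replace this exact computation with a sampling-based estimator computable in $O(n^2 \log n / \eps^2)$ time. The two things we need are (i) a concentration argument that the estimator is accurate for every pair with probability $1 - O(1/n)$, and (ii) a stability argument that the algorithm of Theorem \ref{thm: apx_thm} still produces a constant-factor approximation when the \dn~is only known up to additive error $\eps$.

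First, I would write the \dn~between $u$ and $v$ as an average (or normalized sum) over the remaining vertices $w \in V$ of a $\{0,1\}$-valued or bounded indicator that depends only on the signs of the edges $uw$ and $vw$; this is the natural form in which the metric plugs into an $n \times n$ matrix multiplication to obtain all pairs in $O(n^\omega)$ time. Given this form, a standard sampling scheme applies: draw an i.i.d.\ sample $S \subseteq V$ of size $k = \Theta(\log n / \eps^2)$ and estimate the \dn~on $(u,v)$ by the empirical average over $w \in S$. For fixed $(u,v)$, Hoeffding's inequality gives additive error at most $\eps$ with probability $1 - n^{-c}$ for any desired constant $c$; a union bound over the $\binom{n}{2}$ pairs yields accuracy $\eps$ simultaneously for all pairs with probability $1 - O(1/n)$. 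Computing the estimate for each pair costs $O(k)$ lookups, so all $O(n^2)$ estimates cost $O(n^2 \log n / \eps^2)$, matching the stated run-time.

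Next, I would run the algorithm underlying Theorem \ref{thm: apx_thm} verbatim, but using these estimated \dn~values in place of the true ones. The algorithm uses the \dn~by comparing it to fixed thresholds to decide which pairs are ``close'' and which are ``far''; an additive error of $\eps$ simply shifts the effective threshold by $\eps$, which in turn inflates the LP-based comparison to the optimum by a factor depending only on $\eps$. I would formalize this by re-running the Theorem \ref{thm: apx_thm} analysis with the threshold parameters nudged by $\eps$, and checking that each constant that multiplied the LP lower bound is replaced by a constant $c(\eps)$ (possibly tending to $\infty$ as $\eps \to 0$ from the wrong side, but bounded for any fixed sufficiently small $\eps$). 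This yields a $c(\eps)$-approximation on the event that every estimate is accurate.

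The main obstacle I expect is the stability step: one has to be careful that pairs whose true \dn~lies within $\eps$ of a decision threshold do not cause a cluster to grow by too much (or shrink by too much) relative to the LP-based charging argument. I would handle this by verifying that the ``ball'' or ``closeness'' structure used in the Theorem \ref{thm: apx_thm} analysis is used only through slack inequalities, so that a uniform $\eps$-shift in the threshold is absorbed into a larger, but still constant, multiplicative slack; that is the essential reason one gets a constant $c(\eps)$ rather than a degraded dependence on $n$.
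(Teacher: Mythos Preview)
Your proposal has two genuine gaps, both of which the paper explicitly flags as obstacles that force a more careful argument.

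First, the \dn\ is \emph{not} an average over $w\in V$ of a bounded indicator. By (\ref{eq: equiv_def}) it is the Jaccard distance of the positive neighborhoods,
\[
d_{uv}=\frac{|N_u^+\triangle N_v^+|}{|N_u^+\cup N_v^+|},
\]
a ratio whose denominator can be $o(n)$. A global uniform sample $S\subseteq V$ of size $\Theta(\log n/\eps^2)$ lets you estimate $|N_u^+\cap N_v^+|/n$ and $|N_u^+\cup N_v^+|/n$ to additive $\eps$, but when $|N_u^+\cup N_v^+|=o(n)$ this tells you nothing about $d_{uv}$. The paper instead samples $\Theta(\log n/\eps^2)$ vertices \emph{from $N_u^+$} for each $u$, and uses that at least one of $|N_u^+\cap N_v^+|$, $|N_u^+\cap N_v^-|$ is $\geq \tfrac12|N_u^+|$, so at least one estimate is multiplicatively concentrated (Propositions \ref{prop: W_well_conc}--\ref{prop: Y_well_conc}). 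This is what yields $\bar d_{uv}\leq c_1(\eps)d_{uv}+h_1(\eps)$ and the reverse inequality; a global sample cannot.

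Second, even granting additive error $\eps$ in every $d_{uv}$, your stability step only addresses the rounding thresholds and not the \emph{fractional cost}. The quantity that must be compared to \textsf{OPT} is $\sum_{v\in N_u^+}\tilde d_{uv}+\sum_{v\in N_u^-}(1-\tilde d_{uv})$; an additive $\eps$ per term contributes $\eps n$, which can dwarf \textsf{OPT} (think of a perfect clustering, where \textsf{OPT}$=0$). The paper deals with this via a non-obvious post-processing step (Section \ref{sec: post_processed_estimates}): on positive edges with $\bar d_{uv}\leq 2h_3(\eps)$ it rounds down to $0$, on negative edges with $\bar d_{uv}\geq 1/(2c_3(\eps)+1)$ it rounds up to $1$, and then proves \emph{pointwise} that $\widehat{\tilde d}_{uv}\leq D(\eps)\,\widehat d_{uv}$ (Proposition \ref{prop: pseudo_vs_opt}). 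This post-processing breaks the exact triangle inequality, so the paper must also redo the entire KMZ rounding analysis under a $(\delta_1,\delta_2)$-approximate triangle inequality with new radii $r,b,c_1,c_2$ (Lemma \ref{thm: constant_rounding_estimates} and Appendix \ref{appendix: approx_tri_requirements}). Your sketch of ``nudging thresholds by $\eps$'' does not cover either of these steps.
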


Finally, we show that our theory is predictive of practice (see Section \ref{sec: experiments}). Our algorithm performs similarly in terms of objective value to the algorithm of Kalhan, Makarychev, and Zhou \citeyearpar{KMZ19}, while improving the runtime so substantially that it can feasibly scale to graphs with about 10,000 vertices\footnote{For comparison, the algorithm by Kalhan, Makarychev, and Zhou takes more than 10 minutes on graphs with $\approx$ 200 vertices.}. Moreover, the clusters found by our algorithm are often meaningful in that they partially discover ``ground truth" clusters in real-world and synthetic instances.

\subsection{Related work}
\label{sec:related}

In correlation clustering, the most commonly studied objective is minimizing the
total number of edges in disagreement,
which is equivalent to minimizing the $\ell_1$ norm of the disagreement vector 
$y \in \mathbb{R}_{\geq 0}^{|E|}$.
Note that minimizing the $\ell_1$ norm can be studied when the edges are weighted.
The model was introduced by Bansal, Blum, and Chawla \citeyearpar{BBC04} and
is motivated by many applications such as image segmentation, 
natural language processing, clustering gene expression patterns, and location area planning \cite{Wirth17, mccallum2004conditional, ben1999clustering, demaine2003correlation}.
The celebrated Pivot algorithm of Ailon, Charikar, and Newman \citeyearpar{ACN-pivot}
obtains a 3-approximation in expectation. 
Until recently, the best approximation algorithm for $\ell_1$
correlation clustering on complete, unweighted graphs was a $(2.06+\epsilon)$-approximation due to Chawla, Makarychev, Schramm, and Yaroslavtsev
 \citeyearpar{chawla2015near}, 
but the threshold of 2 was broken 
(they achieved a ($1.994+\epsilon$)-approximation)
in the work of Cohen-Addad, Lee, and Newman \citeyearpar{cohen2022correlation},
who successfully took advantage of $O(1/\epsilon^2)$ rounds of the Sherali-Adams hierarchy.

Min Max correlation clustering was introduced 
by Puleo and Milenkovic \citeyearpar{PM16}.
They show correlation clustering for the Min Max objective is
\textsf{NP}-hard even on complete graphs (see Appendix C in their paper), and algorithmically, they obtain 48-approximation 
algorithms for complete graphs and complete bipartite graphs.
Shortly after, Charikar, Gupta, and Schwartz \citeyearpar{CGS17} gave a 7-approximation 
for minimizing the $\ell_p$ norm on the same graphs. 
For the Min Max objective, i.e. $p = \infty,$
they obtain a $O(\sqrt{n})$-approximation algorithm
on general, weighted graphs.
Kalhan, Makarychev, and Zhou \citeyearpar{KMZ19} further generalize the known results for 
correlation clustering with local objectives by showing approximation algorithms
that minimize the $\ell_p$ norm on general, weighted graphs.
This work also shows a 5-approximation 
for the $\ell_{p}$ (including $p = \infty$) objective on complete graphs and on complete bipartite graphs.

In all of the previous papers that study any $\ell_p$ norm objective 
for correlation clustering, 
the run-time relies on solving an LP with at least $\Omega(n^2)$ many variables and $\Omega(n^3)$ constraints \cite{PM16, CGS17, KMZ19}.
We see no clear way to use the structure of the LP to guarantee solving 
it would take time less than $O(n^{2 \omega})$, even on sparse graphs 
(see the discussion in Appendix \ref{sec: lp_rounding_alg}).

Several other objectives that are local or capture 
some notion of fairness have also been studied \cite{ahmadian2020fair, bateni2022scalable, ahmadi2020fair, friggstad2021fair, jafarov2021local, Khuller2019min}.
For instance, Ahmadi, Khuller, and Saha \citeyearpar{Khuller2019min} seek 
to find a clustering where for every point, 
the average distance to the points in its own cluster is
no more than the average distance to those in another cluster.
Their algorithm rounds the solution to an SDP, 
which was an idea based on work on min-max $k$ balanced partitioning.
Further, Jafarov et al. \citeyearpar{jafarov2021local} studied the $\ell_p$ objective, 
but additionally make assumptions on the weights based on whether 
edges are similar.

\section{Notation and Preliminaries}
Let $G=(V,E)$ be the complete graph on $n$ vertices with 
self-loops\footnote{One could avoid the self-loops by slightly changing the definition of neighborhood intersections that we use later. 
However, the notation and presentation are cleaner in just assuming positive self-loops.},
where each edge has a positive $(+)$ or negative $(-)$ label. 
Let $E^+$ be the set of positive edges of $G$ and let $E^-$ be the set of negative edges. 
For vertex $u \in V$, define $N_u^- = \{v \in V \mid (u,v) \in E^- \}$
to be the \textit{negative neighborhood} of $u$, and 
$N_u^+ = \{v \in V \mid (u,v) \in E^+ \}$
to be the \textit{positive neighborhood} of $u$. 

We say an edge $e=(u,v) \in E$ is in \textit{disagreement} according to a clustering if $e \in E^+$ and $u$ and $v$ are in different clusters or 
if $e \in E^-$ and $u$ and $v$ are in the same cluster. 
For ease in calculations, we assume that every vertex has a positive self-loop, i.e. $(v,v) \in E^+$ for all $v \in V$.
Note that doing so will not change the set of disagreeing edges;
since a vertex is always in the same cluster as itself, self-loops can never be disagreements. 
The following facts hold for $G$:

\begin{fact} \label{sum_of_nbhds}
For any $u \in V$, 
\[n= |N_u^+| + |N_u^-|. \]
\end{fact}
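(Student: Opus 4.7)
The statement is a straightforward counting identity that follows directly from the standing assumptions about $G$, so the plan is simply to argue that $N_u^+$ and $N_u^-$ partition $V$ and then take cardinalities.

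First I would observe that since $G$ is complete on $n$ vertices and, by convention, carries a positive self-loop at every vertex, for each $v \in V$ the pair $(u,v)$ is an edge of $G$. By definition every edge is labeled either $+$ or $-$, and the labels are mutually exclusive, so exactly one of $v \in N_u^+$ or $v \in N_u^-$ holds. Hence $V = N_u^+ \sqcup N_u^-$, a disjoint union.

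Next I would take cardinalities of both sides: $|V| = |N_u^+| + |N_u^-|$, and since $|V| = n$ the claimed identity $n = |N_u^+| + |N_u^-|$ follows. There is essentially no obstacle here; the only point to be careful about is the self-loop convention, which ensures $u \in N_u^+$ and thus that the count on the right correctly includes $u$ itself (consistent with counting $u$ on the left as one of the $n$ vertices of $V$). If one instead excluded self-loops, one would have to replace $n$ by $n-1$, which is the reason the paper fixes the self-loop convention in the preceding paragraph.
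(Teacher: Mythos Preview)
Your argument is correct and is exactly the reasoning underlying the statement; the paper itself does not give a separate proof of this fact, treating it as immediate from the completeness of $G$ and the self-loop convention. Your care about the self-loop is well placed and matches the paper's remark that self-loops are assumed precisely to make such counts clean.
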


\begin{fact} \label{pairwise_intersections}
Let $u,v \in V$, possibly with $u=v$. Then
\[n = |N_u^+ \cap N_v^+| + |N_u^- \cap N_v^-| + |N_u^+ \cap N_v^-| + |N_u^- \cap N_v^+|.\]
\end{fact}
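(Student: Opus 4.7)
The plan is to observe that both Fact~\ref{sum_of_nbhds} and Fact~\ref{pairwise_intersections} follow immediately from the completeness of $G$ together with the convention that every vertex carries a positive self-loop. Specifically, for any $u \in V$, every vertex $w \in V$ (including $w=u$) is adjacent to $u$ by exactly one edge, and that edge carries a unique label in $\{+,-\}$. Therefore $N_u^+$ and $N_u^-$ form a partition of $V$, which is precisely Fact~\ref{sum_of_nbhds}.

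To establish Fact~\ref{pairwise_intersections}, I would iterate this observation with respect to both endpoints. Fix $u,v \in V$ (possibly $u=v$). Each $w \in V$ lies in exactly one of $N_u^+, N_u^-$ by the argument above, and independently in exactly one of $N_v^+, N_v^-$. Consequently, the four sets
\[
N_u^+ \cap N_v^+,\quad N_u^- \cap N_v^-,\quad N_u^+ \cap N_v^-,\quad N_u^- \cap N_v^+
\]
are pairwise disjoint and their union equals $V$. Summing the cardinalities gives the claimed identity $n = |N_u^+ \cap N_v^+| + |N_u^- \cap N_v^-| + |N_u^+ \cap N_v^-| + |N_u^- \cap N_v^+|$.

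There is no real obstacle in this argument; the only subtlety worth flagging is the case $u=v$, where two of the four cross-intersections vanish (namely $N_u^+ \cap N_u^-$ and $N_u^- \cap N_u^+$), and the identity degenerates to $n = |N_u^+| + |N_u^-|$, which is consistent with Fact~\ref{sum_of_nbhds}. The self-loop convention is what makes the partition statement clean: without it one would have to separately track whether $u \in N_u^+$ or $u \in N_u^-$, and the identity would pick up off-by-one corrections when $u=v$ or when $w \in \{u,v\}$.
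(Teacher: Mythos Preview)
Your proof is correct and essentially the same as the paper's. The paper phrases it as decomposing $N_u^+ = (N_u^+ \cap N_v^+) \cup (N_u^+ \cap N_v^-)$ and $N_u^- = (N_u^- \cap N_v^+) \cup (N_u^- \cap N_v^-)$ and then substituting into Fact~\ref{sum_of_nbhds}, whereas you argue directly that the four intersections partition $V$; these are the same observation written two ways.
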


\begin{proof}[Proof of Fact \ref{pairwise_intersections}]
Observe that  $N_u^+ = (N_u^+ \cap N_v^+) \cup (N_u^+ \cap N_v^-)$ and $N_u^- = (N_u^- \cap N_v^+) \cup (N_u^- \cap N_v^-)$. 
Now substituting into Fact \ref{sum_of_nbhds} gives Fact \ref{pairwise_intersections}.
\end{proof}

Let $C = (C_1, \dots, C_k)$ denote a partition of $V$ into $k$ clusters. We say $C(u) = C_i$ if $u$ is in cluster $C_i$. Let $\overline{C(u)}$ be the set of all vertices in a different cluster from $u$. For a given clustering $C$, let $y_C \in \mathbb{Z}_+^n$ be the \textit{disagreement vector} of $C$, indexed by $V$; for $u \in V$, the coordinate $y_C(u)$ is the number of edges incident to $u$ that are disagreements in $C$. We will drop the subscript $C$ when the clustering is clear from context.

\paragraph{The correlation metric} 
We introduce a novel semi-metric
based on the positive and negative neighborhoods of vertices.  
We will prove that the correlation metric, $d$, satisfies the triangle inequality, cementing that it is a semi-metric.  

\begin{definition}
For all $u,v \in V$, the distance between $u$ and $v$ with respect to the \textit{\dn}~is
\begin{align}
 \label{eq: equiv_def}
d_{uv} &= 1-\frac{|N_u^+ \cap N_v^+|}{n -|N_u^- \cap N_v^-|} \notag \\
&= 1 - \frac{|N_u^+ \cap N_v^+|}{|N_u^+ \cap N_v^+| + |N_u^+ \cap N_v^-| + |N_u^- \cap N_v^+|}
\end{align}
\end{definition}
where (\ref{eq: equiv_def}) follows from Fact \ref{pairwise_intersections}. Observe that $d_{uv}$ is well-defined as $u,v$ have positive self-loops, so $0 \leq d_{uv} \leq 1$. 

\smallskip

We give some intuition on the definition of $d_{uv}$.
See Figure \ref{fig: duv}.
Fix three distinct vertices $u,v,w \in V$.
We will examine how the clustering of $u$ and $v$ affects the disagreements incident to them.
First, note that if $w \in N_u^- \cap N_v^-$, 
as long as $w$ is assigned to a different cluster 
than both $u$ and $v$, the edges $(u,w),(v,w)$ are not in disagreement,
regardless of whether or not $u$ or $v$ are in the same cluster. 
I.e., $w$ should not impact whether $u$ and $v$ are in the same cluster. 
If we do not subtract off $|N_u^- \cap N_v^-|$ in the normalization, we will not correctly identify perfect clusterings, whereas here, $d_{uv} = \mathds{1}_{\{C(u) \neq C(v)\}}$ when $C$ is a perfect clustering.

\begin{figure}
    \centering
\includegraphics[width = 14cm]{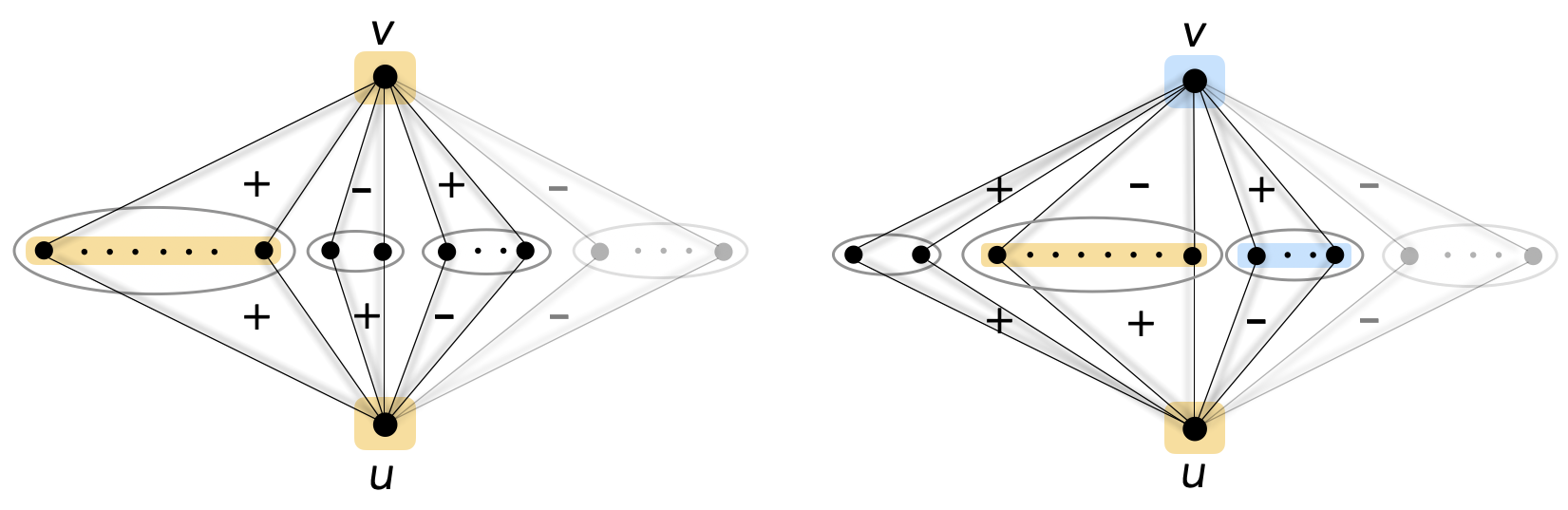}
    \caption{\textbf{Left:} $|N_u^+ \cap N_v^+|$ is significantly larger than $|(N_u^- \cap N_v^+) \cup (N_u^+ \cap N_v^-)|$. If $v$ and $u$ are in the same cluster, edges $(u,w)$ and $(w,v)$ do not form disagreements for all $w \in N_u^+ \cap N_v^+$.
    \textbf{Right:} $|(N_u^- \cap N_v^+) \cup (N_u^+ \cap N_v^-)|$ is significantly larger than $|N_u^+ \cap N_v^+|$. If $v$ and $u$ are in different clusters (yellow and blue), edges $(u,w)$ and $(w,v)$ do not form disagreements for all $w \in (N_u^- \cap N_v^+) \cup (N_u^+ \cap N_v^-)$.}
    \label{fig: duv}
\end{figure}

On the other hand, if $w \not \in N_u^- \cap N_v^-$, 
then whether or not $u$ and $v$ are in the same cluster
directly impacts whether either of $(u,w)$ or $(v,w)$ are in disagreement. 
By the proof of Fact \ref{pairwise_intersections}, the edges that are affected are $(u,w)$
and $(v,w)$ for
$w \in (N_u^+ \cap N_v^+) \cup (N_u^- \cap N_v^+) \cup (N_u^+ \cap N_v^-)$.
If $u$ and $v$ are in different clusters,
then for $w \in (N_u^- \cap N_v^+) \cup (N_u^+ \cap N_v^-)$,
both $(u,w)$ and $(v,w)$ may still be in agreement. 
However, if $w \in N_u^+ \cap N_v^+$, 
then at least one of $(u,w)$ and $(v,w)$ 
are in disagreement, 
so the number of disagreements incident to $u$ and $v$ (in total)
is at least $|N_u^+ \cap N_v^+|$.
An analogous point can be seen when $u$ and $v$ are in the same cluster, 
but now there are at least $|(N_u^- \cap N_v^+) \cup (N_u^+ \cap N_v^-)|$ disagreements 
incident to $u$ and $v$.
Overall, if $|N_u^+ \cap N_v^+|$ is large relative to 
$|(N_u^- \cap N_v^+) \cup (N_u^+ \cap N_v^-)|$ then (roughly speaking) there are fewer disagreements 
incident to $u$ and $v$ if they are assigned to the same cluster than if they were assigned to different clusters. 
The larger $|N_u^+ \cap N_v^+|$ is relative to 
$|(N_u^- \cap N_v^+) \cup (N_u^+ \cap N_v^-)|$, the closer $u$ and $v$ are with respect to the \dn, by (\ref{eq: equiv_def}).

Throughout the paper define $\widehat{d}_{uv} = d_{uv}$ if $(u,v) \in E^+$ and 
$\widehat{d}_{uv} = 1 - d_{uv}$ if $(u,v) \in E^-$.

\section{Technical Overviews}\label{sec: overviews}

\subsection{LP relaxation and KMZ algorithm}

We consider the standard LP relaxation for the problem.
It is not hard to see that the LP's constraints induce a semi-metric space on the vertices.  
In the algorithm of Kalhan, Makarychev, and Zhou, the LP is solved to get a fractional solution, and then an iterative rounding  procedure is used to get an integral feasible clustering. 
Intuitively, the LP semi-metric guides the solution, where vertices close together are more likely to end up in the same cluster. 
In fact, we will never place vertices sufficiently far apart in the same cluster.

The LP consists of variables $x_{uv}$.  In an integral solution, $x_{uv}=1$ indicates $u$ and $v$ are in different clusters and $x_{uv}=0$ indicates they are in the same cluster. The disagreement vector is $y$. 

\vspace{-1.2cm}
\begin{flushleft}
\begin{equation*}\label{KMZ_LP}
\end{equation*}
LP \ref{KMZ_LP}
\end{flushleft}
\vspace{-1cm}
\begin{align}
     &\min \max_{u \in V} y(u) \notag\\
    \textsf{s.t. } y_u &= \sum_{v \in N_u^+} x_{uv} + \sum_{v \in N_u^-} (1-x_{uv})  && \forall u \in V \notag\\
     x_{uv}&\leq x_{uw}+x_{vw}  &&\forall u,v,w \in V \notag\\
     0 &\leq x_{uv} \leq 1  &&\forall u,v \in V. \notag
\end{align}

The first set of constraints ensure that the disagreement vector $y$ is set according to the $x$ variables. The second constraints enforce the triangle inequality on $x$. 
They ensure that if $u$ and $w$ are in the same cluster and $v$ and $w$ are in the same cluster then $u$ and $v$ must be together.

We will refer to the algorithm by Kalhan, Makarychev, and Zhou 
as the KMZ algorithm. 
The \textit{KMZ algorithm} has two phases: 
first it solves LP \ref{KMZ_LP} above, 
and then it rounds the LP solution in the \textit{rounding algorithm} (Algorithm \ref{KMZ-alg}). 
We justify in Appendix \ref{sec: lp_rounding_alg} that 
the run-time of the rounding algorithm is $O(n^2)$, 
so the run-time of the KMZ algorithm is dominated by the time 
it takes to solve the LP, 
which to the best of our knowledge has run-time no better than $O(n^{2 \omega})$.

\subsection{Combinatorial algorithms}

We begin with an outline for the proof of Theorem \ref{thm: apx_thm}; 
details can be found in Section \ref{sec: apx-thm-pf}.
Our first algorithm consists of two steps.
First, we compute the correlation metric $d_{uv}$ for all $u,v \in V$, which produces a hand-crafted solution to LP \ref{KMZ_LP} by setting $x_{uv}=d_{uv}$.
Then, we use the $d_{uv}$ values as input to the rounding algorithm (Algorithm \ref{KMZ-alg}).
Our analysis to prove that this procedure gives a constant factor approximation has three key steps: 
 \begin{enumerate}
     \item The \dn~(perhaps surprisingly) satisfies the triangle inequality (see Section~\ref{sec: triangle_ineq}), i.e. $x=d$ is feasible for LP \ref{KMZ_LP}.
     \item The objective to LP \ref{KMZ_LP} from setting $x=d$ is at most a factor 8 more than an optimal integral solution (see Section~\ref{sec: bound_frac_cost}). This is the heart of Theorem \ref{thm: apx_thm}, and is tricky due to asymmetries in the fractional cost that force us to use non-local charging arguments. 
     \item The rounding algorithm (Algorithm \ref{KMZ-alg}) of Kalhan, Makarychev, and Zhou  \citeyearpar{KMZ19} can be used to round any feasible solution to LP \ref{KMZ_LP} to an integer solution, while losing a factor  of at most $5$ in the objective. 
 \end{enumerate}

Our run-time is dominated by the time to compute $d_{uv}$
for all $u,v \in V$, which can be done in time $O(n^{\omega})$.
For run-time proofs, see Section \ref{sec: complete-proof} 
and Appendix \ref{sec: lp_rounding_alg}.

For sparse graphs,
instead of computing the correlation metric
for all pairs $u,v \in V$,
we need only compute $d_{uv}$ when $d_{uv} < 1$ (there are at most $n\Delta^2$ such pairs). Other pairs are implicitly distance $1$.
Then we can again use the rounding algorithm (Algorithm \ref{KMZ-alg}).  
Given a feasible solution to LP \ref{KMZ_LP}, the rounding algorithm runs in time $O(n \Delta^2 \log n)$ for sparse graphs.
See Section \ref{sec: complete-proof} for the complete proof for Corollary \ref{cor: apx-sparse}.

Lastly, in Section \ref{sec: sampling} we show that we can estimate the $d_{uv}$ via sampling, giving an algorithm that trades the guarantee on the (still constant) approximation factor for a faster run-time, leading to Theorem \ref{thm: apx_sampling}. To compute the estimates, we sample $O(\log n)$ vertices from the positive neighborhood of each vertex. Then, we use the samples for $u$ and $v$ to obtain estimates of the various terms in (\ref{eq: equiv_def}). However, it is not clear a priori that sampling will work. First, not every term in (\ref{eq: equiv_def}) will be well-concentrated, so we will need to exploit the special structure of $d_{uv}$. Second, the fractional cost of the initial estimates will not be controlled. We will show a post-processing phase, in which we push some estimates down and others up, takes care of the issue. We use the post-processed estimates as input to the rounding algorithm.

\section{Properties of the Correlation Metric }\label{sec: apx-thm-pf}
This section is focused on proving Theorem \ref{thm: apx_thm}.  
It establishes structural properties of the \dn.  
We refer to $\max_{u \in V} \{ \sum_{v \in N^+_u} d_{uv} + \sum_{v \in N^-_u} (1-d_{uv})\}$ as the fractional cost, which is the objective value of the LP variables we set.  The fractional cost is bounded in Section \ref{sec: bound_frac_cost}.
Section \ref{sec: triangle_ineq} establishes that the \dn~is a semi-metric, and therefore is feasible for LP \ref{KMZ_LP}. 
Finally, we derive Theorem \ref{thm: apx_thm} in Section \ref{sec: complete-proof}.

\subsection{Bounding the fractional cost}\label{sec: bound_frac_cost}

We begin with a few propositions that will be helpful for bounding the fractional cost.

\begin{proposition} \label{disagreements_mixed}
For any vertex $u \in V$, let $v \in N_u^+ \cap C(u)$. Then 
\[|N_u^+ \cap N_v^-| + |N_u^- \cap N_v^+| \leq y(u) + y(v). \]
\end{proposition}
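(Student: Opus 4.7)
The plan is a charging argument that assigns to each vertex $w \in (N_u^+ \cap N_v^-) \cup (N_u^- \cap N_v^+)$ a distinct disagreement edge incident to either $u$ or $v$. First I would dispose of the degenerate case $u=v$: in that case both intersections are empty (no vertex can be simultaneously positively and negatively adjacent to the same vertex), so the inequality is trivial. So from now on assume $u \neq v$, and crucially use the hypothesis that $u$ and $v$ lie in the same cluster, i.e. $C(u) = C(v)$.

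Next I would observe that the two sets $N_u^+ \cap N_v^-$ and $N_u^- \cap N_v^+$ are disjoint (since a given $w$ has a single label to $u$), so it suffices to handle each $w$ in the union once. I would then do a short four-way case analysis on each $w$, splitting on which of the two intersections $w$ belongs to and on whether $w \in C(u)$ or $w \notin C(u)$:
\begin{itemize}
\item $w \in N_u^+ \cap N_v^-$, $w \in C(u)=C(v)$: the edge $(v,w)$ is negative inside a cluster, hence a disagreement charged to $y(v)$.
\item $w \in N_u^+ \cap N_v^-$, $w \notin C(u)$: the edge $(u,w)$ is positive across clusters, a disagreement charged to $y(u)$.
\item $w \in N_u^- \cap N_v^+$, $w \in C(u)=C(v)$: the edge $(u,w)$ is negative inside a cluster, a disagreement charged to $y(u)$.
\item $w \in N_u^- \cap N_v^+$, $w \notin C(u)$: the edge $(v,w)$ is positive across clusters, a disagreement charged to $y(v)$.
\end{itemize}

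Finally I would note that the charging is injective: each $w$ is charged either to edge $(u,w)$ or $(v,w)$, and because $u\neq v$, distinct choices of $w$ produce distinct edges. Summing the charges over the (disjoint) union therefore gives $|N_u^+ \cap N_v^-| + |N_u^- \cap N_v^+| \leq y(u) + y(v)$. The main obstacle is really just bookkeeping — verifying disjointness of the two intersections and injectivity of the charge — neither of which is deep; the inequality is essentially forced by the fact that $u,v$ share a cluster so any label discrepancy between them on a third vertex $w$ must manifest as a disagreement at $u$ or $v$.
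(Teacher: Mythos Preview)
Your argument is correct and follows essentially the same charging idea as the paper's proof: observe the two intersections are disjoint, and for each $w$ in their union use $C(u)=C(v)$ to exhibit a disagreement on exactly one of $(u,w),(v,w)$. The paper condenses your four cases into the single observation that ``exactly one of $(u,w)$ and $(v,w)$ is a disagreement,'' but the content is identical.
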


\begin{proof}[Proof of Proposition \ref{disagreements_mixed}]
Take $S = (N_u^+ \cap N_v^-) \cup (N_u^- \cap N_v^+)$.
Note that $N_u^+ \cap N_v^-$ and $N_u^- \cap N_v^+$ are disjoint,
so $|S| = |N_u^+ \cap N_v^-| + |N_u^- \cap N_v^+|$.
Fix $w \in S$. 
Since $u$ and $v$ are in the same cluster, 
exactly one of $(u,w)$ and $(v,w)$ is a disagreement, 
and thus contributes to one of $y(u)$ or $y(v)$.
\end{proof}

Our main technical result to prove Theorem \ref{thm: apx_thm} is Lemma \ref{lem: main_lem}, though to make the proof of the lemma
a bit more modular, 
we rely on Claims \ref{sum_plus} and \ref{sum_minus}.
Recall $\widehat{d}_{uv} = d_{uv}$ if $(u,v) \in E^+$ and 
$\widehat{d}_{uv} = 1 - d_{uv}$ if $(u,v) \in E^-$. 

\begin{lemma} \label{lem: main_lem}
Let $y$ be the disagreement vector for an optimal solution to the min-max objective. For every $u \in V$ and for \textsf{OPT}$=\max_{z \in V} y(z)$, 
\[\sum_{v \in V} \widehat{d}_{uv} \leq 8 \cdot  \textsf{OPT}. \]
\end{lemma}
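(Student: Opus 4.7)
The plan is to split $\sum_{v \in V} \widehat{d}_{uv}$ according to the status of the edge $(u,v)$ in the optimal clustering, producing four subsums indexed by (sign of $(u,v)$) $\times$ (same/different cluster): positive disagreements ($v \in N_u^+ \cap \overline{C(u)}$, contributing $d_{uv}$), negative disagreements ($v \in N_u^- \cap C(u)$, contributing $1-d_{uv}$), positive agreements ($v \in N_u^+ \cap C(u)$, contributing $d_{uv}$), and negative agreements ($v \in N_u^- \cap \overline{C(u)}$, contributing $1-d_{uv}$). The first two subsums together account for exactly the $y(u) \leq \textsf{OPT}$ disagreements at $u$, and since each $\widehat{d}_{uv} \leq 1$, they contribute at most $\textsf{OPT}$ in total.

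For the positive-agreement sum, let $M_+ = N_u^+ \cap C(u)$. The key observation is that for any $v \in M_+$ and $w \in M_+ \setminus \{v\}$, either $(v,w) \in E^+$ (so $w \in N_u^+ \cap N_v^+$) or $(v,w) \in E^-$ (a same-cluster negative edge, hence a disagreement at $v$), which gives $|N_u^+ \cap N_v^+| \geq |M_+| - 1 - y(v)$. Combined with Proposition \ref{disagreements_mixed} and the bound
\[
d_{uv} \le \frac{|N_u^+ \cap N_v^-| + |N_u^- \cap N_v^+|}{|N_u^+ \cap N_v^+|} \le \frac{2\,\textsf{OPT}}{|N_u^+ \cap N_v^+|},
\]
I would case-split on $|M_+|$: if $|M_+| = O(\textsf{OPT})$ the sum is trivially $O(\textsf{OPT})$ via $d_{uv} \le 1$, and otherwise $|N_u^+ \cap N_v^+| \ge |M_+|/2$, so summing over $v \in M_+$ yields $O(\textsf{OPT})$.

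The negative-agreement sum is the genuinely hard case. Mirroring the above cluster-by-cluster (using reservoirs $N_u^- \cap C_i$ for each $C_i \neq C(u)$) only produces $O(\textsf{OPT})$ per cluster, summing to a quantity that scales with the number of relevant clusters --- precisely the asymmetry the paper cautions about. I would instead decompose
\[
1 - d_{uv} = \frac{|N_u^+ \cap N_v^+ \cap C(u)| + |N_u^+ \cap N_v^+ \setminus C(u)|}{|N_u^+ \cup N_v^+|},
\]
noting that each $w \in N_u^+ \cap N_v^+$ witnesses a positive disagreement either at $u$ (when $w \notin C(u)$, through the edge $(u,w)$) or at $w$ (when $w \in C(u)$, through the edge $(v,w)$). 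Swapping the order of summation over $v$ and $w$ then globalizes the charging: the first contribution collapses into a sum indexed by the $\le y(u)$ positive disagreements at $u$, and the second into a sum over $w \in N_u^+ \cap C(u)$ in which the inner sum is controlled via the $\le y(w)$ disagreements at $w$. The main obstacle is bounding the weights $1/|N_u^+ \cup N_v^+|$ that appear inside the inner sums, and I expect this bookkeeping is encapsulated in the referenced Claims \ref{sum_plus} and \ref{sum_minus}. Combining the four case bounds then yields $\sum_v \widehat{d}_{uv} \le 8 \cdot \textsf{OPT}$.
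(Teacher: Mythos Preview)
Your decomposition into the four (sign) $\times$ (same/different cluster) cases and the ``flip the sum'' maneuver for the negative-agreement case is exactly the paper's strategy: the paper splits into $S^+$ and $S^-$ by edge sign, then each by cluster membership, and the hard sum $S$ in Claim~\ref{sum_minus} is rewritten as $\sum_{w \in N_u^+} \sum_{v \in V(w)} 1/(n - |N_u^- \cap N_v^-|)$ and partitioned by whether $w \in C(u)$, which is precisely your $|N_u^+ \cap N_v^+ \cap C(u)|$ versus $|N_u^+ \cap N_v^+ \setminus C(u)|$ split. Your treatment of the positive-agreement sum is a legitimate alternative to the paper's (which simply lower-bounds the denominator $n - |N_u^- \cap N_v^-|$ by $|N_u^+|$ and averages, avoiding any case split on $|M_+|$); yours is correct but yields a looser constant, so you would not land at exactly $8\,\textsf{OPT}$ without further care.

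The genuine gap is in the $w \notin C(u)$ branch of the negative-agreement sum (the paper's $S_2$). Saying this ``collapses into a sum indexed by the $\leq y(u)$ positive disagreements at $u$'' is only half the story: there are indeed at most $y(u)$ such $w$, but for each one the inner sum $S_2(w) = \sum_{v \in V(w)} 1/|N_u^+ \cup N_v^+|$ is \emph{not} bounded by anything obvious, and controlling it is the crux of the lemma. The paper needs a further split of $V(w)$ by whether $v \in C(w)$, and for the piece $v \in V(w) \cap C(w)$ introduces a carefully chosen pivot $v^*(w) = \arg\max_{v \in V(w) \cap C(w)} |N_u^- \cap N_v^-|$ to uniformize the denominators and then charges to $y(v^*(w))$; only then does one obtain $S_2(w) \leq 1 + 2\,\textsf{OPT}/|N_u^+|$, which after summing over the $\leq y(u)$ relevant $w$ gives $S_2 \leq 3\,\textsf{OPT}$. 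Your proposal contains no analogue of this step, and treating it as ``bookkeeping'' understates it --- it is the non-local charging argument the paper flags as the heart of the proof. Note also that Claims~\ref{sum_plus} and~\ref{sum_minus} are not auxiliary lemmas you can invoke: together they \emph{are} the proof of Lemma~\ref{lem: main_lem}, so deferring to them is circular.
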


\begin{proof}[Proof of Lemma \ref{lem: main_lem}]
Expanding the summation, we have:
\begin{align*}
\sum_{v \in V} \widehat{d}_{uv} &= \sum_{v \in N_u^+} d_{uv} + \sum_{v \in N_u^-} (1-d_{uv}) \\
&= \underbrace{\sum_{v \in N_u^+} \frac{|N_u^+ \cap N_v^-| + |N_u^- \cap N_v^+|}{n-|N_u^- \cap N_v^-|}}_{S^+} + \underbrace{\sum_{v \in N_u^-} \frac{|N_u^+ \cap N_v^+|}{n-|N_u^- \cap N_v^-|}}_{S^-}
\end{align*}
We use Fact \ref{pairwise_intersections} to rewrite the first summation. Let the first summation be denoted $S^+$ and the second summation $S^-$. To prove the lemma, it suffices to prove the following two claims.

\begin{claim} \label{sum_plus}
\[S^+ \leq 3 \cdot \textsf{OPT}. \]
\end{claim}

\begin{claim} \label{sum_minus}
\[S^- \leq 5 \cdot \textsf{OPT}. \]
\end{claim}
\end{proof}

The proof of Claim \ref{sum_plus} is much cleaner than the proof 
of Claim \ref{sum_minus}. 
This is in part due to asymmetries in the fractional cost. 
\begin{proof}[Proof of Claim \ref{sum_plus}]
Observe that by Fact \ref{pairwise_intersections}, for every $v \in V$, 
\[|N_u^+| = |N_u^+ \cap N_v^-| + |N_u^+ \cap N_v^+|  \leq n-|N_u^- \cap N_v^-|. \]
Now we can bound $S^+$ by partitioning the sum based 
one whether or not $u$ and $v$ are in the same cluster:
\begin{align}
    S^+ &= \sum_{v \in N_u^+ \cap C(u)} \frac{|N_u^+ \cap N_v^-| + |N_u^- \cap N_v^+|}{n-|N_u^- \cap N_v^-|} + \sum_{v \in N_u^+ \cap \overline{C(u)}} d_{uv} \notag \\
    &\leq \frac{1}{|N_u^+|} \cdot \sum_{v \in N_u^+ \cap C(u)}(|N_u^+ \cap N_v^-| + |N_u^- \cap N_v^+|) + \sum_{v \in N_u^+ \cap \overline{C(u)}} 1 \notag \\
    &\leq \frac{1}{|N_u^+|} \cdot \sum_{v \in N_u^+ \cap C(u)} (y(u) + y(v)) + \sum_{v \in N_u^+ \cap \overline{C(u)}} 1 \label{prop2} \\
    &\leq y(u) + \max_z y(z) + \sum_{v \in N_u^+ \cap \overline{C(u)}} 1 \label{avging_arg} \\
    &\leq  y(u) + \max_z y(z) + y(u) \label{easy_pos} \\
    &\leq 3 \cdot \textsf{OPT}. \notag
\end{align}
Line (\ref{prop2}) follows from Proposition \ref{disagreements_mixed}, line (\ref{avging_arg}) follows from an averaging argument (we sum over $|N_u^+ \cap C(u)|$ terms and then divide by $|N_u^+|$), and line (\ref{easy_pos}) follows from the fact that if $(u,v) \in E^+$ and $u,v$ are in different clusters, then $(u,v)$ is a disagreement incident to $u$. 
\end{proof}

\begin{proof}[Proof of Claim \ref{sum_minus}]
We have 
\begin{align*}
    S^- &= \sum_{v \in N_u^- \cap C(u)} (1-d_{uv}) + \sum_{v \in N_u^- \cap \overline{C(u)}} \frac{|N_u^+ \cap N_v^+|}{n-|N_u^- \cap N_v^-|} \\
    &\leq \sum_{v \in N_u^- \cap C(u)} 1 + \sum_{v \in N_u^- \cap \overline{C(u)}}  \frac{|N_u^+ \cap N_v^+|}{n-|N_u^- \cap N_v^-|} \\
    &\leq y(u) + \underbrace{\sum_{v \in N_u^- \cap \overline{C(u)}}  \frac{|N_u^+ \cap N_v^+|}{n-|N_u^- \cap N_v^-|}}_{S},
\end{align*}
where the last inequality follows from the fact that if $(u,v) \in E^-$ and $u,v$ are in the same cluster, then $(u,v)$ is a disagreement incident to $u$. 

It remains to bound the last summation, call it $S$. While we might expect the argument to mimic the bounding of the sum $\sum_{v \in N_u^+ \cap C(u)} d_{uv}$ in the proof of Claim \ref{sum_plus}, e.g., via an averaging argument, this will not work. 
Overall, for $v\in N_u^+$ we can compare the sum $\sum_{v \in N_u^+} d_{uv}$ only to disagreements incident to $u$ and to $v$,
but for $v \in N_u^- \cap \overline{C(u)}$ we will have to compare to disagreements incident to vertices besides $u$ and $v$. 
Intuitively, this is because putting $u$ and $v$ in different clusters 
means $(u,v)$ is not in disagreement according to the clustering, 
but necessarily any vertex $w \in N_u^+ \cap N_v^+$ is incident to 
an edge in disagreement.
Further, this disagreement can be charged to \textit{another} vertex---a carefully chosen $v^*(w) \in C(w)$.

We begin by ``flipping" the sum $S$. In particular, we would like to view the sum as being taken over elements in $N_u^+ \cap N_v^+$, subject to scaling by $n-|N_u^- \cap N_v^-|$.

\begin{align*}
    S &= \sum_{v \in N_u^- \cap \overline{C(u)}} \frac{|N_u^+ \cap N_v^+|}{n-|N_u^- \cap N_v^-|} = \sum_{w \in N_u^+}\sum_{\substack{v: \hspace{0.1 cm} v \in N_w^+,\\ v \in N_u^- \cap \overline{C(u)}}} \frac{1}{n-|N_u^- \cap N_v^-|}.
\end{align*}

Define for each $w \in N_u^+$:
\[V(w) := \{v \in V \mid v \in N_u^- \cap N_w^+ \cap \overline{C(u)} \} \]

\begin{figure}[H]
\centering
\includegraphics[width=0.6\textwidth]{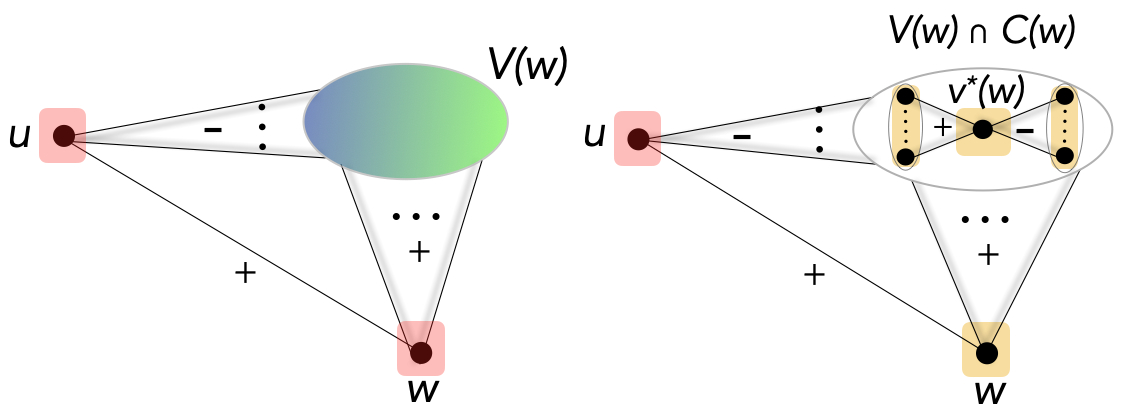}
\captionsetup{width=.8\linewidth}
\caption{\textbf{Left:} A representation of bounding the sum $S_1$. Here, $w$ is in the same cluster as $u$ (red), while the vertices in $V(w)$ are in various other clusters (blue/green). \textbf{Right:} A representation of bounding the sum $S_2(w)$, specifically the right-hand side in line (\ref{y(w)_again}). Yellow vertices are in the same cluster. The vertex $v^*(w)$ is carefully chosen as in the text.}
\label{fig: bounding_diagram}
\end{figure}

Note that $V(u) = \emptyset$, since $N_u^- \cap N_u^+ = \emptyset$, so the outer sum need not include the self-loop from $u$. Thus we have:
\begin{align*}
    S &= \sum_{w \in N_u^+ \setminus u} \hspace{0.1cm} \sum_{v \in V(w)} \frac{1}{n-|N_u^- \cap N_v^-|} \\
    &= \underbrace{\sum_{\substack{w \in N_u^+ \setminus u,\\ w \in C(u)}} \hspace{0.1cm} \sum_{v \in V(w)} \frac{1}{n-|N_u^- \cap N_v^-|}}_{S_1} + \underbrace{\sum_{\substack{w \in N_u^+,\\ w 
    \in \overline{C(u)}}} \hspace{0.1cm} \sum_{v \in V(w)} \frac{1}{n-|N_u^- \cap N_v^-|}}_{S_2}
\end{align*}
Let $S_1$ denote the left-hand sum and $S_2$ denote the right-hand sum. First we will bound $S_1$ (see Figure \ref{fig: bounding_diagram}):

\begin{align}
    S_1 &= \sum_{\substack{w \in N_u^+ \setminus u,\\ w \in C(u)}} \hspace{0.1cm} \sum_{v \in V(w)} \frac{1}{n-|N_u^- \cap N_v^-|} 
    \leq \sum_{\substack{w \in N_u^+ \setminus u,\\ w \in C(u)}} \hspace{0.1cm} \sum_{v \in V(w)} \frac{1}{|N_u^+|} \label{unif_bd}\\
    &= \sum_{\substack{w \in N_u^+ \setminus u,\\ w \in C(u)}} \frac{|V(w)|}{|N_u^+|}
    \leq \sum_{\substack{w \in N_u^+ \setminus u,\\ w \in C(u)}} \frac{y(w)}{|N_u^+|} \label{y(w)} \\
    &\leq \frac{\textsf{OPT}}{|N_u^+|} \cdot |(N_u^+ \setminus u) \cap C(u)| \notag \leq \textsf{OPT}. \notag
\end{align}

The inequality in Line (\ref{unif_bd}) follows from Fact \ref{sum_of_nbhds}
along with the fact that $|N_u^- \cap N_v^-| \leq |N_u^-|$. 
The inequality in Line (\ref{y(w)}) follows from 
the fact that if $w \in C(u) \setminus u$, 
then every member of $V(w)$ 1) is in a different cluster than $w$, 
and 2) has a positive edge to $w$; 
so $|V(w)|$ is at most the number of disagreements incident to $w$.

Bounding $S_2$ is more involved. For $w \in N_u^+ \cap \overline{C(u)}$, define 
\[S_2(w) := \sum_{v \in V(w)} \frac{1}{n-|N_u^- \cap N_v^-|}, \]
so that we can write 
\[S_2 = \sum_{\substack{w \in N_u^+,\\ w \in \overline{C(u)}}} S_2(w). \]
Fix $w \in N_u^+ \cap \overline{C(u)}$. We will bound $S_2(w)$ (see Figure \ref{fig: bounding_diagram}).  Note that neither $u$ nor $w$ is in $V(w)$. 

\begin{align}
    S_2(w) &=  \sum_{\substack{v \in V(w),\\ v \in \overline{C(w)}}} \frac{1}{n-|N_u^- \cap N_v^-|} +  \sum_{\substack{v \in V(w),\\ v \in C(w)}} \frac{1}{n-|N_u^- \cap N_v^-|}  \notag \\
    &\leq \frac{|V(w) \cap \overline{C(w)}|}{|N_u^+|} +  \sum_{\substack{v \in V(w),\\ v \in C(w)}} \frac{1}{n-|N_u^- \cap N_v^-|} \label{unif_bd_again}\\
    &\leq \frac{y(w)}{|N_u^+|} + \sum_{\substack{v \in V(w),\\ v \in C(w)}} \frac{1}{n-|N_u^- \cap N_v^-|}. \label{y(w)_again}
\end{align}
In line (\ref{unif_bd_again}) we have again used the bound that $n-|N_u^- \cap N_v^-| \geq |N_u^+|$, and in line (\ref{y(w)_again}) we have used that the edges from $w$ to $V(w) \cap \overline{C(w)}$ are positive and therefore in disagreement. 

Define 
\[v^*(w) = \argmax_{v \in V(w) \cap C(w)} |N_u^- \cap N_v^-| \]
so that $|N_u^- \cap N_v^-| \leq |N_u^- \cap N_{v^*(w)}^-|$ for all $v \in V(w) \cap C(w)$.
Continuing from line (\ref{y(w)_again}), we have: 

\begin{align}
    S_2(w) &\leq \frac{y(w)}{|N_u^+|} + \sum_{\substack{v \in V(w),\\ v \in C(w),\\ v \in N_{v^*(w)}^+}} \frac{1}{n-|N_u^- \cap N_v^-|} + \sum_{\substack{v \in V(w),\\ v \in C(w),\\ v \in N_{v^*(w)}^-}} \frac{1}{n-|N_u^- \cap N_v^-|} \label{v*_split} \\
    &\leq \frac{y(w)}{|N_u^+|} + \sum_{\substack{v \in V(w),\\ v \in C(w),\\ v \in N_{v^*(w)}^+}} \frac{1}{n-|N_u^- \cap N_{v^*(w)}^-|} + \sum_{\substack{v \in V(w),\\ v \in C(w),\\ v \in N_{v^*(w)}^-}} \frac{1}{|N_u^+|} \label{diff_bds}\\
    &\leq \frac{y(w)}{|N_u^+|} + \sum_{\substack{v \in V(w),\\ v \in C(w),\\ v \in N_{v^*(w)}^+}} \frac{1}{n-|N_u^- \cap N_{v^*(w)}^-|} +  \frac{y(v^*(w))}{|N_u^+|} \label{neg_disag}\\
    &= \frac{y(w)}{|N_u^+|} + \frac{|V(w) \cap C(w) \cap N_{v^*(w)}^+|}{n-|N_u^- \cap N_{v^*(w)}^-|} + \frac{y(v^*(w))}{|N_u^+|} \notag \\
    &\leq \frac{y(w)}{|N_u^+|} + \frac{|N_u^- \cap N_{v^*(w)}^+|}{n-|N_u^- \cap N_{v^*(w)}^-|} + \frac{y(v^*(w))}{|N_u^+|} \label{uv*_int}\\
    &\leq \frac{y(w)}{|N_u^+|} + 1 + \frac{y(v^*(w))}{|N_u^+|} \label{bd_by_1}\\
    &\leq  1 + \frac{2 \cdot \textsf{OPT}}{|N_u^+|} \notag
\end{align}

Line (\ref{v*_split}) follows from the fact that $v^*(w) \in N_{v^*(w)}^+$. Line (\ref{diff_bds}) follows from the definition of $v^*(w)$ along with the previously used bound of $n-|N_u^- \cap N_v^-| \geq |N_u^+|$. Line (\ref{neg_disag}) follows from the fact that $v^*(w)$ and $N_{v^*(w)}^- \cap C(w)$ are in the same cluster $C(w)$, so the edges between $v^*(w)$ and $N_{v^*(w)}^- \cap C(w)$ are disagreements incident to $v^*(w)$. Line (\ref{uv*_int}) follows from the fact that every vertex in $V(w)$ is in $N_u^-$, and line (\ref{bd_by_1}) follows from Fact \ref{pairwise_intersections}.\\

Finally, having a bound for $S_2(w)$, we can bound $S_2$ and then $S$ and $S^-$, which will finish the proof of Claim \ref{sum_minus}. 
\begin{align}
    S_2 &= \sum_{\substack{w \in N_u^+,\\ w \in \overline{C(u)}}} S_2(w) \notag 
    \leq \sum_{\substack{w \in N_u^+,\\ w \in \overline{C(u)}}} \left(\frac{2 \cdot \textsf{OPT}}{|N_u^+|} + 1\right) \notag 
    \leq 2 \cdot \textsf{OPT} + \sum_{\substack{w \in N_u^+,\\ w \in \overline{C(u)}}} 1 \notag \\
    &\leq 2 \cdot \textsf{OPT} + y(u) \label{pos_disag} \notag 
    \leq 3 \cdot \textsf{OPT}. \notag
\end{align}
Finally, we finish the proof of the claim because
\[S^- \leq \textsf{OPT} + S = \textsf{OPT} + S_1 + S_2 \leq \textsf{OPT} + \textsf{OPT} + 3 \cdot \textsf{OPT} = 5 \cdot \textsf{OPT}. \]
\end{proof}

We remark that in bounding $S$, the argument is non-local.  Rather, we used global properties of the graph.  We believe bounding the optimal value requires global structural properties of the graph.

\subsection{Triangle inequality}\label{sec: triangle_ineq}
We will show that the \dn~satisfies the triangle inequality, i.e., 
for distinct $u,v,w \in V$, 
$d_{uv} + d_{vw} \geq d_{uw}$. 

\begin{lemma}
\label{lem: triangle-in}
The \dn~satisfies the triangle inequality.
\end{lemma}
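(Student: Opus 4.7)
The plan is to recognize the correlation metric as a familiar object in disguise, namely the Jaccard distance on positive neighborhoods, and then apply the classical fact that this distance satisfies the triangle inequality.

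The first step is algebraic. Using Fact~\ref{pairwise_intersections}, the denominator can be rewritten as
\[
n - |N_u^- \cap N_v^-| \;=\; |N_u^+ \cap N_v^+| + |N_u^+ \cap N_v^-| + |N_u^- \cap N_v^+|,
\]
and by elementary inclusion--exclusion this last expression equals $|N_u^+ \cup N_v^+|$. Substituting back into the definition gives
\[
d_{uv} \;=\; 1 - \frac{|N_u^+ \cap N_v^+|}{|N_u^+ \cup N_v^+|} \;=\; \frac{|N_u^+ \,\triangle\, N_v^+|}{|N_u^+ \cup N_v^+|},
\]
i.e., $d_{uv}$ is precisely the Jaccard distance between the positive neighborhoods $N_u^+$ and $N_v^+$. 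Note that the positive self-loop assumption guarantees $u \in N_u^+$, so these neighborhoods are nonempty and the ratio is well-defined.

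The second step is to invoke (or prove) that Jaccard distance on finite sets is a metric. This is the Steinhaus transform applied to the symmetric-difference pseudometric: if $\mu$ is a finite measure and $\delta(A,B) = \mu(A \triangle B)$, then $\delta(A,B)/\mu(A \cup B)$ again satisfies the triangle inequality. Specialized to counting measure this yields exactly what we need. For a self-contained argument on the ground set $U = N_u^+ \cup N_v^+ \cup N_w^+$, the key identity to invoke is that symmetric difference itself obeys $|A \triangle C| \leq |A \triangle B| + |B \triangle C|$; the ratio version then follows by a short manipulation combining this with $|A \cup C| \geq |A \cap X|$ type bounds for $X \in \{B, U\}$. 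Concretely, one can write
\[
|A \triangle C|\,|A \cup B|\,|B \cup C| \;\leq\; |A \triangle B|\,|A \cup C|\,|B \cup C| + |B \triangle C|\,|A \cup C|\,|A \cup B|
\]
and verify this elementwise on the ground set $U$ by checking, for each point $x \in U$, which of the three sets it belongs to and confirming the pointwise contribution to the left-hand side is dominated by that on the right.

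The main obstacle is the Steinhaus/Jaccard triangle inequality itself, which is the genuine content once the algebraic identification has been made. I would either cite the classical result directly (Levandowsky--Winter, or via the Steinhaus transform) or include a short appendix-style verification via the pointwise checking sketched above. Either way, after step one the conclusion $d_{uv} + d_{vw} \geq d_{uw}$ is immediate for any three vertices $u, v, w$, and the lemma follows.
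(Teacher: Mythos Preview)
Your key observation --- that $d_{uv}$ is exactly the Jaccard distance $1 - |N_u^+ \cap N_v^+|/|N_u^+ \cup N_v^+|$ between positive neighborhoods --- is correct and gives a cleaner route than the paper's. The paper never makes this identification; instead it clears denominators, introduces the seven Venn-region sizes $b_{uvw}, b_{uv}, b_u, \ldots$ for the three positive neighborhoods, and verifies the resulting polynomial inequality by direct expansion. Your approach buys immediate access to a classical metric result and, perhaps more valuably, reveals structurally what the correlation metric \emph{is}.

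One caveat on your self-contained sketch: the ``elementwise'' verification of
\[
|A \triangle C|\,|A \cup B|\,|B \cup C| \;\leq\; |A \triangle B|\,|A \cup C|\,|B \cup C| + |B \triangle C|\,|A \cup C|\,|A \cup B|
\]
does not work as you describe, because each side is a \emph{product} of three set sizes and hence not linear in indicator functions --- there is no well-defined per-point contribution to compare. A genuine self-contained proof proceeds by decomposing $A,B,C$ into the seven Venn regions and checking the polynomial inequality in those seven nonnegative variables. But that is precisely the paper's computation with $A = N_u^+$, $B = N_v^+$, $C = N_w^+$: their $b$-variables are exactly these region sizes. So either cite the Jaccard/Steinhaus result outright (which is perfectly legitimate), or be aware that the from-scratch verification lands you back in the paper's algebra.
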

\begin{proof}
Fix three distinct vertices $u,v,w \in V$. We see from the definition of $d$ that 
$d_{uv} + d_{vw} \geq d_{uw}$ if and only if
\begin{align*}
1- \frac{|N_u^+ \cap N_v^+|}{n-|N_u^- \cap N_v^-|} + 1- \frac{|N_w^+ \cap N_v^+|}{n-|N_w^- \cap N_v^-|} &\geq 1- \frac{|N_u^+ \cap N_w^+|}{n-|N_u^- \cap N_w^-|},
\end{align*}
which after rearranging is equivalent to
\begin{align*}
\frac{|N_u^+ \cap N_w^+|}{n-|N_u^- \cap N_w^-|} +1 & \geq \frac{|N_u^+ \cap N_v^+|}{n-|N_u^- \cap N_v^-|} + \frac{|N_w^+ \cap N_v^+|}{n-|N_w^- \cap N_v^-|}.
\end{align*}
For shorthand, allow $a_{uv} = n-|N_u^- \cap N_v^-|$, and analogously define $a_{uw}$ and $a_{vw}$.
Then we can multiply both sides of the above inequality by $a_{uw}\cdot a_{uv}\cdot a_{vw}$ to rewrite it as
\begin{equation}
\label{eq: a-intro}
a_{uv}a_{wv} \left (|N_u^+\cap N_w^+|+  n-|N_u^- \cap N_w^-| \right )\geq a_{uw}a_{vw}|N_u^+ \cap N_v^+| + a_{uw}a_{uv}|N_w^+ \cap N_v^+|.
\end{equation}

We can rewrite the left-hand side of Equation \ref{eq: a-intro} using Fact \ref{pairwise_intersections} and expand it into the intersection of all three vertices as 
\begin{align*}
    a_{uv}a_{vw} \left (|N_u^+ \cap N_w^+|+a_{uw} \right )&=
    a_{uv}a_{vw}  (|N_u^+ \cap N_w^+|+ |N_u^+ \cap N_w^+|+ |N_u^- \cap N_w^+|+ |N_u^+ \cap N_w^-|  )\\
    &= a_{uv}a_{vw} (2|N_u^+ \cap N_v^+ \cap N_w^+|+2|N_u^+ \cap N_v^- \cap N_w^+|\\
    & \qquad + |N_u^- \cap N_v^+\cap N_w^+|
     + |N_u^- \cap N_v^-\cap N_w^+|\\
     & \qquad + |N_u^+ \cap N_v^+\cap N_w^-|+|N_u^+ \cap N_v^-\cap N_w^-|).
\end{align*}
We introduce more shorthand to compactly notate the intersections of 3 neighborhoods. Let the subscripts of the variable $b$ denote the vertices whose positive intersection we examine, so $b_{uvw} = |N_u^+ \cap N_v^+\cap N_w^+|$, $b_{uv} = |N_u^+ \cap N_v^+\cap N_w^-|$, $b_{u} = |N_u^+ \cap N_v^-\cap N_w^-|$, etc.
Therefore we can more compactly write the left-hand side of Equation \ref{eq: a-intro} as 
\begin{equation}
    \label{eq: b-intro}
    a_{uv}a_{vw} \left (|N_u^+ \cap N_w^+|+a_{uw} \right ) =a_{uv}a_{vw} \left (
     2b_{uvw}+2b_{uw}+ b_{uv}+b_{vw}+ b_{u}+b_w \right ).
\end{equation}
We next write the factors $a_{uv},a_{uw}$, and $a_{vw}$ in terms of the $b$ variables, where we let $B=b_{uvw}+ b_{uv}+b_{uw}+b_{vw}$:
\begin{align*}
    a_{uv} &= |N_u^+ \cap N_v^+|+|N_u^+ \cap N_v^-|+|N_u^- \cap N_v^+|
    =b_{uv}+b_{uvw}+b_u+b_{uw}+b_v+b_{vw} \\
    &= B+b_u+b_v
\end{align*}
Similarly, $a_{uw}=B+b_u+b_w$ and $a_{vw} =B+b_v+b_w.$
Plugging these into Equation \ref{eq: b-intro} the full left-hand side of Equation \ref{eq: a-intro} is
\begin{equation}
\label{eq: LHS-tri-in}
 a_{uv}a_{vw} \left (|N_u^+ \cap N_w^+|+a_{uw} \right ) = (B+b_u+b_v)(B+b_v+b_w) \left (
     B+b_{uvw}+b_{uw}+ b_{u}+b_w \right ).
\end{equation}

Now we move onto rewriting the right-hand side of Equation \ref{eq: a-intro}:
\begin{align}\label{eq: RHS}
    a_{uw}&a_{vw}|N_u^+ \cap N_v^+| + a_{uw}a_{uv}|N_w^+ \cap N_v^+| \notag \\
    &= (B+b_u+b_{w})(B+b_v+b_{w})(b_{uvw}+b_{uv})+(B+b_u+b_{w})(B+b_u+b_{v})(b_{uvw}+b_{vw}).
\end{align}
Using Equations \ref{eq: LHS-tri-in} and \ref{eq: RHS}, we rewrite the condition in Equation \ref{eq: a-intro} as 
\begin{align*}
 (B+b_u+b_v)&(B+b_v+b_w) \left (
     b_{uvw}+b_{uw}+B+b_u+ b_{w}\right ) \\
     & \geq (B+b_u+b_w)(B+b_v+b_w)(b_{uvw}+b_{uv})\\
    & \qquad +(B+b_u+b_w)(B+b_u+b_v)(b_{uvw}+b_{vw}),
\end{align*}
which one can verify is true as every term on the 
right-hand side is on the left-hand side too.
\end{proof}

\subsection{Completing the proofs of Theorem \ref{thm: apx_thm} and Corollary \ref{cor: apx-sparse}}\label{sec: complete-proof}

Tying it all together, we prove the first combinatorial approximation algorithm for Min Max correlation clustering. 

\begin{proof}[Proof of Theorem \ref{thm: apx_thm}]

For the clustering $\mathcal{C}$ output by the rounding algorithm 
(Algorithm \ref{KMZ-alg}) of Kalhan, Makarychev, and Zhou with the \dn~as input, 
let $\textsf{ALG}(u,v) = \mathds{1}((u,v) \text { is in }$ $\text{disagreement in }\mathcal{C})$ 
and $\textsf{ALG}(u) = \sum_{v \in V} \textsf{ALG}(u,v)$.
We see that
\begin{equation} \label{eq: KMZ_chain}
    \textsf{ALG}(u) = \sum_{v \in V}\textsf{ALG}(u,v)  \underset{*}{\leq} 
5 \cdot \sum_{v \in V}\widehat{d}_{uv} \leq 40 \cdot \textsf{OPT}.
\end{equation} 
The first inequality follows because the \dn~is a feasible solution to
LP \ref{KMZ_LP}, since all $0 \leq d_{uv} \leq 1$ 
and by Lemma \ref{lem: triangle-in} 
$d$ satisfies the triangle inequality.
The second inequality follows by Lemma \ref{lem: main_lem}.

Next, we analyze the run-time.
Our algorithm has two phases:
in phase (1) it computes the \dn~ for all $u,v \in V$, i.e. $d_{uv}$,
and in phase (2) it uses the rounding algorithm (Algorithm \ref{KMZ-alg}) with input $d$. 
Phase (1) of our algorithm takes $O(n^\omega)$ time, where $\omega$ is the matrix multiplication constant. 
To see this, observe that if $P$ is an adjacency matrix, $P^2$ counts paths of length 2 between each pair of vertices, so we may compute $|N_u^+ \cap N_v^+|$ for all $u,v$ by taking $P$ to be the adjacency matrix of positive edges. 
Phase (2) takes time $O(n^{2})$ (see Appendix \ref{sec: lp_rounding_alg}).
\end{proof}

Lastly, we will derive Corollary \ref{cor: apx-sparse}. 
For sparse graphs, instead of computing $d_{uv}$ for all pairs $u,v \in V$,
we need only compute $d_{uv}$ when $d_{uv} < 1$ (there are at most $n\Delta^2$ such pairs), and handle the other pairs implicitly.

\begin{proof}[Proof of Corollary \ref{cor: apx-sparse}]

For each $u \in V$, 
there are at most $\Delta^2$ vertices $v$ such that $|N_u^+ \cap N_v^+| > 0$, 
so there are at most $\Delta^2$ vertices $v$ with $d_{uv} < 1$. 
We only need to compute $d_{uv}$ for pairs $\{u,v\}$ such that $v$ in the 2-hop neighborhood $N^2(u)$ of $u$. For fixed $u$, computing $N^2(u)$ as well as $|N_u^+ \cap N_v^+|$ for each $v \in N^2(u)$ can be done in $O(\Delta^2)$ time, so for all vertices $u$ this takes a total of $O(n\Delta^2)$ time. 

Now we need to compute $d_{uv}$ for every $v \in N^2(u)$ 
(for $v \not \in N^2(u)$, $d_{uv} = 1$). 
This takes constant time for each pair $\{u,v\}$, 
since $$d_{uv} = 1 - |N_u^+ \cap N_v^+|/(n-|N_u^- \cap N_v^-|) = 1 - |N_u^+ \cap N_v^+|/\left(|N_u^+| + |N_v^+| - |N_u^+ \cap N_v^+|\right).$$

Now that we have $d$ computed for all relevant pairs of vertices,
it remains to argue the rounding algorithm (Algorithm \ref{KMZ-alg}) also can be run faster on sparse graphs. First, initialize $L_0(u)$ for each $u \in V$. Since $r = 1/5$, we only have to check whether $d_{uv} \leq 1/5$ for $v \in N^2(u)$ 
(otherwise, $d_{uv} = 1$). Since $|N^2(u)| \leq \Delta^2$, initializing $L_0(\cdot)$ takes $O(n\Delta^2)$ time total. We will store $L_t(u)$ for each $u \in V_t$ in a binary heap; inserting these takes $O(n\log n)$ time total. In each phase $t$, finding $u^*_t$ maximizing $L_t(u)$ will take $O(1)$ time. 
Then, we will have to remove all $v$ such that $d_{u_t^*v} \leq 2/5$. Since these $v$ must belong to $N^2(u^*_t)$, 
and each deletion is $O(\log n)$ time, this will take $O(\Delta^2 \log n)$ time. We will also have to decrease $L_{t}(u)$ to $L_{t+1}(u)$ for $u \in V_{t+1}$: 
For each vertex $v$ removed during phase $t$, 
there are at most $\Delta^2$ elements in $N^2(v)$, 
so $v$ induces at most $\Delta^2$ updates to $L_t(\cdot)$. Since each vertex is only removed once, 
the key update operations contribute $O(n\Delta^2 \log n)$ time total.
\end{proof}

\section{Faster Algorithm via Sampling}
\label{sec: sampling}
In this section, we show that instead of computing the \dn~exactly for each pair $u,v$, it suffices to estimate these via samples of the graph. In doing so, we can improve the run-time of our algorithm from $O(n^\omega)$ to $O(n^2 \log n)$. To do this, we show: (1) the fractional cost of the estimates is a constant factor away from \textsf{OPT} w.h.p. (Proposition \ref{prop: pseudo_vs_opt}), (2) the estimates satisfy an approximate triangle inequality w.h.p. (Proposition \ref{prop: pseudo_triangle}), and (3) inputting a function that approximately satisfies the triangle inequality to the rounding algorithm (Algorithm \ref{KMZ-alg}) is sufficient for obtaining a constant factor in (\midasterik) of line (\ref{eq: KMZ_chain}) (Lemma \ref{thm: constant_rounding_estimates}). These three steps imply Theorem \ref{thm: apx_sampling}. 

We will denote the initial estimates for $d_{uv}$ by $\bar{d}_{uv}$ 
and prove properties of the $\bar{d}_{uv}$ in Section \ref{sec: initial-est}. Then, we will have a post-processing phase, in which we round some $d_{uv}$ down to 0 and some up to 1; this step, while non-obvious, is needed to control the fractional cost. We examine the post-processing phase in Section \ref{sec: post_processed_estimates}. Our final inputs, which we use as input to the rounding algorithm, will be denoted $\widetilde{d}_{uv}$.

We will repeatedly use the following tail bounds for sums of random variables. 
\begin{theorem}[Chernoff-Hoeffding] \label{thm: chernoff}
Let $X = X_1 + \cdots + X_m$ where $\{X_1, \dots, X_m\}$ is a set of negatively correlated random variables. Define $\mu = \mathbb{E}[X]$. For $0 < \eps < 1$, the following tail bounds hold:
\[\mathbb{P}(X \geq (1+\eps)\mu) \leq e^{-\eps^2\mu/4} \]
\[\mathbb{P}(X \leq (1-\eps)\mu) \leq e^{-\eps^2\mu/4}. \]
\end{theorem}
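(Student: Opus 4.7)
The plan is to use the standard moment-generating-function (MGF) method for Chernoff-type bounds, together with the Panconesi-Srinivasan extension that transfers MGF factorization from the independent case to the negatively correlated $\{0,1\}$-valued setting. I would assume, as is standard for statements of this bound with negatively correlated variables, that each $X_i$ is Bernoulli with mean $p_i$ so that $\mu = \sum_i p_i$.

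For the upper tail, fix $t > 0$ and apply Markov's inequality to $e^{tX}$:
$$\mathbb{P}(X \geq (1+\eps)\mu) \leq \mathbb{E}[e^{tX}] / e^{t(1+\eps)\mu}.$$
In the independent case, $\mathbb{E}[e^{tX}]$ factors as $\prod_i \mathbb{E}[e^{tX_i}]$. For negatively correlated 0-1 variables, Panconesi-Srinivasan give the \emph{inequality} $\mathbb{E}[\prod_i f_i(X_i)] \leq \prod_i \mathbb{E}[f_i(X_i)]$ whenever all $f_i$ are non-decreasing; applied with $f_i(x) = e^{tx}$ together with the Bernoulli MGF bound $\mathbb{E}[e^{tX_i}] \leq e^{p_i(e^t - 1)}$, this yields $\mathbb{E}[e^{tX}] \leq e^{\mu(e^t - 1)}$. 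Optimizing at $t = \ln(1+\eps)$ produces the familiar Chernoff expression $\bigl(e^\eps/(1+\eps)^{1+\eps}\bigr)^\mu$, and a routine analytic check using $\ln(1+\eps) \geq \eps - \eps^2/2$ shows this quantity is dominated by $e^{-\eps^2 \mu/4}$ on $\eps \in (0,1)$.

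For the lower tail, I would apply the symmetric argument: Markov's inequality on $e^{-tX}$ with $t > 0$, Panconesi-Srinivasan applied to the non-increasing functions $f_i(x) = e^{-tx}$ (their lemma covers both all-non-decreasing and all-non-increasing collections in the 0-1 setting), the Bernoulli bound $\mathbb{E}[e^{-tX_i}] \leq e^{p_i(e^{-t} - 1)}$, optimization at $t = -\ln(1-\eps)$, and the analytic inequality $\ln(1-\eps) \leq -\eps - \eps^2/2$ on $(0,1)$ to collapse the result down to $e^{-\eps^2 \mu/4}$.

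The main obstacle is the negative-correlation step: for genuinely independent variables, the MGF factorizes exactly, but here one must invoke a product-inequality (essentially an FKG-type statement for monotone functions on $\{0,1\}$-valued random variables) to turn the expected product of MGFs into a product of expected MGFs. Once that lemma is in hand as a black box, the rest is the classical Chernoff derivation, with the $/4$ in the exponent built in as slack to make the final analytic estimates clean on the entire interval $\eps \in (0,1)$.
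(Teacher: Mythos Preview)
The paper does not prove this statement: Theorem~\ref{thm: chernoff} is quoted as a standard tail bound (``We will repeatedly use the following tail bounds\ldots'') and is used as a black box in the sampling analysis, with no derivation given. So there is no paper proof to compare against.

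Your outline is the standard and correct derivation: Markov on the MGF, Panconesi--Srinivasan to replace independence by a product inequality in the $\{0,1\}$ setting, then the usual optimization and elementary log inequalities to reach the $e^{-\eps^2\mu/4}$ form. One small caveat worth flagging: for the \emph{lower} tail you need the product inequality for non-increasing test functions, which follows from negative \emph{association} of the $X_i$ (equivalently, negative correlation of the complements $1-X_i$), not merely from the upper-set negative-correlation hypothesis $\mathbb{E}[\prod_{i\in S} X_i]\le \prod_{i\in S}\mathbb{E}[X_i]$. In the paper's actual application the $X_i$ arise from sampling without replacement, which is negatively associated, so both directions are available; but if you want your writeup to match the theorem as stated, either strengthen the hypothesis to negative association or note that the lower tail requires negative correlation of the $1-X_i$ as well.
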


\subsection{Initial estimates for the correlation metric}\label{sec: initial-est}
Fix $0 < \varepsilon < 1$. For every vertex $u$, randomly sample $ \lceil C(\varepsilon) \cdot \log n \rceil$ vertices from $|N_u^+|$, where $C(\varepsilon) = 32/ \eps^2$. Assume first that $|N_u^+| \geq \lceil C(\varepsilon) \cdot \log n \rceil $, 
as otherwise, the estimate is exact. Call this sample $S_u$. 

For two vertices $u,v$, define the random variables:
\begin{align*}
    X_{+}^{(u,v)} &= \sum_{w \in N_u^+ \cap N_v^+} \mathds{1}_{\{w \in S_u\}} \\
    X_{-}^{(u,v)} &= \sum_{w \in N_u^+ \cap N_v^-} \mathds{1}_{\{w \in S_u\}} \\
    W^{(u,v)} &= \frac{|N_u^+|}{\lceil C(\varepsilon)\log n \rceil} \cdot X_{+}^{(u,v)} \\
    Y^{(u,v)} &= \frac{|N_u^+|}{\lceil C(\varepsilon)\log n \rceil}\cdot X_{-}^{(u,v)}
\end{align*}
Note that the superscripts are \textit{ordered} pairs. If on the other hand $|N_u^+| < \lceil C(\eps) \cdot \log n \rceil$, set $W^{(u,v)} = |N_u^+ \cap N_v^+|$ and $Y^{(u,v)} = |N_u^+ \cap N_v^-|$.

Observe that $X_{+}^{(u,v)} + X_{-}^{(u,v)} = |S_u| = \lceil C(\varepsilon) \cdot \log n \rceil$. As a result,
\begin{equation} \label{eq: sum_estimates}
Y^{(u,v)} = |N_u^+| - W^{(u,v)}. 
\end{equation}
$Y^{(u,v)}$ will serve as the estimate for $|N_u^+ \cap N_v^-|$ and $W^{(u,v)}$ will serve as an estimate for $|N_u^+ \cap N_v^+|$. 

Flipping the order of $u$ and $v$ in the superscripts gives that $Y^{(v,u)}$ is an estimate of $|N_u^- \cap N_v^+|$, and $W^{(v,u)}$ is a second estimate for $|N_u^+ \cap N_v^+|$.
Also, observe that 
\begin{align}
    \mathbf{E}[X_{+}^{(u,v)}] &= |N_u^+ \cap N_v^+| \cdot \frac{\lceil C(\varepsilon)\log n \rceil}{|N_u^+|} \label{eq: pos_exp_val} \\
    \mathbf{E}[X_{-}^{(u,v)}] &= |N_u^+ \cap N_v^-| \cdot \frac{\lceil C(\varepsilon)\log n \rceil}{|N_u^+|} \label{eq: neg_exp_val} 
\end{align}
and similar statements for when the order of $u,v$ is flipped in the superscripts. \\
\textbf{Let $u,v$ be labelled so that $|N_v^+| \geq |N_u^+|$.} We define the \textit{initial} estimate for $d_{uv}$ (before post-processing) as 
\begin{equation} \label{eq: first_distance_estimate}
\widebar{d}_{uv} = \frac{Y^{(u,v)} + Y^{(v,u)}}{|N_u^+| + Y^{(v,u)}} = \frac{|N_u^+|- W^{(u,v)} + |N_v^+| - W^{(v,u)}}{|N_u^+| + |N_v^+| - W^{(v,u)}}
\end{equation}
where the second equality holds by equation (\ref{eq: sum_estimates}). Note this is well-defined since $|N_u^+| \geq 1$ due to $u$'s positive self-loop. Also note that in the denominator, we use the estimate $W^{(v,u)}$ of $|N_u^+ \cap N_v^+|$ from $v$'s sample, rather than $u$'s. This makes intuitive sense, since $|N_v^+| \geq |N_u^+|$, but the technical necessity for doing this will become clear later.

\begin{fact} \label{fact: init_estimate_bounds}
For every $u,v \in V$, $W^{(u,v)}, Y^{(u,v)}\geq 0$. Also, $0 \leq \widebar{d}_{uv} \leq 1$. 
\end{fact}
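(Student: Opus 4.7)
The plan is to verify these bounds directly from the definitions of $W^{(u,v)}$, $Y^{(u,v)}$, and $\widebar{d}_{uv}$, with no probabilistic argument required; everything is pointwise, for every realization of the samples $S_u, S_v$.

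First I would handle $W^{(u,v)}, Y^{(u,v)} \geq 0$. In the case $|N_u^+| \geq \lceil C(\eps)\log n\rceil$, the random variables $X_+^{(u,v)}$ and $X_-^{(u,v)}$ are sums of indicator variables, hence nonnegative; scaling by $|N_u^+|/\lceil C(\eps)\log n\rceil > 0$ preserves nonnegativity. In the other case, $W^{(u,v)} = |N_u^+ \cap N_v^+|$ and $Y^{(u,v)} = |N_u^+ \cap N_v^-|$, which are cardinalities and thus nonnegative.

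Next I would bound $W^{(u,v)}$ and $W^{(v,u)}$ from above by $|N_u^+|$ and $|N_v^+|$ respectively. The key observation is that $X_+^{(u,v)} \leq |S_u| = \lceil C(\eps)\log n\rceil$, so
\[
W^{(u,v)} \;=\; \frac{|N_u^+|}{\lceil C(\eps)\log n\rceil}\,X_+^{(u,v)} \;\leq\; |N_u^+|,
\]
and similarly $W^{(v,u)} \leq |N_v^+|$. In the degenerate case these inequalities are just $|N_u^+ \cap N_v^+| \leq |N_u^+|$ and $|N_u^+ \cap N_v^+| \leq |N_v^+|$.

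Now I would assemble the bound $0 \leq \widebar{d}_{uv} \leq 1$ using the form
\[
\widebar{d}_{uv} \;=\; \frac{|N_u^+|- W^{(u,v)} + |N_v^+| - W^{(v,u)}}{|N_u^+| + |N_v^+| - W^{(v,u)}}.
\]
The denominator satisfies $|N_u^+| + |N_v^+| - W^{(v,u)} \geq |N_u^+| \geq 1$ by the upper bound on $W^{(v,u)}$ and the fact that $u$ has a positive self-loop. The numerator is nonnegative since $W^{(u,v)} \leq |N_u^+|$ and $W^{(v,u)} \leq |N_v^+|$, so $\widebar{d}_{uv} \geq 0$. For the upper bound, subtracting numerator from denominator yields exactly $W^{(u,v)} \geq 0$, giving $\widebar{d}_{uv} \leq 1$.

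There is really no main obstacle here: the fact is essentially a sanity check that the estimator lands in $[0,1]$ so that it can plausibly approximate $d_{uv}$. The only mild care required is remembering the convention $|N_v^+| \geq |N_u^+|$ and using the estimate $W^{(v,u)}$ (from $v$'s sample) in the denominator, and checking the fallback branch where the sample size exceeds the positive neighborhood.
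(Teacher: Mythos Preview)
Your proof is correct. The paper actually states this fact without proof, treating it as immediate from the definitions; your argument supplies exactly the details one would write down, namely that $X_+^{(u,v)} \le |S_u|$ forces $W^{(u,v)} \le |N_u^+|$ (and similarly for $W^{(v,u)}$), which together with nonnegativity of the indicator sums gives both the nonnegativity of the numerator and the bound numerator $\le$ denominator.
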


Since $|N_u^+| = |N_u^+ \cap N_v^+| + |N_u^+ \cap N_v^-|$, it is always the case that at least one of the two summands is at least $\frac{1}{2}|N_u^+|$. As a result, we can show in the next two propositions that at least one of the estimates of $|N_u^+ \cap N_v^+|$ and $|N_u^+ \cap N_v^-|$ is well-concentrated (Propositions \ref{prop: W_well_conc} and \ref{prop: Y_well_conc}). This concentration ensures that we are able to approximate $\bar{d}_{uv}$ with $d_{uv}$ 
from both above and below (Propositions \ref{prop: estimate_upper_bd} and \ref{prop: estimates_lower_bd}). 

\begin{proposition} \label{prop: W_well_conc}
Let $0 < \varepsilon < 1$. If $|N_u^+ \cap N_v^+| \geq \frac{1}{2}|N_u^+|$, then w.h.p. $W^{(u,v)} \geq (1-\varepsilon)|N_u^+ \cap N_v^+|$ and $W^{(u,v)} \leq (1+\varepsilon)|N_u^+ \cap N_v^+|$. Likewise, if $|N_u^+ \cap N_v^+| \geq \frac{1}{2}|N_v^+|$, then w.h.p. $W^{(v,u)} \geq (1-\varepsilon)|N_u^+ \cap N_v^+|$ and $W^{(v,u)} \leq (1+\varepsilon)|N_u^+ \cap N_v^+|$. In particular, each of the four events fails with probability at most $\frac{1}{n^{\eps^2 C(\eps)/8}}$.
\end{proposition}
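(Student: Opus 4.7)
The plan is to recognize that $W^{(u,v)}$ is simply a rescaled sum of indicator random variables whose mean is exactly $|N_u^+ \cap N_v^+|$, so the proposition reduces to a direct application of the Chernoff--Hoeffding bound (Theorem \ref{thm: chernoff}) once we have verified the hypothesis on the mean and the negative correlation of the indicators.

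First I would dispense with the boundary case $|N_u^+| < \lceil C(\eps) \log n \rceil$: here the sample is the entire set $N_u^+$, so by definition $W^{(u,v)} = |N_u^+ \cap N_v^+|$ exactly, and both bounds hold deterministically. So assume $|N_u^+| \geq \lceil C(\eps) \log n \rceil$ and write $m = \lceil C(\eps) \log n \rceil$. Then $W^{(u,v)} = \tfrac{|N_u^+|}{m} X_+^{(u,v)}$ where $X_+^{(u,v)} = \sum_{w \in N_u^+ \cap N_v^+} \mathds{1}_{\{w \in S_u\}}$ is a sum of indicators of a simple random sample of size $m$ drawn without replacement from $N_u^+$. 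Such indicators are a standard example of negatively correlated random variables, so Theorem \ref{thm: chernoff} applies. By (\ref{eq: pos_exp_val}), $\mu := \mathbf{E}[X_+^{(u,v)}] = |N_u^+ \cap N_v^+| \cdot m / |N_u^+|$, and scaling by $|N_u^+|/m$ gives $\mathbf{E}[W^{(u,v)}] = |N_u^+ \cap N_v^+|$. Since scaling is linear, the events $W^{(u,v)} \leq (1+\eps)|N_u^+ \cap N_v^+|$ and $W^{(u,v)} \geq (1-\eps)|N_u^+ \cap N_v^+|$ are identical to $X_+^{(u,v)} \leq (1+\eps)\mu$ and $X_+^{(u,v)} \geq (1-\eps)\mu$ respectively.

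Next I would lower bound $\mu$ using the hypothesis $|N_u^+ \cap N_v^+| \geq \tfrac{1}{2}|N_u^+|$, which yields $\mu \geq m/2 \geq C(\eps)(\log n)/2$. Plugging into Theorem \ref{thm: chernoff} gives failure probability at most $e^{-\eps^2 \mu / 4} \leq e^{-\eps^2 C(\eps) (\log n)/8} = n^{-\eps^2 C(\eps)/8}$ for each of the two tails of $W^{(u,v)}$. For $W^{(v,u)}$ the argument is identical with the roles of $u$ and $v$ swapped: $W^{(v,u)} = \tfrac{|N_v^+|}{m} X_+^{(v,u)}$ is a rescaled sum of indicators of a sample drawn from $N_v^+$, with mean $|N_u^+ \cap N_v^+|$; the hypothesis $|N_u^+ \cap N_v^+| \geq \tfrac{1}{2}|N_v^+|$ then gives the corresponding lower bound on the mean of the underlying indicator sum, and Chernoff supplies the same $n^{-\eps^2 C(\eps)/8}$ tail bound on each of the two deviations.

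There is no real obstacle here; the only points requiring care are (i) noting that sampling without replacement produces negatively correlated indicators so that Theorem \ref{thm: chernoff} genuinely applies, (ii) performing the scaling correctly so that $\mathbf{E}[W^{(u,v)}] = |N_u^+ \cap N_v^+|$ and the multiplicative tail bound transfers cleanly, and (iii) using the hypothesis on $|N_u^+ \cap N_v^+|$ relative to $|N_u^+|$ (respectively $|N_v^+|$) to guarantee that $\mu$ is large enough for the Chernoff exponent to scale as $\log n$. Each of the four stated events then fails with probability at most $n^{-\eps^2 C(\eps)/8}$, as claimed.
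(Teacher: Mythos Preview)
Your proposal is correct and follows essentially the same approach as the paper: handle the small-neighborhood case deterministically, then rescale to reduce both tail events for $W^{(u,v)}$ to Chernoff--Hoeffding on $X_+^{(u,v)}$, and use the hypothesis $|N_u^+ \cap N_v^+| \geq \tfrac12|N_u^+|$ to lower-bound the mean by $C(\eps)(\log n)/2$. If anything, you are slightly more explicit than the paper in pointing out that sampling without replacement yields negatively correlated indicators, which is what legitimizes invoking Theorem~\ref{thm: chernoff}.
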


\begin{proof}
If $|N_u^+| \leq \lceil C(\eps) \log n \rceil$, then the first two bounds hold automatically. So assume $|N_u^+| \geq \lceil C(\eps) \log n \rceil$. By Theorem \ref{thm: chernoff},
\begin{align*}
    \pr\left(W^{(u,v)} \leq (1-\varepsilon)|N_u^+ \cap N_v^+|\right) &= \pr\left(\frac{|N_u^+|}{\lceil C(\varepsilon)\log n \rceil}\cdot X_{+}^{(u,v)} \leq (1-\varepsilon)|N_u^+ \cap N_v^+| \right)\\
    &= \pr\left(X_{+}^{(u,v)} \leq (1-\varepsilon)|N_u^+ \cap N_v^+| \cdot \frac{\lceil C(\varepsilon) \log n \rceil}{|N_u^+|} \right) \\
    &= \pr\left(X_{+}^{(u,v)} \leq (1-\varepsilon)\mathbf{E}[X_u^+] \right) \\
    &\leq e^{-\frac{\varepsilon^2}{4}\mathbf{E}[X_{+}^{(u,v)}] } 
    \leq e^{-\frac{\varepsilon^2}{4} \cdot \lceil C(\varepsilon)\log n \rceil \cdot \frac{|N_u^+ \cap N_v^+|}{|N_u^+|}} 
    \leq e^{-\frac{\varepsilon^2}{8} \cdot C(\varepsilon) \log n},
\end{align*}
where we have used equation (\ref{eq: pos_exp_val}) and the assumption that $|N_u^+ \cap N_v^+| \geq \frac{1}{2}|N_u^+|$ in the last two inequalities. Similarly, 
\[\mathbf{P}\left(W^{(u,v)} \geq (1+\varepsilon)|N_u^+ \cap N_v^+| \right) \leq e^{-\frac{\varepsilon^2}{8} \cdot C(\varepsilon) \log n}.\]
\end{proof}

\begin{proposition} \label{prop: Y_well_conc}
Let $0 < \varepsilon < 1$. If $|N_u^+ \cap N_v^-| \geq \frac{1}{2}|N_u^+|$, then w.h.p. $Y^{(u,v)} \geq (1-\varepsilon)|N_u^+ \cap N_v^+|$ and $Y^{(u,v)} \leq (1+\varepsilon)|N_u^+ \cap N_v^+|$. Likewise, if $|N_u^- \cap N_v^+| \geq \frac{1}{2}|N_v^+|$, then w.h.p. $Y^{(v,u)} \geq (1-\varepsilon)|N_u^+ \cap N_v^+|$ and $Y^{(v,u)} \leq (1+\varepsilon)|N_u^+ \cap N_v^+|$. In particular, each of the four events fails with probability at most $\frac{1}{n^{\eps^2 C(\eps)/8}}$.
\end{proposition}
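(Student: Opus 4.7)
My plan is to mirror the proof of Proposition \ref{prop: W_well_conc} essentially verbatim, applying Chernoff--Hoeffding (Theorem \ref{thm: chernoff}) to the random variable $X_-^{(u,v)}$ in place of $X_+^{(u,v)}$. By equation (\ref{eq: neg_exp_val}), $\mathbf{E}[X_-^{(u,v)}] = |N_u^+ \cap N_v^-| \cdot \tfrac{\lceil C(\eps)\log n \rceil}{|N_u^+|}$, so $\mathbf{E}[Y^{(u,v)}] = |N_u^+ \cap N_v^-|$; thus Chernoff will naturally concentrate $Y^{(u,v)}$ around $|N_u^+ \cap N_v^-|$ rather than $|N_u^+ \cap N_v^+|$. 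The conclusion as stated appears to carry a typographical $N_v^+$ in place of $N_v^-$ (and analogously $N_v^-$ for $N_v^+$ in the second clause); the target of concentration below is therefore taken to be $|N_u^+ \cap N_v^-|$ (respectively $|N_u^- \cap N_v^+|$), in direct parallel with how Proposition \ref{prop: W_well_conc} concentrates $W^{(u,v)}$ around $|N_u^+ \cap N_v^+|$ under the analogous hypothesis.

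First I would dispose of the degenerate case $|N_u^+| < \lceil C(\eps) \log n \rceil$: there $S_u = N_u^+$ and by definition $Y^{(u,v)} = |N_u^+ \cap N_v^-|$ exactly, so both inequalities hold deterministically. In the main case $|N_u^+| \geq \lceil C(\eps)\log n\rceil$, the sample $S_u$ is drawn without replacement, so the summands of $X_-^{(u,v)}$ are negatively associated (hence negatively correlated), and Theorem \ref{thm: chernoff} applies. The hypothesis $|N_u^+ \cap N_v^-| \geq \tfrac{1}{2}|N_u^+|$ yields $\mathbf{E}[X_-^{(u,v)}] \geq \tfrac{1}{2}\lceil C(\eps)\log n\rceil$, so the Chernoff lower tail gives
\[\mathbf{P}\bigl(Y^{(u,v)} \leq (1-\eps)|N_u^+ \cap N_v^-|\bigr) = \mathbf{P}\bigl(X_-^{(u,v)} \leq (1-\eps)\mathbf{E}[X_-^{(u,v)}]\bigr) \leq e^{-\eps^2 \mathbf{E}[X_-^{(u,v)}]/4} \leq \tfrac{1}{n^{\eps^2 C(\eps)/8}},\]
and the upper tail is symmetric, using the second inequality in Theorem \ref{thm: chernoff}.

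The bounds for $Y^{(v,u)}$ follow by swapping the roles of $u$ and $v$: $\mathbf{E}[X_-^{(v,u)}] = |N_u^- \cap N_v^+| \cdot \tfrac{\lceil C(\eps)\log n \rceil}{|N_v^+|}$, and the second hypothesis $|N_u^- \cap N_v^+| \geq \tfrac{1}{2}|N_v^+|$ again forces this expectation above $\tfrac{1}{2}\lceil C(\eps)\log n\rceil$, so the same Chernoff calculation gives the same tail bound. There is essentially no technical obstacle: the only checks needed are that without-replacement indicators satisfy the negative-correlation requirement of Theorem \ref{thm: chernoff} (standard via negative association) and that $C(\eps) = 32/\eps^2$ makes the exponent $\eps^2 C(\eps)/8 = 4$, so each of the four tail events has probability at most $1/n^{\eps^2 C(\eps)/8}$, matching the stated bound.
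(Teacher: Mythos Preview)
Your proposal is correct and follows exactly the approach the paper intends: the paper's own proof is a single sentence stating that the argument is ``similar to the above, but instead using the assumption that $|N_u^+ \cap N_v^-| \geq (1/2)|N_u^+|$ and the fact that $\mathbf{E}[X_{-}^{(u,v)}] = \frac{|N_u^+ \cap N_v^-|}{|N_u^+|} \cdot \lceil C(\varepsilon)\log n \rceil$,'' and you have simply written this out in full. Your observation that the displayed conclusion should read $|N_u^+ \cap N_v^-|$ (respectively $|N_u^- \cap N_v^+|$) rather than $|N_u^+ \cap N_v^+|$ is also correct---this is a typo in the statement, confirmed by the way the proposition is invoked later (e.g., in Case~2 of Proposition~\ref{prop: estimate_upper_bd}).
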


\begin{proof}
The proof is similar to the above, but instead using the assumption that $|N_u^+ \cap N_v^-| \geq (1/2)|N_u^+|$ and the fact that $\mathbf{E}[X_{-}^{(u,v)}] = \frac{|N_u^+ \cap N_v^-|}{|N_u^+|} \cdot \lceil C(\varepsilon)\log n \rceil$ by (\ref{eq: neg_exp_val}).  
\end{proof}

\bigskip

\begin{proposition} \label{prop: estimate_upper_bd}
For any $u,v \in V$, the following holds with probability at least $1-\frac{4}{n^{\eps^2C(\eps)/8}}$:
\[\widebar{d}_{uv} \leq c_1(\varepsilon) \cdot d_{uv} + h_1(\varepsilon) \]
and further, since $d_{uv}$ satisfy the triangle inequality, for any $u,v,w$,
\[\widebar{d}_{uv} \leq c_1(\varepsilon) \cdot (d_{uw} + d_{vw}) + h_1(\varepsilon) \]
where $c_1(\varepsilon) \rightarrow 1$ and $h_1(\varepsilon) \rightarrow 0$ as $\varepsilon \rightarrow 0$. In particular, take $c_1(\varepsilon) = (1+\varepsilon)/(1-\varepsilon)$ and $h_1(\varepsilon) = 2\varepsilon/(1-\varepsilon)$.
\end{proposition}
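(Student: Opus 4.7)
Let $A=|N_u^+\cap N_v^+|$, $B=|N_u^+\cap N_v^-|$, and $C=|N_u^-\cap N_v^+|$, so that $|N_u^+|=A+B$, $|N_v^+|=A+C$, and $d_{uv}=(B+C)/(A+B+C)$. By (\ref{eq: first_distance_estimate}),
\[
\bar d_{uv} \;=\; \frac{Y^{(u,v)} + Y^{(v,u)}}{A+B+Y^{(v,u)}}.
\]
The plan is to upper bound the numerator and lower bound the denominator via the concentration results of Propositions \ref{prop: W_well_conc} and \ref{prop: Y_well_conc}, then simplify into the form $c_1(\eps)d_{uv}+h_1(\eps)$.

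First I would observe that, because $|N_u^+|=A+B$, exactly one of $A\ge|N_u^+|/2$ or $B\ge|N_u^+|/2$ holds, so Proposition \ref{prop: W_well_conc} or \ref{prop: Y_well_conc} applies to the sample $S_u$; moreover since $W^{(u,v)}+Y^{(u,v)}=|N_u^+|=A+B$, an $\eps$-concentration of $W^{(u,v)}$ around $A$ automatically gives $|Y^{(u,v)}-B|\le\eps A$ (and symmetrically). The analogous statement holds for $S_v$ via $|N_v^+|=A+C$. A union bound over the (at most) four events used yields the probability $1-4/n^{\eps^2 C(\eps)/8}$. (If $|N_u^+|$ or $|N_v^+|$ is below $\lceil C(\eps)\log n\rceil$, the relevant estimate is exact and these events hold deterministically.)

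Next I would split into cases according to which of $A,B,C$ meet the halfway thresholds. In every case I pick up bounds of the shape $Y^{(u,v)}\le B+\alpha$ and $Y^{(v,u)}\le C+\beta$ where $\alpha,\beta\in\{\eps A,\eps B,\eps C\}$, and $Y^{(v,u)}\ge C-\gamma$ with $\gamma\in\{\eps A,\eps C\}$. In each case the numerator is at most $B+C+\eps(A+B+C)$ and the denominator is at least $(1-\eps)(A+B+C)$, after using $A+B\ge(1-\eps)A+(1-\eps)B$ etc. Splitting
\[
\bar d_{uv} \;\le\; \frac{B+C}{(1-\eps)(A+B+C)} + \frac{\eps(A+B+C)}{(1-\eps)(A+B+C)} \;=\; \frac{d_{uv}}{1-\eps}+\frac{\eps}{1-\eps},
\]
which is dominated by $\tfrac{1+\eps}{1-\eps}d_{uv}+\tfrac{2\eps}{1-\eps}$. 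The tightest case is when both $B\ge|N_u^+|/2$ and $C\ge|N_v^+|/2$ (so $Y^{(u,v)}\le(1+\eps)B$, $Y^{(v,u)}\le(1+\eps)C$ but the denominator is merely $A+B+(1-\eps)C$), which is precisely what forces the factor $c_1(\eps)=(1+\eps)/(1-\eps)$ rather than $1/(1-\eps)$. The second inequality of the proposition is immediate: by Lemma \ref{lem: triangle-in}, $d_{uv}\le d_{uw}+d_{vw}$, so the same bound holds with $d_{uv}$ replaced by $d_{uw}+d_{vw}$.

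The main obstacle will be bookkeeping across the case split so that the worst-case constants $c_1(\eps)$ and $h_1(\eps)$ actually match the claimed values in every case; the case $B,C$ both large while $A$ is small is delicate because the denominator cannot be lowered to $(1-\eps)(A+B+C)$ without shaving $C$ by a factor of $(1-\eps)$, which is what ultimately multiplies the numerator by $(1+\eps)/(1-\eps)$. Once that case is handled, the remaining sub-cases are strictly easier and slot into the same format.
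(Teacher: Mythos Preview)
Your approach is essentially the paper's: assume $|N_v^+|\ge|N_u^+|$, case on which of $A,B,C$ meet the half-neighborhood thresholds, invoke Propositions \ref{prop: W_well_conc}/\ref{prop: Y_well_conc} (union bound over at most four events), and bound numerator and denominator of (\ref{eq: first_distance_estimate}) separately.

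One bookkeeping slip: your uniform claim ``in each case the numerator is at most $B+C+\eps(A+B+C)$'' fails in the case $A\ge|N_u^+|/2$ and $A\ge|N_v^+|/2$ (the paper's Case~1). There the concentration gives $Y^{(u,v)}\le B+\eps A$ and $Y^{(v,u)}\le C+\eps A$, so the numerator is only bounded by $B+C+2\eps A$, and $2\eps A>\eps(A+B+C)$ whenever $A>B+C$ (e.g.\ $B=C=0$). Consequently your intermediate inequality $\bar d_{uv}\le d_{uv}/(1-\eps)+\eps/(1-\eps)$ need not hold in that case. The fix is exactly what the paper does: handle that case on its own, obtaining
\[
\bar d_{uv}\;\le\;\frac{B+C+2\eps A}{(1-\eps)(A+B+C)}\;=\;\frac{(1-2\eps)d_{uv}+2\eps}{1-\eps},
\]
which is still dominated by $\tfrac{1+\eps}{1-\eps}d_{uv}+\tfrac{2\eps}{1-\eps}$. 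So your plan goes through once you drop the attempt at a single uniform numerator bound and keep the case-by-case estimates; this is precisely the paper's argument. Your identification of Case~2 ($B,C$ both large) as the one forcing the multiplicative constant $(1+\eps)/(1-\eps)$ is correct, while Case~1 is what forces the additive $2\eps/(1-\eps)$.
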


\begin{proof}
Assume WLOG that $|N_v^+| \geq |N_u^+|$. Let $A_{uv}$ denote the numerator of $d_{uv}$ and $B_{uv}$ denote the denominator, that is:
\begin{align*}
    A_{uv} &= |N_u^+ \cap N_v^-| + |N_u^- \cap N_v^+| \\
    B_{uv} &= |N_u^+ \cap N_v^-| + |N_u^- \cap N_v^+| + |N_u^+ \cap N_v^+|
\end{align*}
We case on whether or not $|N_u^+ \cap N_v^+| \geq (1/2)|N_u^+|$ and $|N_u^+ \cap N_v^+| \geq (1/2)|N_v^+|$. Note there are three possible cases instead of four, since $|N_u^+| \leq |N_v^+|$ makes it impossible that both $|N_u^+ \cap N_v^+| < (1/2)|N_u^+|$ and $|N_u^+ \cap N_v^+| \geq (1/2)|N_v^+|$ hold. 

\setcounter{case}{0}
\begin{case}
$|N_u^+ \cap N_v^+| \geq (1/2)|N_u^+|$ and $|N_u^+ \cap N_v^+| \geq (1/2)|N_v^+|$.
\end{case}
Then w.h.p. 
\begin{align*}
    \widebar{d}_{uv} &= \frac{|N_u^+|- W^{(u,v)} + |N_v^+| - W^{(v,u)}}{|N_u^+| + |N_v^+| - W^{(v,u)}} \\
    &\leq \frac{|N_u^+| - (1-\varepsilon)|N_u^+ \cap N_v^+| + |N_v^+| - (1-\varepsilon)|N_u^+ \cap N_v^+|}{|N_u^+| + |N_v^+| - (1+\varepsilon)|N_u^+ \cap N_v^+|} \\
    &= \frac{|N_u^+ \cap N_v^-| + |N_u^- \cap N_v^+| + 2\varepsilon |N_u^+ \cap N_v^+|}{|N_u^+ \cap N_v^+| + |N_u^+ \cap N_v^-| + |N_u^- \cap N_v^+| - \varepsilon |N_u^+ \cap N_v^+|} \\
    &= \frac{A_{uv}+2\varepsilon(B_{uv}-A_{uv})}{B_{uv} - \varepsilon(B_{uv}-A_{uv})}\\ 
    &\leq \frac{(1-2\varepsilon)A_{uv} + 2\varepsilon B_{uv}}{(1-\varepsilon)B_{uv}} 
    = \frac{1-2\varepsilon}{1-\varepsilon}\cdot d_{uv} + \frac{2\varepsilon}{1-\varepsilon},
\end{align*}
where we have used (\ref{eq: first_distance_estimate}) in the first line and Proposition \ref{prop: W_well_conc} in the second line. 

\begin{case}
$|N_u^+ \cap N_v^-| \geq \frac{1}{2}|N_u^+|$ and $|N_u^- \cap N_v^+| \geq \frac{1}{2}|N_v^+|$.
\end{case}
Then w.h.p.
\begin{align*}
    \widebar{d}_{uv} &= \frac{Y^{(u,v)} + Y^{(v,u)}}{|N_u^+| + Y^{(v,u)}} \\
    &\leq \frac{(1+\varepsilon)|N_u^+ \cap N_v^-| + (1+\varepsilon)|N_u^- \cap N_v^+|}{|N_u^+|+(1-\varepsilon)|N_u^- \cap N_v^+|} \\
    &\leq \frac{1+\varepsilon}{1-\varepsilon} \cdot d_{uv}
\end{align*}
where we have used (\ref{eq: first_distance_estimate}) in the first line and Proposition \ref{prop: Y_well_conc} in the second line.

\begin{case}
$|N_u^+ \cap N_v^+| \geq (1/2)|N_u^+|$ but $|N_u^+ \cap N_v^+| < (1/2)|N_v^+|$ (i.e., $|N_u^- \cap N_v^+| > (1/2)|N_v^+|$).
\end{case}
Then w.h.p. 
\begin{align*}
    \widebar{d}_{uv} &= \frac{|N_u^+|-W^{(u,v)} + Y^{(v,u)}}{|N_u^+|+Y^{(v,u)}} \\
    &\leq \frac{|N_u^+|-(1-\varepsilon)|N_u^+ \cap N_v^+| + (1+\varepsilon)|N_u^- \cap N_v^+|}{|N_u^+| + (1-\varepsilon)|N_u^- \cap N_v^+|} \\
    &\leq \frac{|N_u^+ \cap N_v^-| + \varepsilon |N_u^+ \cap N_v^+| + (1+\varepsilon)|N_u^- \cap N_v^+|}{(1-\varepsilon)B_{uv}} \\
    &\leq \frac{(1+\varepsilon)A_{uv}}{(1-\varepsilon)B_{uv}} + \frac{\varepsilon (B_{uv}-A_{uv})}{(1-\varepsilon)B_{uv}} \\
    &\leq \frac{1+\varepsilon}{1-\varepsilon}\cdot d_{uv} + \frac{\varepsilon}{1-\varepsilon}.
\end{align*}
where we have used (\ref{eq: first_distance_estimate}) in the first line and Propositions \ref{prop: W_well_conc} and \ref{prop: Y_well_conc} in the second line. 
\end{proof}

\bigskip

\begin{proposition} \label{prop: estimates_lower_bd}
For any $u,v \in V$, the following holds with probability at least $1-\frac{4}{n^{\eps^2C(\eps)/8}}$:
\[d_{uv} \leq c_2(\varepsilon)\cdot\widebar{d}_{uv} + h_2(\varepsilon) \]
where $c_2(\varepsilon) \rightarrow 1$ and $h_2(\varepsilon) \rightarrow 0$ as $\varepsilon \rightarrow 0$. In particular, take $c_2(\varepsilon) = (1+\varepsilon)/(1-\varepsilon)$ and $h_2(\varepsilon) = 2\varepsilon/(1-\varepsilon)$.
\end{proposition}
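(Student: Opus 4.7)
The plan is to mirror the case analysis carried out in the proof of Proposition \ref{prop: estimate_upper_bd}, but now apply the Chernoff bounds of Propositions \ref{prop: W_well_conc} and \ref{prop: Y_well_conc} in the opposite direction: use the bounds that make the numerator of $\widebar d_{uv}$ (given by formula (\ref{eq: first_distance_estimate})) as small as possible and the denominator as large as possible, so as to lower-bound $\widebar d_{uv}$ by an affine function of $d_{uv}$. As before, assume WLOG that $|N_v^+| \geq |N_u^+|$ and retain the shorthand $A_{uv} = |N_u^+\cap N_v^-| + |N_u^-\cap N_v^+|$ and $B_{uv} = A_{uv} + |N_u^+\cap N_v^+|$, so that $d_{uv} = A_{uv}/B_{uv}$. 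Split into the same three cases as in the proof of Proposition \ref{prop: estimate_upper_bd}; in each case, a union bound over the at most four concentration events used contributes to the $4/n^{\eps^2 C(\eps)/8}$ failure probability.

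In Case 1 ($|N_u^+\cap N_v^+|\geq \tfrac12 |N_u^+|$ and $|N_u^+\cap N_v^+|\geq \tfrac12|N_v^+|$), apply the upper bound $W^{(u,v)}\leq (1+\eps)|N_u^+\cap N_v^+|$ in the numerator and the lower bound $W^{(v,u)}\geq (1-\eps)|N_u^+\cap N_v^+|$ in the denominator. Using $|N_u^+|+|N_v^+| = B_{uv}+|N_u^+\cap N_v^+|$, the numerator becomes at least $A_{uv} - 2\eps(B_{uv}-A_{uv})$ and the denominator at most $B_{uv}+\eps(B_{uv}-A_{uv})$, which after dividing through by $B_{uv}$ yields $\widebar d_{uv}\geq (d_{uv}-2\eps(1-d_{uv}))/(1+\eps(1-d_{uv}))\geq (d_{uv}-2\eps)/(1+\eps)$. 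Rearranging gives $d_{uv}\leq (1+\eps)\widebar d_{uv} + 2\eps$, which is stronger than the claimed bound.

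In Case 2 ($|N_u^+\cap N_v^-|\geq \tfrac12|N_u^+|$ and $|N_u^-\cap N_v^+|\geq \tfrac12|N_v^+|$), Proposition \ref{prop: Y_well_conc} applies to both $Y^{(u,v)}$ and $Y^{(v,u)}$. Using the lower bounds on each $Y$ in the numerator and the upper bound on $Y^{(v,u)}$ in the denominator, together with the identity $|N_u^+|+|N_u^-\cap N_v^+|=B_{uv}$, the ratio simplifies to $(1-\eps)A_{uv}/(B_{uv}+\eps|N_u^-\cap N_v^+|)\geq (1-\eps)d_{uv}/(1+\eps)$, so $d_{uv}\leq \tfrac{1+\eps}{1-\eps}\widebar d_{uv}$ with no additive term. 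In Case 3 ($|N_u^+\cap N_v^+|\geq \tfrac12|N_u^+|$ but $|N_u^-\cap N_v^+|>\tfrac12|N_v^+|$), combine the upper bound on $W^{(u,v)}$ from Proposition \ref{prop: W_well_conc} (which controls $|N_u^+|-W^{(u,v)}$ from below) with the lower and upper bounds on $Y^{(v,u)}$ from Proposition \ref{prop: Y_well_conc}, and bound $|N_u^+\cap N_v^+|$ and $|N_u^-\cap N_v^+|$ by $B_{uv}-A_{uv}\leq B_{uv}$ to obtain $\widebar d_{uv}\geq (d_{uv}-\eps)/(1+\eps)$, giving $d_{uv}\leq (1+\eps)\widebar d_{uv}+\eps$.

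In all three cases the resulting bound is at least as strong as $d_{uv}\leq \tfrac{1+\eps}{1-\eps}\widebar d_{uv} + \tfrac{2\eps}{1-\eps}$, since $\tfrac{1+\eps}{1-\eps}\geq 1+\eps$ and $\tfrac{2\eps}{1-\eps}\geq 2\eps$ for $\eps\in(0,1)$; this establishes the proposition with $c_2(\eps)=(1+\eps)/(1-\eps)$ and $h_2(\eps)=2\eps/(1-\eps)$, both of which tend to $1$ and $0$, respectively, as $\eps\to 0$. The main technical difficulty, as in the companion proposition, lies in Case 1: the numerator contains two estimator errors (from both $W^{(u,v)}$ and $W^{(v,u)}$) tied to the same quantity $|N_u^+\cap N_v^+|$, and the denominator contains a further $W^{(v,u)}$ error, so one must carefully track the $\pm\eps$ contributions and use the identity relating $|N_u^+|+|N_v^+|$ to $B_{uv}$ to avoid losing an extra factor. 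Cases 2 and 3 are comparatively routine.
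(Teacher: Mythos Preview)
Your proposal is correct and follows precisely the approach the paper indicates: mirror the three-case analysis of Proposition~\ref{prop: estimate_upper_bd}, applying the Chernoff bounds from Propositions~\ref{prop: W_well_conc} and~\ref{prop: Y_well_conc} in the opposite direction to lower-bound $\widebar d_{uv}$, then rearrange. The paper's own proof is simply the one-line remark that the argument is identical to that of Proposition~\ref{prop: estimate_upper_bd} with the other side of the Chernoff bound, so you have in fact spelled out what the paper leaves implicit. One small expository slip: in Case~1 you mention only the upper bound on $W^{(u,v)}$ for the numerator, but your computation (correctly) uses the upper bound on $W^{(v,u)}$ there as well; this does not affect the failure probability count, which is still at most $4/n^{\eps^2 C(\eps)/8}$.
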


\begin{proof}
The proof follows as that of Proposition \ref{prop: estimate_upper_bd}, 
but using the other side of the Chernoff bound.
\end{proof}

We combine the last two propositions to show our sampling leads
to an approximate triangle inequality. 

\begin{proposition} \label{prop: tri_single_triplet}
For any $u,v,w \in V$, the following holds with probability at least $1-\frac{12}{n^{\eps^2C(\eps)/8}}$:
\[\widebar{d}_{uv} \leq c_3(\varepsilon)\cdot(\widebar{d}_{uw} + \widebar{d}_{vw}) + h_3(\varepsilon) \]
where $c_3(\varepsilon) \rightarrow 1$ and $h_3(\varepsilon) \rightarrow 0$ as $\varepsilon \rightarrow 0$. In particular, take 
\[c_3(\varepsilon) = \left(\frac{1+\varepsilon}{1-\varepsilon}\right)^2 \text{ and } \quad
h_3(\varepsilon) = \left(\frac{2\varepsilon}{1-\varepsilon}\right)\left(1+\frac{2(1+\varepsilon)}{1-\varepsilon}\right).\]
\end{proposition}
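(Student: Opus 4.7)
The plan is to chain together the two one-sided comparison results already established, namely Propositions \ref{prop: estimate_upper_bd} and \ref{prop: estimates_lower_bd}, and then use the triangle inequality that $d$ is known to satisfy (Lemma \ref{lem: triangle-in}) as the bridge between $\widebar{d}_{uv}$ and $\widebar{d}_{uw} + \widebar{d}_{vw}$.

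Concretely, I would first apply the triangle-inequality version of Proposition \ref{prop: estimate_upper_bd} to the pair $u,v$, routed through $w$, obtaining
\[
\widebar{d}_{uv} \leq c_1(\varepsilon)\bigl(d_{uw} + d_{vw}\bigr) + h_1(\varepsilon)
\]
with probability at least $1 - 4/n^{\eps^2 C(\eps)/8}$. Then I would apply Proposition \ref{prop: estimates_lower_bd} separately to the pairs $(u,w)$ and $(v,w)$ to substitute the true distances $d_{uw}$ and $d_{vw}$ by their estimates, each step succeeding with probability at least $1 - 4/n^{\eps^2 C(\eps)/8}$. Plugging these in gives
\[
\widebar{d}_{uv} \leq c_1(\varepsilon)\Bigl(c_2(\varepsilon)\widebar{d}_{uw} + h_2(\varepsilon) + c_2(\varepsilon)\widebar{d}_{vw} + h_2(\varepsilon)\Bigr) + h_1(\varepsilon),
\]
which simplifies to
\[
\widebar{d}_{uv} \leq c_1(\varepsilon)c_2(\varepsilon)\bigl(\widebar{d}_{uw} + \widebar{d}_{vw}\bigr) + 2c_1(\varepsilon)h_2(\varepsilon) + h_1(\varepsilon).
\]

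It remains to identify the constants with those in the statement. Since $c_1(\varepsilon) = c_2(\varepsilon) = (1+\varepsilon)/(1-\varepsilon)$ and $h_1(\varepsilon) = h_2(\varepsilon) = 2\varepsilon/(1-\varepsilon)$, we get $c_1(\varepsilon)c_2(\varepsilon) = \bigl((1+\varepsilon)/(1-\varepsilon)\bigr)^2 = c_3(\varepsilon)$ and
\[
2c_1(\varepsilon)h_2(\varepsilon) + h_1(\varepsilon) = \frac{2\varepsilon}{1-\varepsilon}\left(1 + \frac{2(1+\varepsilon)}{1-\varepsilon}\right) = h_3(\varepsilon),
\]
exactly as claimed.

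Finally, I would take a union bound over the three bad events (one invocation of Proposition \ref{prop: estimate_upper_bd}, two invocations of Proposition \ref{prop: estimates_lower_bd}), each failing with probability at most $4/n^{\eps^2 C(\eps)/8}$, giving the claimed overall failure probability of $12/n^{\eps^2 C(\eps)/8}$. The only potential subtlety is making sure the constants line up in the algebra; there is no serious obstacle since the heavy lifting was already done in the two preceding propositions, and the triangle inequality for the exact metric $d$ is available from Lemma \ref{lem: triangle-in}.
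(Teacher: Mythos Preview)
Your proposal is correct and follows exactly the approach the paper intends: the paper's proof is the single sentence ``This follows directly from Propositions \ref{prop: estimate_upper_bd} and \ref{prop: estimates_lower_bd},'' and you have simply spelled out that chaining, including the algebra verifying $c_1c_2 = c_3$ and $2c_1h_2 + h_1 = h_3$ and the union bound over the three events yielding the $12/n^{\eps^2 C(\eps)/8}$ failure probability.
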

\begin{proof}
This follows directly from Propositions \ref{prop: estimate_upper_bd} and \ref{prop: estimates_lower_bd}.
\end{proof}

Then, we take a union bound to show the approximate triangle inequality holds globally for all pairs with high probability.

\begin{proposition} \label{prop: intermediate_tri}
Fix $\varepsilon > 0$. An approximate triangle inequality $\widebar{d}_{uv} \leq c_3(\varepsilon)\cdot(\widebar{d}_{uw} + \widebar{d}_{vw}) + h_3(\varepsilon)$ holds for all triplets $u,v,w$ \textbf{simultaneously} with probability at least $1-O\left(\frac{1}{n}\right)$.
\end{proposition}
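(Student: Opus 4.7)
The plan is to apply a union bound over all triplets, using the per-triplet failure probability from Proposition \ref{prop: tri_single_triplet} together with the specific choice of the sampling constant $C(\eps) = 32/\eps^2$.

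First I would note that the number of ordered (or unordered) triplets $(u,v,w) \in V^3$ is $O(n^3)$. For each fixed triplet, Proposition \ref{prop: tri_single_triplet} guarantees that the approximate triangle inequality $\widebar{d}_{uv} \le c_3(\eps)(\widebar{d}_{uw}+\widebar{d}_{vw}) + h_3(\eps)$ fails with probability at most $12/n^{\eps^2 C(\eps)/8}$. Plugging in $C(\eps) = 32/\eps^2$, the exponent becomes $\eps^2 \cdot (32/\eps^2)/8 = 4$, so the per-triplet failure probability is at most $12/n^4$.

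A direct union bound over all $O(n^3)$ triplets then yields a total failure probability of at most $O(n^3) \cdot 12/n^4 = O(1/n)$, giving the claimed simultaneous guarantee. The only subtlety to address is that the randomness across different triplets is not independent: the samples $S_u, S_v, S_w$ that define $\widebar{d}_{uv}, \widebar{d}_{uw}, \widebar{d}_{vw}$ are shared across many triplets. However, this is irrelevant for the union bound, which does not require independence of the bad events. Propositions \ref{prop: estimate_upper_bd} and \ref{prop: estimates_lower_bd} (from which Proposition \ref{prop: tri_single_triplet} is derived) each invoke Chernoff-Hoeffding only on the four events associated with the samples $S_u$ and $S_v$ (with a per-event failure probability of $n^{-\eps^2 C(\eps)/8} = n^{-4}$), so absorbing all the constant factors and taking the union over triplets preserves the $O(1/n)$ bound.

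I do not anticipate a genuine obstacle here: this is a routine union-bound computation made to work by the slack in the choice of $C(\eps)$. The only mild care needed is in bookkeeping the factor of $12$ from Proposition \ref{prop: tri_single_triplet} and confirming that $\eps^2 C(\eps)/8 = 4 > 3$, which is precisely what gives the probability headroom to sum over $\binom{n}{3} = O(n^3)$ triplets while still leaving a $1/n$ factor in the final bound.
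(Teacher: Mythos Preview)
Your proposal is correct and follows essentially the same argument as the paper: apply Proposition~\ref{prop: tri_single_triplet} to get a per-triplet failure probability of $12/n^{\eps^2 C(\eps)/8} = 12/n^4$, then union bound over the at most $n^3$ triplets to obtain failure probability $12/n = O(1/n)$. Your additional remark that the union bound does not require independence of the bad events is correct and worth noting, though the paper's proof leaves this implicit.
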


\begin{proof}
The triangle inequality fails for an arbitrary triplet $u,v,w$ with probability at most $\frac{12}{n^{\eps^2C(\eps)/8}}$ by Proposition \ref{prop: tri_single_triplet}. As there are at most $n^3$ triplets, the triangle inequality  fails on at least one triplet with probability at most $\frac{12}{n^{\eps^2\frac{C(\eps)}{8} - 3}} = \frac{12}{n}$ since $C(\eps) = 32/\eps^2$. 
\end{proof}

\subsection{Post-processed estimates for the correlation metric} \label{sec: post_processed_estimates}

Now we define our final estimates $\widetilde{d}_{uv}$:
\[\widetilde{d}_{uv} =
  \begin{cases} 
      0 & \widebar{d}_{uv} \leq 2h_3(\varepsilon) \text{ and }(u,v) \in E^+  \\
      1 & \widebar{d}_{uv} \geq 1/(2c_3(\varepsilon)+1) \text{ and }(u,v) \in E^- \\
      \widebar{d}_{uv} & \text{otherwise}
   \end{cases}.
\]

In order to show that our algorithm is successful using $\widetilde{d}_{uv}$ in place of the exact distances $d_{uv}$, we need to demonstrate that w.h.p. (1) $\widetilde{d}_{uv}$ satisfy an approximate triangle inequality (Proposition \ref{prop: pseudo_triangle}), and (2) the fractional cost of $\widetilde{d}_{uv}$ can be compared to \textsf{OPT} (Proposition \ref{prop: pseudo_vs_opt}). In particular, rounding the estimates to $\widetilde{d}$ allows us to trade a small loss in the approximate triangle inequality that $\widebar{d}$ satisfied for the ability to bound the fractional cost.  

\begin{proposition} \label{prop: pseudo_triangle}
Fix $\varepsilon > 0$. An approximate triangle inequality holds for all triplets $u,v,w$ simultaneously with probability at least $1-O(1/n)$:
\[\widetilde{d}_{uv} \leq c_4(\varepsilon) \cdot \left(\widetilde{d}_{uw} + \widetilde{d}_{vw} \right) + h_4(\varepsilon),\]
where $h_4(\eps) \rightarrow 0$ and $c_4(\eps) \rightarrow 3$ as $\eps \rightarrow 0$. In particular, take $h_4(\eps) = (4c_3(\eps) + 1)(2c_3(\eps)+1)h_3(\eps)$ and $c_4(\eps) = (2c_3(\eps)+1)c_3(\eps)$. 
\end{proposition}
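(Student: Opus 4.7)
The plan is to reduce this to the approximate triangle inequality we already have for the initial estimates $\bar{d}$ (Proposition \ref{prop: intermediate_tri}), by carefully relating $\widetilde{d}$ to $\bar{d}$ in both directions. The post-processing has been engineered so that these two relationships hold with exactly the slack needed, and the only probabilistic input is Proposition \ref{prop: intermediate_tri}, so we condition on the event it guarantees (which fails with probability $O(1/n)$) and work deterministically from there.

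The first step is a one-sided bound of the form $\bar{d}_{uw} \le \widetilde{d}_{uw} + 2h_3(\varepsilon)$, established by checking the three branches of the post-processing rule: if $\widetilde{d}_{uw}=0$ then by definition $\bar{d}_{uw}\le 2h_3(\varepsilon)$; if $\widetilde{d}_{uw}=1$ then trivially $\bar{d}_{uw}\le 1$; and if $\widetilde{d}_{uw}=\bar{d}_{uw}$ there is nothing to check. The second step is an opposite-direction bound $\widetilde{d}_{uv} \le (2c_3(\varepsilon)+1)\bar{d}_{uv}$ whenever $\widetilde{d}_{uv}>0$, again by checking the branches: the only nontrivial case is $\widetilde{d}_{uv}=1$, which forces $(u,v)\in E^-$ and $\bar{d}_{uv}\ge 1/(2c_3(\varepsilon)+1)$, exactly matching the required multiplicative slack.

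With these two deterministic inequalities in hand, fix any triplet $u,v,w$. If $\widetilde{d}_{uv}=0$ the claim holds trivially since the right-hand side is nonnegative. Otherwise, chain the bounds: $\widetilde{d}_{uv} \le (2c_3(\varepsilon)+1)\bar{d}_{uv} \le (2c_3(\varepsilon)+1)\bigl[c_3(\varepsilon)(\bar{d}_{uw}+\bar{d}_{vw})+h_3(\varepsilon)\bigr]$ by Proposition \ref{prop: intermediate_tri}, and then replace each $\bar{d}$ on the right by $\widetilde{d}+2h_3(\varepsilon)$. Collecting terms yields a multiplicative constant $(2c_3(\varepsilon)+1)c_3(\varepsilon)=c_4(\varepsilon)$ in front of $\widetilde{d}_{uw}+\widetilde{d}_{vw}$ and an additive constant $(2c_3(\varepsilon)+1)\bigl(4c_3(\varepsilon)+1\bigr)h_3(\varepsilon)=h_4(\varepsilon)$, as stated. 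Since $c_3(\varepsilon)\to 1$ and $h_3(\varepsilon)\to 0$ as $\varepsilon\to 0$, we verify $c_4(\varepsilon)\to 3$ and $h_4(\varepsilon)\to 0$.

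The main obstacle — really the only subtle point — is ensuring the two thresholds $2h_3(\varepsilon)$ and $1/(2c_3(\varepsilon)+1)$ used in the post-processing are tuned precisely so that the two lemmas above go through with matching constants; the analysis has to simultaneously absorb both directions of slack without blowing up either coefficient. A minor remark is that the step rounding a negative edge up to $1$ is only valid because the underlying $\bar{d}_{uv}$ is bounded below by the threshold, so the multiplicative deterioration is exactly $2c_3(\varepsilon)+1$ and not worse; this is why $\widetilde{d}_{uv}\in\{\bar{d}_{uv},0,1\}$ rather than some other choice.
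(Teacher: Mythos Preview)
Your proposal is correct and follows essentially the same approach as the paper: establish the two one-sided comparisons $\widetilde{d}_{uv}\le(2c_3(\eps)+1)\bar{d}_{uv}$ and $\bar{d}_{uw}\le\widetilde{d}_{uw}+2h_3(\eps)$ by inspecting the three post-processing branches, then chain them through the approximate triangle inequality for $\bar{d}$ from Proposition~\ref{prop: intermediate_tri}. The only cosmetic difference is that you treat $\widetilde{d}_{uv}=0$ as a separate trivial case, whereas the paper observes that $\widetilde{d}_{uv}\le(2c_3(\eps)+1)\bar{d}_{uv}$ already holds in that case too (since $\bar{d}_{uv}\ge 0$), so handles all cases uniformly.
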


\begin{proposition} \label{prop: pseudo_vs_opt}
Fix $0 < \eps < 0.03$. For $D(\varepsilon) = 2c_3(\eps)$, the following inequality holds with probability at least $1-O(1/n)$:
\[\sum_{v \in N_u^+} \widetilde{d}_{uv} + \sum_{v \in N_u^-} (1- \widetilde{d}_{uv}) \leq D(\varepsilon) \cdot \left(\sum_{v \in N_u^+} d_{uv} + \sum_{v \in N_u^-} (1- d_{uv})\right) \leq 8 \cdot D(\varepsilon) \textsf{OPT}.\]
\end{proposition}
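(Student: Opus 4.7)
The second inequality is immediate from Lemma \ref{lem: main_lem}: since $\widehat{d}_{uv} = d_{uv}$ for $(u,v) \in E^+$ and $\widehat{d}_{uv} = 1 - d_{uv}$ for $(u,v) \in E^-$, the middle expression equals $\sum_{v \in V} \widehat{d}_{uv}$, which is at most $8\cdot\textsf{OPT}$. For the first inequality, my plan is to prove a \emph{pointwise} bound for every pair $\{u,v\}$---conditioned on Propositions \ref{prop: estimate_upper_bd} and \ref{prop: estimates_lower_bd} holding for that pair---and then sum and union bound. Specifically, I aim to show that for $(u,v) \in E^+$, $\widetilde{d}_{uv} \leq D(\eps) \cdot d_{uv}$, and for $(u,v) \in E^-$, $1 - \widetilde{d}_{uv} \leq D(\eps) \cdot (1-d_{uv})$; summing over $v \in V$ then yields the first inequality for each $u$.

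For the positive-edge case, if $\widebar{d}_{uv} \leq 2h_3(\eps)$ the rounding sets $\widetilde{d}_{uv} = 0$ and the bound is trivial. Otherwise $\widetilde{d}_{uv} = \widebar{d}_{uv} > 2h_3(\eps) \geq 2h_1(\eps)$ (since $h_3(\eps) = h_1(\eps)(1+2c_1(\eps)) \geq h_1(\eps)$), so rearranging Proposition \ref{prop: estimate_upper_bd} gives $d_{uv} \geq (\widebar{d}_{uv} - h_1(\eps))/c_1(\eps) \geq \widebar{d}_{uv}/(2c_1(\eps))$, hence $\widetilde{d}_{uv} \leq 2c_1(\eps)\, d_{uv} \leq 2c_3(\eps)\, d_{uv}$, using $c_1(\eps) \leq c_1(\eps)^2 = c_3(\eps)$ since $c_1(\eps) \geq 1$. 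Intuitively, the threshold $2h_3(\eps)$ was tuned precisely so that whenever rounding does not set $\widetilde{d}_{uv}=0$, the additive error in Proposition \ref{prop: estimate_upper_bd} is dwarfed by $\widebar{d}_{uv}$ itself and can be absorbed into a constant multiplicative loss.

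For the negative-edge case, if $\widebar{d}_{uv} \geq 1/(2c_3(\eps)+1)$ then $\widetilde{d}_{uv} = 1$ and the left side vanishes. Otherwise $\widetilde{d}_{uv} = \widebar{d}_{uv} < 1/(2c_3(\eps)+1)$, so plugging the bound $d_{uv} \leq c_2(\eps)\widebar{d}_{uv} + h_2(\eps)$ from Proposition \ref{prop: estimates_lower_bd} into the target inequality $1 - \widebar{d}_{uv} \leq 2c_3(\eps)(1-d_{uv})$, it suffices to show
\[
\widebar{d}_{uv}\bigl(2c_2(\eps)c_3(\eps)-1\bigr) \leq 2c_3(\eps) - 1 - 2c_3(\eps)\,h_2(\eps).
\]
The left side is maximized at $\widebar{d}_{uv} = 1/(2c_3(\eps)+1)$; clearing denominators reduces the condition to $c_2(\eps) + h_2(\eps)(2c_3(\eps)+1) \leq 2c_3(\eps)$. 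I expect this algebraic verification to be the main obstacle: as $\eps \to 0$ the left side tends to $1$ and the right side to $2$, so the inequality is comfortable for small $\eps$, but a direct numerical substitution is needed to confirm it throughout $\eps \in (0, 0.03)$, and this is exactly where the hypothesis on $\eps$ enters.

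Finally, Propositions \ref{prop: estimate_upper_bd} and \ref{prop: estimates_lower_bd} each fail for a given pair with probability at most $4/n^{\eps^2 C(\eps)/8} = 4/n^{4}$ since $C(\eps) = 32/\eps^2$. A union bound over the $O(n^2)$ pairs shows that both propositions hold simultaneously for every pair with probability $1 - O(1/n^2)$, so the pointwise bound derived above holds for all pairs with probability at least $1 - O(1/n)$. Summing the pointwise bounds over $v$ for each fixed $u$ then completes the proof.
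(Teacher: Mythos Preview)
Your proposal is correct and follows the paper's strategy almost exactly: both prove the first inequality pointwise (showing $\widetilde{d}_{uv} \leq D(\eps)\,d_{uv}$ on positive edges and $1-\widetilde{d}_{uv} \leq D(\eps)(1-d_{uv})$ on negative edges) and then invoke Lemma~\ref{lem: main_lem} for the second inequality, with the probability statement coming from a union bound over Propositions~\ref{prop: estimate_upper_bd} and~\ref{prop: estimates_lower_bd}. Your treatment of the positive-edge case is the same as the paper's (the paper just relaxes $c_1,h_1$ to $c_3,h_3$ before computing).

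The one organizational difference is in the negative-edge case. The paper splits into two subcases according to whether $\widebar{d}_{uv} \geq h_3(\eps)$ or not: in the first it absorbs the additive error multiplicatively to get $\widebar{d}_{uv} \geq d_{uv}/(c_3+1)$ and then uses $\widebar{d}_{uv} \leq 1/(2c_3+1)$ to conclude $1-\widebar{d}_{uv} \leq 2(1-d_{uv})$; in the second it shows $d_{uv} \leq (c_3+1)h_3 \leq 1/2$ (this is where the bound $\eps < 0.03$ enters for the paper) so that $1-d_{uv} \geq 1/2 \geq (1-\widebar{d}_{uv})/2$. You instead handle both regimes at once by reducing to the single inequality $c_2 + h_2(2c_3+1) \leq 2c_3$, which you correctly note holds comfortably on $(0,0.03)$ (in fact it holds up to roughly $\eps \approx 0.2$, so your route uses the hypothesis on $\eps$ less tightly than the paper does). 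Note too that your displayed inequality already forces $1 - c_2\widebar{d}_{uv} - h_2 > 0$ on the relevant range (divide through by $2c_3+1$), so the ``it suffices'' step is justified without an extra sign check. Your unified argument is a bit slicker; the paper's case split is perhaps more transparent about why the thresholds were chosen.
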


\bigskip

\begin{proof}[Proof of Proposition \ref{prop: pseudo_triangle}]
By Proposition \ref{prop: intermediate_tri}, 
\begin{equation} \label{eq: intermediate_tri}
\widebar{d}_{uv} \leq c_3(\varepsilon)\cdot(\widebar{d}_{uw} + \widebar{d}_{vw}) + h_3(\varepsilon) 
\end{equation}
holds simultaneously for all triplets $u,v,w$ w.h.p. We need to show that a similar inequality holds when we replace the intermediate estimates $\widebar{d}(\cdot)$ with the post-processed estimates $\widetilde{d}(\cdot)$. First observe that we round $\widebar{d}_{uv}$ up to 1 only when $\widebar{d}_{uv} \geq 1/(2c_3(\varepsilon)+1)$. So in this case
\[\widetilde{d}_{uv} = 1 = \left(1/(2c_3(\varepsilon)+1)\right)(2c_3(\varepsilon)+1) \leq \widebar{d}_{uv} \cdot (2c_3(\varepsilon)+1).\]
So it always true that
\begin{equation} \label{eq: pseudo_tri_step1}
\widetilde{d}_{uv} \leq \widebar{d}_{uv} \cdot (2c_3(\varepsilon)+1)
\end{equation}
since this inequality also holds for $\widetilde{d}_{uv} = 0$ and for $\widetilde{d}_{uv} =\widebar{d}_{uv}$ (using $\widebar{d}_{uv} \geq 0$ by Fact \ref{fact: init_estimate_bounds}). 
Next observe that we round $\widebar{d}_{uw}$ down to 0 only when $\widebar{d}_{uw} \leq 2h_3(\eps)$. So it is always true that 
\begin{equation} \label{eq: pseudo_tri_step2}
\widebar{d}_{uw} - 2h_3(\eps) \leq \widetilde{d}_{uw}.
\end{equation}
This is because (1) if $\widebar{d}_{uw}$ is rounded up to $\widetilde{d}_{uw} = 1$, then the inequality holds since $\widebar{d}_{uw} \leq 1$ (Fact \ref{fact: init_estimate_bounds}), (2) if $\widetilde{d}_{uw} =\widebar{d}_{uw} $, then the inequality holds automatically, and (3) if $\widebar{d}_{uw}$ is rounded down to $\widetilde{d}_{uw} = 0$ then $\widebar{d}_{uw} \leq 2h_3(\eps)$, so again the inequality holds. \\

Putting inequalities (\ref{eq: intermediate_tri}), (\ref{eq: pseudo_tri_step1}), and (\ref{eq: pseudo_tri_step2}) together gives:
\begin{align*}
\widetilde{d}_{uv} &\leq (2c_3(\eps)+1) \widebar{d}_{uv}\\
&\leq(2c_3(\eps)+1)c_3(\varepsilon)(\widebar{d}_{uw} + \widebar{d}_{vw}) + (2c_3(\eps)+1)h_3(\varepsilon) \\
&\leq (2c_3(\eps)+1)c_3(\varepsilon)\left(\widetilde{d}_{uw} + \widetilde{d}_{vw} + 4h_3(\eps)\right) + (2c_3(\eps)+1)h_3(\varepsilon)\\
&\leq c_4(\eps)\left(\widetilde{d}_{uw} + \widetilde{d}_{vw} \right) + h_4(\eps)
\end{align*}
where $c_4(\eps) = (2c_3(\eps)+1)c_3(\eps)$ and $h_4(\eps) = (4c_3(\eps) + 1)(2c_3(\eps) + 1)h_3(\eps)$. In particular, $c_4(\eps) \rightarrow 3$ and $h_4(\eps) \rightarrow 0$ as $\eps \rightarrow 0$. 

\end{proof}

\bigskip

\begin{proof}[Proof of Proposition \ref{prop: pseudo_vs_opt}]
It suffices to show the bound holds pointwise, e.g., for $v \in N_u^+$, we show that $\widetilde{d}_{uv} \leq D(\varepsilon) \cdot d_{uv}$ and for $v \in N_u^-$, we show that $(1-\widetilde{d}_{uv}) \leq D(\varepsilon)(1-d_{uv})$.\\ 

First consider $v \in N_u^+$. If $\widebar{d}_{uv}$ was rounded down to $\widetilde{d}_{uv} = 0$, then $\widetilde{d}_{uv} \leq D(\varepsilon) d_{uv}$ holds automatically. If $\widebar{d}_{uv}$ was not rounded down, then $\widetilde{d}_{uv} = \widebar{d}_{uv} \geq 2h_3(\varepsilon)$. By Proposition \ref{prop: estimate_upper_bd},  w.h.p.
\[ 2h_3(\varepsilon) \leq \widetilde{d}_{uv} = \widebar{d}_{uv} \leq c_3(\varepsilon)d_{uv} + h_3(\varepsilon)\]
where we have used that $c_1(\varepsilon) \leq c_3(\varepsilon)$ and $h_1(\varepsilon) \leq h_3(\varepsilon)$. So $d_{uv} \geq h_3(\varepsilon)/c_3(\varepsilon)$. In turn, 
\[\widetilde{d}_{uv} = \widebar{d}_{uv} \leq c_3(\varepsilon)d_{uv} + h_3(\varepsilon) \leq c_3(\varepsilon)d_{uv} + c_3(\varepsilon)d_{uv} \leq D(\varepsilon) \cdot d_{uv}   \]
as desired.

 Next consider when $v \in N_u^-$. This case will be more involved. If $\widebar{d}_{uv}$ was rounded up to $\widetilde{d}_{uv} = 1$, then $ 0 = 1-\widetilde{d}_{uv} \leq D(\varepsilon) \cdot (1-d_{uv})$ holds automatically. Otherwise, $\widetilde{d}_{uv} = \widebar{d}_{uv} \leq 1/(2c_3(\varepsilon)+1)$. We consider two cases:

\setcounter{case}{0}
\begin{case}
$h_3(\varepsilon) \leq \widebar{d}_{uv}$.
\end{case}
By Proposition \ref{prop: estimates_lower_bd}, w.h.p.
\[c_3(\varepsilon) \cdot \widebar{d}_{uv} + h_3(\varepsilon) \geq d_{uv} \]
where we have used that $c_2(\varepsilon) \leq c_3(\varepsilon)$ and $h_2(\varepsilon) \leq h_3(\varepsilon)$. 
Now using the assumptions of this case, 
\begin{align}
\widebar{d}_{uv} &\geq \frac{1}{c_3(\eps)} \cdot d_{uv} - \frac{h_3(\eps)}{c_3(\eps)} \notag \\
\widebar{d}_{uv} &\geq \frac{1}{c_3(\eps)} \cdot d_{uv} - \frac{\widebar{d}_{uv}}{c_3(\eps)} \notag \\
\left(1 + \frac{1}{c_3(\eps)} \right) \cdot \widebar{d}_{uv} &\geq \frac{1}{c_3(\eps)} \cdot d_{uv} \notag \\
\widebar{d}_{uv} &\geq \frac{1}{c_3(\eps) + 1} \cdot d_{uv} \label{mult_error}
\end{align}
Since $\widebar{d}_{uv} \leq 1/(2c_3(\eps)+1)$, it holds that 
\[1-\widebar{d}_{uv} \leq 2 \left(1- (c_3(\eps)+1)\widebar{d}_{uv} \right).\]
Now, using (\ref{mult_error}), 
\[1-\widetilde{d}_{uv} = 1-\widebar{d}_{uv} \leq 2(1-d_{uv}) \leq D(\varepsilon) \cdot (1-d_{uv}). \]

\begin{case}
$h_3(\varepsilon) \geq \widebar{d}_{uv}$.
\end{case}
As in the previous case we use Proposition \ref{prop: estimates_lower_bd}. This, along with the assumptions of this case gives, w.h.p.
\begin{align*}
d_{uv} &\leq c_3(\varepsilon) \cdot \widebar{d}_{uv} + h_3(\varepsilon) \\
d_{uv} &\leq (c_3(\eps) +1)h_3(\eps) \\
1 - d_{uv} &\geq 1 - (c_3(\eps) +1)h_3(\eps).
\end{align*}
Now since $c_3(\varepsilon) \rightarrow 1$ and $h_3(\varepsilon) \rightarrow 0$, we have that for small enough $\eps$ ($\eps < 0.03$ suffices), $(c_3(\eps) +1)h_3(\eps) \leq 1/2$. In turn, 
\begin{align*}
1-d_{uv} &\geq 1/2 \\
1-d_{uv} &\geq 1/2(1-\widebar{d}_{uv})
\end{align*}
where we have used that $\widebar{d}_{uv} \geq 0$ (Fact \ref{fact: init_estimate_bounds}). So 
\[1-\widetilde{d}_{uv} = 1-\widebar{d}_{uv} \leq 2(1-d_{uv}) \leq D(\varepsilon) \cdot (1-d_{uv})\]
which concludes the case and the proof.
\end{proof}

\subsection{Approximate triangle inequality suffices}
\label{sec: approx_tri_suff}

In this section, 
we show that instead of inputting a semi-metric to the rounding algorithm 
(Algorithm \ref{KMZ-alg}), 
one can use as input a function that is almost a semi-metric. 
We will call such a function $d$ a \textit{$(\delta_1,\delta_2)$-semi-metric}
if it is a semi-metric except instead of satisfying the 
triangle inequality, it satisfies a $(\delta_1, \delta_2)$-approximate triangle inequality:

\begin{definition}
The function $d: V^2 \rightarrow \mathbb{R}_{\geq 0}$ satisfies a
\textit{$(\delta_1, \delta_2)$-approximate triangle inequality}  when 
$$d_{uv} \leq \delta_1(d_{uw} + d_{wv}) + \delta_2 \hspace{0.3cm} \forall u,v,w \in V.$$
\end{definition}
\begin{algorithm} \label{algo}
\end{algorithm}

\begin{minipage}{14cm}
\noindent \textbf{Input:} $d$ a $(\delta_1, \delta_2)$-semi-metric on $V$

\noindent \textbf{Output: }Clustering $\mathcal{C}$.
\begin{enumerate}
    \item Let $V_0 = V$, $r=r(\delta_1,\delta_2)$, $t=0$.
    \item \textbf{while} ($V_t \neq \emptyset$)
    \begin{itemize}
        \item Find $w_t = \arg \max_{w \in V_t} L_t(w) =  \arg \max_{w \in V_t} \sum_{u \in \text{Ball}(w,r) \cap V_t} {r-d_{uw}}$. 
        \item Create ${C_t = \text{Ball}(w_t,b \cdot r) \cap V_t}$, for $b=b(\delta_1, \delta_2)$.
        \item Set $V_{t+1} = V_t \setminus C_t$ and $t=t+1$.
    \end{itemize}
    \item Return ${\mathcal{C} = (C_0, \dots, C_{t-1})}$.
\end{enumerate}
\end{minipage}

\bigskip 

Recall as in Section \ref{sec: overviews}
that $\widehat{d}_{uv} = d_{uv}$ if $(u,v) \in E^+$,
$\widehat{d}_{uv} = 1 - d_{uv}$ if $(u,v) \in E^-$, 
and $\textsf{ALG}(u,v) = \mathds{1}((u,v) \text{ is in disagreement in }\mathcal{C})$, for $\mathcal{C}$ the clustering returned by Algorithm \ref{algo}.

Let $\eps > 0$ be sufficiently small. By Proposition \ref{prop: pseudo_triangle}, $\widetilde{d}$ satisfies a $(\delta_1, \delta_2)$-approximate triangle inequality, where $\delta_1 = \delta_1(\eps) =  3 + h_4(\eps)$ and $\delta_2 = \delta_2(\eps) = h_4(\eps)$. (This follows by noting that $c_4(\eps) \leq 3 + h_4(\eps)$. Thus the $(\delta_1, \delta_2)$-approximate triangle inequality is weaker than that in Proposition \ref{prop: pseudo_triangle}, but we use the former for ease of computation.)
\begin{lemma}[Analogue of Theorem B.1 in \cite{KMZ19}] \label{thm: constant_rounding_estimates}
 Let $r = r(\delta_1, \delta_2)$ and $b = b(\delta_1, \delta_2)$ be as defined in Appendix \ref{appendix: approx_tri_requirements}.\footnote{For us, $\delta_1(\eps) \rightarrow 3$ and $\delta_2(\eps) \rightarrow 0$ as $\eps \rightarrow 0$. A similar result can be obtained when $\delta_1$ approaches an arbitrary constant, as long as $\delta_2(\eps) \rightarrow 0$ as $\eps \rightarrow 0$.} Let $\mathcal{C}$ be a clustering returned by Algorithm \ref{algo}. 
Then,
$$\textsf{ALG}(u) = \sum_{v} \textsf{ALG}(u,v) \leq \frac{1}{r(\delta_1, \delta_2)} \sum_v \widehat{d}_{uv}$$
where $r(\delta_1, \delta_2) \rightarrow 1/121$.\footnote{If $\delta_1(\eps) \rightarrow 1$, then $r(\delta_1, \delta_2) \rightarrow 1/5$ as $\eps \rightarrow 0$, recovering the rounding result that holds when the exact triangle inequality is satisfied.} 
\end{lemma}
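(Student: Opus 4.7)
The plan is to mimic the proof of Theorem B.1 of Kalhan--Makarychev--Zhou \cite{KMZ19} for the exact-semi-metric rounding algorithm, but carefully track the loss incurred each time we replace a use of the triangle inequality with the $(\delta_1,\delta_2)$-approximate version. Fix $u \in V$, let $C_t$ be the cluster of $\mathcal{C}$ that contains $u$, and let $w_t$ be its center. An edge $(u,v)$ is a disagreement of $\mathcal{C}$ only if (i) $(u,v) \in E^-$ and $v \in C_t$, or (ii) $(u,v) \in E^+$ and $v \notin C_t$. The goal is to show that on average each such disagreement receives at least $r = r(\delta_1,\delta_2)$ units of charge from $\widehat{d}_{uv}$, so that $\textsf{ALG}(u) \le r^{-1} \sum_v \widehat{d}_{uv}$.

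For the in-cluster negative case (i), both $u$ and $v$ lie in $\text{Ball}(w_t, br)$, so by the approximate triangle inequality through $w_t$,
\[
d_{uv} \;\le\; \delta_1\bigl(d_{uw_t} + d_{vw_t}\bigr) + \delta_2 \;\le\; 2\delta_1 b r + \delta_2.
\]
Enforcing $2\delta_1 br + \delta_2 \le 1 - r$ guarantees $\widehat{d}_{uv} = 1 - d_{uv} \ge r$, furnishing the required $r$-charge. For the cross-cluster positive case (ii) I split on whether $v$ was removed in an earlier iteration $t' < t$ or a later iteration $t'' > t$. In the former subcase, the algorithm's choice of $w_{t'}$ over $u$ at step $t'$ yields $L_{t'}(w_{t'}) \ge L_{t'}(u)$; unfolding this inequality and chaining through the approximate triangle inequality produces a lower bound on $d_{uv}$ that is a constant multiple of $r$. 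In the latter subcase (or when $v$ survives into $V_{t+1}$), the maximality of $L_t(w_t)$ over $L_t(w_{t''})$ combined with the position of $v$ relative to $w_{t''}$ plays the analogous role. Collecting the resulting constraints on $(r,b)$ and taking $\delta_1 \to 3$, $\delta_2 \to 0$ leads to the optimal setting $r(\delta_1, \delta_2) \to 1/121$, while $b(\delta_1,\delta_2)$ converges to the corresponding limiting expansion factor.

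The main obstacle is case (ii) in the approximate regime $\delta_1 > 1$. In KMZ's exact-metric setting, the triangle inequality rearranges directly as $d_{uv} \ge d_{vw_t} - d_{uw_t}$, certifying that positive neighbors $v$ placed outside $\text{Ball}(w_t,br)$ are far from $u$ in the metric as well. With $\delta_1 > 1$, the analogous manipulation $d_{uv} \ge (d_{vw_t} - \delta_2)/\delta_1 - d_{uw_t}$ is vacuous--often negative--unless $b$ is taken much larger than in KMZ's algorithm (where $b = 2$). To push the argument through we must simultaneously enlarge $b$ and shrink $r$ so that Step (i) still has slack, and route case (ii) through the potential function $L_t$ at the center rather than through the outer cluster radius. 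The tension among these constraints--together with the extra multiplicative slack that propagates through the $t' < t$ subcase--is what drives $r$ down to the stated limit $1/121$ as $\delta_1 \to 3$, $\delta_2 \to 0$.
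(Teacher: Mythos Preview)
Your negative case (i) matches the paper's Claim~\ref{B3} analogue and is fine. The gap is in case (ii), and it is structural.

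The paper does \emph{not} decompose by ``the iteration $t$ at which $u$ is clustered'' and then split on whether $v$ left earlier or later. Instead it defines, for every iteration $s$ with $u\in V_s$, a per-iteration quantity
\[
\profit_s(u)=\sum_{(u,v)\in\Delta E_s}\bigl(\widehat d_{uv}-r\cdot\textsf{ALG}(u,v)\bigr),
\]
and proves $\profit_s(u)\ge 0$ for each such $s$ separately. The case split is on $d_{uw_s}$ against thresholds $c_1 r$ and $c_2 r$: when $d_{uw_s}\le c_1 r$ or $d_{uw_s}\ge c_2 r$ one gets an edge-by-edge bound $\profit_s(u,v)\ge 0$ (Lemmas~\ref{B4},~\ref{B5}); when $d_{uw_s}\in(c_1 r,c_2 r)$ one only gets a \emph{sum} bound $\profit_s(u)\ge L_s(w_s)-L_s(u)\ge 0$ via Lemmas~\ref{B7} and~\ref{B9}. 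Your sentence ``unfolding this inequality \ldots\ produces a lower bound on $d_{uv}$ that is a constant multiple of $r$'' is exactly what the $L$-comparison does \emph{not} give: in the intermediate regime some individual $\profit_s(u,v)$ can be negative, and it is only the aggregate over $v\in\text{Ball}(w_s,r)$ (the $P_{\mathrm{high}}$ part) that dominates the possible losses $P_{\mathrm{low}}$.

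Your $t''>t$ subcase is also mis-routed. When $u\in C_t$ and $v$ is removed later, the disagreeing edge already lies in $\Delta E_t$, so the relevant center is $w_t$, and the relevant maximality is $L_t(w_t)\ge L_t(u)$, not $L_t(w_t)\ge L_t(w_{t''})$; the future center $w_{t''}$ plays no role. Once you reorganize around the per-iteration $\profit_s(u)$ and the three-range split on $d_{uw_s}$, the constraints you list in your last paragraph do arise, and they are precisely the inequalities collected in Appendix~\ref{appendix: approx_tri_requirements} that force $r\to 1/121$.
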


\begin{proof}
\noindent We will follow the proof of Theorem B.1. However, the change in the parameter $r$ and in how $C_t$ is created (i.e., the radius to $b\cdot r$ instead of $2r$) will cause the cases to split at different points. Note that $b \cdot r < 1$. \\

\noindent Define 
$\text{profit}(u) = \sum_v \widehat{d}_{uv} - r \sum_v \textsf{ALG}(u,v)$ 
    and
$\Delta E_t = \{(u,v) \mid u \in C_t \text{ or }v \in C_t \}.$
Then, 
\[\text{profit}_t(u,v) =  \begin{cases} 
      \widehat{d}_{uv} - r \textsf{ALG}(u,v) & (u,v) \in \Delta E_t \\
      0 & o.w.
   \end{cases}
\]
and
\[\text{profit}_t(u) = \sum_{v \in V_t} \profit_t(u,v) = \sum_{(u,v) \in \Delta E_t, v \in V_t} \widehat{d}_{uv} - r \sum_{(u,v) \in \Delta E_t, v \in V_t} \textsf{ALG}(u,v). \]
Observe that since $\Delta E_t$ are disjoint, 
$\profit(u) = \sum_t \profit_t(u).$ 
Note that if $u \not \in V_t$, $\profit_t(u) = 0$. 

\bigskip

\begin{lemma}[Analogue of Lemma B.2 in \cite{KMZ19}] \label{lem: analogue_b2}
For every $u \in V_t$, $\profit_t(u) \geq 0$. Consequently, $\profit(u) \geq 0$, giving Lemma \ref{thm: constant_rounding_estimates}. 
\end{lemma}
Take $c_1 = c_1(\delta_1,\delta_2)$ and $c_2 = c_2(\delta_1,\delta_2)$ as defined in Appendix \ref{appendix: approx_tri_requirements}. Note $c_2 \cdot r<1$ and $c_1 < c_2$, so the cases in the proof of Lemma \ref{lem: analogue_b2} make sense. 
\begin{proof}
Let $w = w_t$ be the maximizer of $L_t$. 

\begin{case} 
($d_{uw} \in [0,c_1 \cdot r] \cup [c_2 \cdot r,1]$)
\end{case}

\noindent In this case, it suffice to show that $\profit_t(u,v) \geq 0$ for $v \in V_t$ and $(u,v) \in \Delta E_t$. This follows by Claim \ref{B3} and Lemmas \ref{B4} and \ref{B5}.

\begin{case} 
($d_{uw} \in (c_1 \cdot r, c_2 \cdot r)$)
\end{case}

\noindent By Lemmas \ref{B7} and \ref{B9}, $\profit_t(u) \geq L_t(w) - L_t(u) \geq 0$, since $w$ is the maximizer of $L_t$.

\end{proof}

\begin{claim}[Analogue of Claim B.3 in \cite{KMZ19}] \label{B3}
Let $u,v \in V_t$ and $(u,v) \in E^-$. Then $\profit_t(u,v) \geq 0$.
\end{claim}

\begin{proof}
Note the claim holds automatically if 
$\textsf{ALG}(u,v) = 0$ or $(u,v) \not \in \Delta E_t$. 
So we consider when $\textsf{ALG}(u,v) = 1$ and $(u,v) \in \Delta E_t$. 
In this case, it must be that both $u$ and $v$ are in $C_t$, 
since $(u,v) \in E^- \cap \Delta E_t$. 
By definition of $C_t$, $d_{uw} \leq b \cdot r$ and 
$d_{vw} \leq b \cdot r$. 
By the approximate triangle inequality, $d_{uv} \leq 2 \delta_1 b \cdot r+\delta_2$. 
This gives 
\begin{align*}
    \profit_t(u,v) &= \widehat{d}_{uv} - r \cdot \textsf{ALG}(u,v) = 1-d_{uv} - r   \geq 1-2 \delta_1 b\cdot r - \delta_2-r \geq 0,
\end{align*}
where the last inequality follows by the choice of constants.
\end{proof}

\begin{lemma}[Analogue of Lemma B.4 in \cite{KMZ19}] \label{B4}
Let $u \in V_t$ and $d_{uw} \in  [0,c_1 \cdot r]$.
Then $\profit_t(u,v) \geq 0$ for all $v \in V_t$ and $(u,v) \in \Delta E_t$. 
\end{lemma}

\begin{proof}
Due to Claim \ref{B3}, we may assume $(u,v) \in E^+$. 
Since $d_{uw} \leq c_1 \cdot r$ and $c_1 \leq b$, we have $u \in \text{Ball}(w,b \cdot r)$,
thus $u \in C_t$. 
So $(u,v) \in E^+$ is a disagreement if and only if $v \not \in C_t$, 
i.e., if and only if $d_{vw} > b \cdot r$.
By our choice of constants
\begin{align*}
    \profit_t(u,v) &\geq \widehat{d}_{uv} - r \textsf{ALG}(u,v) \geq d_{uv} - r 
    \geq \frac{1}{\delta_1}d_{vw} - d_{uw} - r - \delta_2/\delta_1\\
    & \geq \frac{1}{\delta_1} b \cdot r - c_1 \cdot r -r - \delta_2/\delta_1 \geq 0.
\end{align*}
\end{proof}

\begin{lemma}[Analogue of Lemma B.5 in \cite{KMZ19}] \label{B5}
Let $u \in V_t$ and $d_{uw} \in [c_2 \cdot r,1]$. 
Then $\profit_t(u,v) \geq 0$ for all $v \in V_t$ and $(u,v) \in \Delta E_t$. 
\end{lemma}

\begin{proof}
Due to Claim \ref{B3}, we may assume $(u,v) \in E^+$. 
Since $d_{uw} \geq c_2 \cdot r$ and $c_2 > b$, we have  
$u \not \in \text{Ball}(w, b \cdot r) \cap V_t = C_t$.
In turn $(u,v) \in E^+$ is a disagreement if and only if $v \in C_t$ 
(note $(u,v) \in \Delta E_t$ by assumption), 
i.e., $d_{vw} \leq b \cdot r$. 
By the choice of constants, this gives
\begin{align*}
    \profit_t(u,v) &= d_{uv} - r \geq \frac{1}{\delta_1} d_{uw} - d_{vw} - r - \delta_2/\delta_1
       \geq \frac{1}{\delta_1 }c_2 \cdot r - b \cdot r - r - \delta_2/\delta_1\geq 0.
\end{align*}
\end{proof}

\begin{claim}[Analogue of Claim B.6 in \cite{KMZ19}] \label{B6}
Let $u \in V_t$, 
$d_{uw} \in (c_1 \cdot r,c_2 \cdot r]$,
and $v \in \text{Ball}(w,r) \cap V_t$. 
Then $\profit_t(u,v) \geq r - d_{vw}$. 
\end{claim}

\begin{proof}
\begin{case}
$d_{uw} \in (c_1 \cdot r, b \cdot r]$.
\end{case}
\noindent In this case, $u \in C_t$, since $d_{uw} \leq b \cdot r$. 
Also, since $v \in \text{Ball}(w,r) \cap V_t$ and $b\geq 1$, we have 
$v \in \text{Ball}(w,b \cdot r) \cap V_t$, so $v \in C_t$ as well. 
Thus, if $(u,v) \in E^+$, $ALG(u,v) = 0$, and by the choice of constants
\begin{align*}
    \profit_t(u,v) &= \widehat{d}_{uv} - r \textsf{ALG}(u,v) = d_{uv} - 0 
    \geq \frac{1}{\delta_1}d_{uw} - d_{vw} - \delta_2/\delta_1 \\
    & \geq \frac{1}{\delta_1}c_1 \cdot r - d_{vw} - \delta_2/\delta_1
    = r - d_{vw}.
\end{align*}

On the other hand, if $(u,v) \in E^-$, then 
\begin{align*}
    \profit_t(u,v) &= \widehat{d}(u,v) - r \cdot \textsf{ALG}(u,v) 
     \geq 1-d_{uv} - r 
    \geq 1 - \delta_1d_{uw} - \delta_1d_{vw} - r - \delta_2\\
    &= 1 - \delta_1d_{uw} - r - (\delta_1-1)d_{vw} - d_{vw} - \delta_2 \\
    &= 1 - (\delta_1 b +\delta_1) r - d_{vw} - \delta_2
    \geq r - d_{vw}
\end{align*}

\begin{case}
$d_{uw} \in (b \cdot r,c_2 \cdot r]$.
\end{case}
\noindent If $(u,v) \in E^+$, then 
\begin{align*}
    \profit_t(u,v) &= \widehat{d}_{uv} - r\textsf{ALG}(u,v) 
    \geq d_{uv} - r 
    \geq \frac{1}{\delta_1}d_{uw} - d_{vw} - r - \delta_2/\delta_1\\
    & =  (\frac{1}{\delta_1} b-1)r - d_{vw} - \delta_2/\delta_1
    \geq r - d_{vw}.
\end{align*}

\noindent If $(u,v) \in E^-$, then 
\begin{align*}
    \profit_t(u,v) &= \widehat{d}_{uv} - r\textsf{ALG}(u,v) 
    \geq 1 - d_{uv} - r 
    \geq 1 - \delta_1d_{uw} - \delta_1d_{vw} - r - \delta_2\\
    &\geq 1 - (\delta_1 c_2 + \delta_1)r - d_{vw} - \delta_2
    \geq r - d_{vw},
\end{align*}
where we have used that $\delta_1 \geq 3$ implies $1-\delta_1 < 0$. 
\end{proof}

\begin{lemma}[Analogue of Lemma B.7 in \cite{KMZ19}] \label{B7}
If $d_{uw} \in (c_1 \cdot r,c_2 \cdot r]$, 
then $P_{high}(u) \geq L_t(w)$. 
\end{lemma}

\begin{proof}
By Claim \ref{B6}, $\profit_t(u,v) \geq r - d_{vw}$ for all $v \in \text{Ball}(w,r) \cap V_t$. So
\[P_{high}(u) = \sum_{v \in \text{Ball}(w,r) \cap V_t} \profit_t(u,v) \geq \sum_{v \in \text{Ball}(w,r) \cap V_t} r - d_{vw} = L_t(w). \]
\end{proof}

\begin{claim}[Analogue of Claim B.8 in \cite{KMZ19}] \label{B8}
Let $u,v \in V_t$. Then $\profit_t(u,v) \geq \min(d_{uv}-r, 0).$
\end{claim}

\begin{proof}
If $(u,v) \in E^-$, then $\profit_t(u,v) \geq 0$ by Claim \ref{B3}. If $(u,v) \not \in \Delta E_t$, then $\profit_t(u,v) = 0$ by definition. So we may assume $(u,v) \in E^+ \cap \Delta E_t$. In this case, 
\begin{align*}
    \profit_t(u,v) &= \widehat{d}_{uv} - r \textsf{ALG}(u,v) 
    \geq d_{uv} - r 
    \geq \min(d_{uv}-r,0).
\end{align*}
\end{proof}

\begin{lemma}[Analogue of Lemma B.9 in \cite{KMZ19}] \label{B9}
Let $u \in V_t$. Then $P_{low}(u) \geq -L_t(u)$, where $P_{low}(u) = \sum_{v \in V_t \setminus \text{Ball}(w,r)} \profit_t(u,v)$.
\end{lemma}

\begin{proof}
\begin{align*}
    P_{low}(u) &= \sum_{v \in V_t \setminus \text{Ball}(w,r)} \profit_t(u,v) 
    \geq \sum_{v \in V_t \setminus \text{Ball}(w,r)} \min(d_{uv} - r, 0) \\
    &\geq \sum_{v \in V_t} \min(d_{uv} - r, 0) 
    = \sum_{v \in \text{Ball}(u,r) \cap V_t} d_{uv}-r  
    = -L_t(u)
\end{align*}
where in the second line we have used Claim \ref{B8}, in the third line we have used that all terms are non-positive, and in the fourth line we have used that $\min(d_{uv} - r,0) = 0$ if $v \not \in \text{Ball}(u,r)$. 
\end{proof}
\end{proof}

\subsection{Completing the proof of Theorem \ref{thm: apx_sampling}} \label{sec: sampling_thm_pf}
This section ties together the analysis of the previous sections to prove that sampling can be used to improve the run-time of our combinatorial algorithm for Min Max correlation clustering while still obtaining a constant factor approximation.

\begin{proof}[Proof of Theorem \ref{thm: apx_sampling}]
Our sampling algorithm for correlation clustering with respect to the $\ell_{\infty}$ norm is the following. Compute $\widetilde{d}$ via sampling as described in Section \ref{sec: post_processed_estimates}. Feed $\widetilde{d}$ as input to Algorithm \ref{KMZ-alg}. Let $\textsf{ALG}(u,v) = \mathds{1}((u,v) \text { is in disagreement in }\mathcal{C})$ 
and $\textsf{ALG}(u) = \sum_{v \in V} \textsf{ALG}(u,v)$. Define $\widehat{\widetilde{d}}_{uv} = \widetilde{d}_{uv}$ if $v \in N_u^+$ and $\widehat{\widetilde{d}}_{uv} = 1 - \widetilde{d}_{uv}$ if $u \in N_v^-$. Following line (\ref{eq: KMZ_chain}), we see that, with probability $1-O(1/n)$, 
\begin{equation} 
    \textsf{ALG}(u) = \sum_{v \in V}\textsf{ALG}(u,v)  \underset{*}{\leq} 
\frac{1}{r(\eps)} \cdot \sum_{v \in V}\widehat{\widetilde{d}}_{uv} \underset{**} \leq \frac{1}{r(\eps)} \cdot D(\eps) \cdot \textsf{OPT}.
\end{equation} 
where $D(\eps) = 2 \cdot \left(\frac{1+\eps}{1-\eps} \right)^2$ and $r(\eps) = r(\delta_1, \delta_2)$ (recall that $\delta_1=\delta_1(\epsilon)$ and $\delta_2=\delta_2(\epsilon)$), as defined in Appendix \ref{appendix: approx_tri_requirements}. The inequality (\midasterik) follows from Proposition \ref{prop: pseudo_triangle} and Lemma \ref{thm: constant_rounding_estimates},  and the inequality (\midasterik \midasterik) follows from Proposition \ref{prop: pseudo_vs_opt}. 

Next, we analyze the run-time. As before, there are two phases: (1) computing the estimates $\widetilde{d}_{uv}$, and (2) using the rounding algorithm (Algorithm \ref{KMZ-alg}) with input $\widetilde{d}$. Phase (1) takes $O(n^2\log n/\eps^2)$ time. First we compute the sample $S_u$ for each vertex $u$, which takes $O(n \log n/\eps^2)$ time, since we sample $O(\log n/\eps^2)$ vertices from $N_u^+$. Then, for each of the $O(n^2)$ pairs $u,v$, we compute $W^{(u,v)}$ and $Y^{(u,v)}$, which takes $O(\log n/\eps^2)$ time. Thus to compute $\bar{d}_{uv}$ as in (\ref{eq: first_distance_estimate}) for all pairs takes $O(n^2 \log n / \eps^2)$ time. Finally, obtaining $\widetilde{d}$ from $\widebar{d}$ via rounding takes $O(n^2)$ time. Phase (2) takes $O(n^2)$ time, by the discussion in Appendix \ref{sec: lp_rounding_alg}. Together the two phases contribute a total run-time of $O(n^2\log n/\eps^2)$, completing the proof of Theorem \ref{thm: apx_sampling}.
\end{proof}

\section{Experiments} \label{sec: experiments}
In this section, we describe experiments supporting our theoretical results.\footnote{The code for our experiments can be found at https://github.com/hanewman/MinMax-Correlation-Clustering-} We demonstrate:
\begin{itemize}
\item The guarantees of Theorem \ref{thm: apx_thm} are predictive of 
our algorithm's performance on real-world and synthetic datasets. Our algorithm's solution quality is similar to the KMZ algorithm. 
\item The fractional cost of the correlation metric is similar to the objective value of LP \ref{KMZ_LP}. 
\item Our algorithm is \textit{scalable}: we can handle large graphs (up to $\approx 10,000$ vertices), whereas the KMZ algorithm is only practical for graphs with up to $\approx$ 300 vertices due to the bottleneck of solving the (enormous) LP.
\item The large clusters found by our algorithm can be meaningful, in that the algorithm partially discovers ``ground truth" clusters in real-world and synthetic instances. \\
\end{itemize}

Our experiments focus on the exact algorithm. We observe that empirically the exact algorithm is sufficiently fast.

\paragraph{Real dataset description.} 
We obtained datasets representing social networks from the Stanford Large Network Dataset Collection \cite{FBSocialCircles, feather, HepData}.\footnote{https://snap.stanford.edu/data/ego-Facebook.html} Specifically, we used the \textit{ego-Facebook} dataset containing 10 graphs that are subgraphs of a social network from Facebook. Each subgraph, or \textit{ego-network}, represents a specific user's friend list and the connections within it. We converted this to a complete, signed graph by representing a connection between users as a positive edge, and a non-connection as a negative edge. (See Tables \ref{table: fb_small_statistics} and \ref{table: fb_large_statistics} in Appendix \ref{appendix: add_plots_exp_sec} for statistics on these graphs.) Each ego-network is accompanied by ``ground truth" circles; each circle is a collection of vertices that the user has labelled as a community. We note that the circles are not necessarily partitions of the friend list, as they may overlap or not cover the entire friend list.  

For each Facebook ego-network, we applied our exact algorithm using the matrix multiplication implementation.  For five of the ego-networks that were of small enough size, we also solved the LP in order to bound the approximation ratio of our algorithm.  The LP solver used was Gurobi.  For the latter datasets, we also applied the KMZ algorithm as an additional means of comparison. Let $r_1$ be the radius in $L_t(\cdot)$ and let $r_2$ be the radius used to cut out $C_t$ in Algorithm \ref{KMZ-alg}. While Theorem \ref{thm: apx_thm} holds for $r_1 = 1/5$ and $r_2 = 2/5$, in practice these radii may give an objective value near the maximum positive degree in a sparse graph. We can obtain even better results than those guaranteed by Theorem \ref{thm: apx_thm} by setting the hyperparameters more conservatively. We did parameter sweeps (Figure \ref{fig: sweeps} in Appendix \ref{appendix: add_plots_exp_sec}), and found that $r_1 = r_2 = 0.7$ work well for our algorithm, and $r_1 = r_2 = 0.4$ work well for the KMZ algorithm, so we report the results using these parameters. See also Appendix \ref{appendix: double_parameter_sweep} for the best radii for each dataset. 
Finally, we applied the Pivot algorithm (described in Appendix \ref{appendix: pivot}) to all datasets for an additional comparison \cite{ACN-pivot}. See Tables \ref{table: objectives_fb_small} and \ref{table: objectives_fb_large} and Figure \ref{table: run-times_fb_small}. 

\paragraph{Quality of approximation.} For the five small datasets in Table \ref{table: objectives_fb_small}, the cost of our algorithm is at most 2 times the cost of the LP, and thus at most twice optimal. Our algorithm and the KMZ algorithm performed similarly in terms of objective value. In addition, the fractional cost of the correlation metric and the cost of the LP consistently differ by a factor of around 2. Finally, the objective value of our algorithm is typically slightly less than its fractional cost. For the large data sets (Table \ref{table: objectives_fb_large}) for which it was prohibitive to run the LP, we do not have a lower bound of optimal due to the LP not scaling, so we cannot bound the approximation ratio. For these datasets, we compare to Pivot, which our algorithm outperforms by a substantial margin.

\begin{center}
\begin{table}[H]
\small
\centering
\begin{tabular}{c | c | c | c | c | c } 
 \hline
  & fractional cost & LP objective & our objective  & KMZ objective  & Pivot objective  \\
\hline
FB 348 & 74.37 & 39.13 & 72 & 89 & 85.03 \\
\hline
FB 414 & 35.53 & 19.66 & 34 & 38 & 50.73 \\
\hline
FB 686 & 58.59 & 30.48 & 47 & 69 & 65.72 \\
\hline
FB 698 & 22.31 & 10.64 & 20 & 18 & 23.51\\
\hline
FB 3980 & 14.31 & 7.34 & 12 & 13 & 16.36 \\
\hline 
\end{tabular}
\captionsetup{width=.9\linewidth}
\caption{Comparison of the fractional values and objective values of our algorithm and the KMZ algorithm for the five small Facebook datasets. The column ``fractional cost" records the fractional cost of the correlation metric. Additionally, we run the Pivot algorithm; the recorded value is the objective value averaged over 500 trials.}
\label{table: objectives_fb_small}
\end{table}
\end{center}

\begin{center}
\begin{table}[H]
\small
\centering
\begin{tabular}{c | c | c | c | c | c } 
 \hline
  & fractional cost & our objective & Pivot objective  & our run-time & \#vertices \\
\hline
FB 0 & 64.02 & 49 & 71.78 & 0.20 & 333\\
\hline
FB 107 & 181.49 & 152 & 216.65 & 1.76 & 1034\\
\hline
FB 1684 & 103.99 & 93 & 130.71 & 0.98 & 786 \\
\hline
FB 1912 & 227.74 & 220 & 259.01 & 0.93 & 747 \\
\hline
FB 3437 & 98.36 & 107 & 99.1 & 0.47 & 534 \\
\hline 
\end{tabular}
\captionsetup{width=.9\linewidth}
\caption{Columns are as in Table \ref{table: objectives_fb_small}. The size of each dataset here makes the LP run-time prohibitive. Run-time is listed in seconds.}
\label{table: objectives_fb_large}
\end{table}
\end{center}

\paragraph{Run-time and scalability.} 
The run-time of our algorithm is significantly faster than that of the KMZ algorithm (Table \ref{table: run-times_fb_small}). For instance, on FB 348, which contains only 224 vertices, the KMZ algorithm took over 30 minutes, whereas our algorithm took a tenth of a second \footnote{Even if we terminate the LP early, e.g., when the primal and dual are within 1 of each other, this still takes at least 10 minutes. Note however that doing so would affect the distances outputted.}. In fact, we can quickly handle very large graphs; on a social network with 12,008 nodes \footnote{https://snap.stanford.edu/data/feather-lastfm-social.html}, our algorithm ran in just over 4 minutes. See Appendix \ref{appendix: scalability} for more details on scalability.

\paragraph{Comparison to ground truth clusters.} We also analyzed whether the clusters found by our algorithm discovered the ground truth circles identified by users. For each dataset, we considered ``large" clusters of size at least 10, since the small clusters (including several singleton clusters) are less meaningful. For each large cluster, we identified the ground truth circle containing the largest number of vertices from that cluster. The results are plotted in Figure \ref{fig: ground_truth_common_radius} in Appendix \ref{appendix: add_plots_exp_sec}.  We find that for datasets \textsc{FB} 348, 414, 686, 1684, and 1912, almost every large cluster is almost entirely contained in its best ground truth circle (i.e., between 80\% and 100\% of each large cluster is contained in its best ground truth circle). For other datasets, there is less evidence that ground truth circles are being discovered.

\begin{center}
\begin{table}[H]
\small
\centering
\begin{tabular}{c | c | c | c } 
 \hline
  & our run-time & KMZ run-time & \#vertices \\
\hline
FB 348 & 0.10 & 1847.99 & 224 \\
\hline
FB 414 & 0.06 & 207.92 & 150 \\
\hline
FB 686 & 0.06 & 337.9 & 168 \\
\hline
FB 698 & 0.02 & 3.42 & 61  \\
\hline
FB 3980 & 0.01 & 2.03 & 52 \\
\hline 
\end{tabular}
\captionsetup{width=.9\linewidth}
\caption{A comparison of the run-times (in seconds) of our algorithm and the KMZ algorithm on the five small Facebook datasets.} 
\label{table: run-times_fb_small}
\end{table}
\end{center}

\paragraph{Synthetic datasets.} 
We considered synthetic datasets for two reasons. The first is that running the LP and the KMZ algorithm are prohibitive on many real-world datasets. The second is to further test whether our algorithm discovers ground truth clusters. We took a graph with 100 vertices and 10 positive cliques of size 10 (so the graph admits a perfect clustering) and introduced 20 levels of noise. At each level $i$, we randomly flipped $45i$ edges to the opposite sign. We then applied our algorithm using $r_1 = r_2 = 0.7$ as before. We found that for up to 495 flips ($i=11$), the original clusters of size 10 were almost entirely preserved by our algorithm (in some cases, up to three vertices popped out into singleton clusters). We also found that for all levels of noise we considered, almost every cluster we found was at least 88\% contained in a ground truth cluster (only 6 clusters were an exception to this). See Appendix \ref{appendix: synthetic} for additional plots of these experiments. 

\section{Conclusion}
We presented a faster, completely combinatorial $O(1)$-approximation 
algorithm for Min Max correlation clustering. 
We constructed a fractional solution to our problem's LP based on the intersection sizes of the $+$ and $-$ neighborhoods of vertices, and then showed that the LP rounding algorithm by Kalhan, Makarychev, and Zhou (see \cite{KMZ19}) is successful with our hand-crafted fractional solution. 
By itself, this is a surprising result! 
It opens up the following question for future study: Given an LP, when can we use observable, combinatorial properties of the underlying instance's structure to form a provably good fractional solution to the LP?
This general framework could lead to big run-time improvements for other problems.

Another future direction is to search for a hand-crafted fractional solution for the $\ell_1$ norm to obtain other combinatorial algorithms for classic correlation clustering. Moreover, while we show the success of the \dn~for the $\ell_{\infty}$
norm, it is possible that it works as a surrogate fractional solution for other $\ell_p$ norms too.
While these directions are theoretically interesting in their own right, there is practical motivation for finding fast algorithms for other $\ell_p$ norms, since $p \in (1, \infty)$ interpolates between the competing objectives of local fairness ($p = \infty$) and global optimality ($p = 1$).

\printbibliography

@incollection{Wirth17,
  author    = {Anthony Wirth},
  editor    = {Claude Sammut and
               Geoffrey I. Webb},
  title     = {Correlation Clustering},
  booktitle = {Encyclopedia of Machine Learning and Data Mining},
  pages     = {280--284},
  publisher = {Springer},
  year      = {2017},
  url       = {https://doi.org/10.1007/978-1-4899-7687-1\_176},
  doi       = {10.1007/978-1-4899-7687-1\_176},
  timestamp = {Wed, 12 Jul 2017 09:11:32 +0200},
  biburl    = {https://dblp.org/rec/reference/ml/Wirth17.bib},
  bibsource = {dblp computer science bibliography, https://dblp.org}
}

@inproceedings{CGS17,
  author    = {Moses Charikar and
               Neha Gupta and
               Roy Schwartz},
  editor    = {Friedrich Eisenbrand and
               Jochen K{\"{o}}nemann},
  title     = {Local Guarantees in Graph Cuts and Clustering},
  booktitle = {Integer Programming and Combinatorial Optimization - 19th International
               Conference, {IPCO} 2017, Waterloo, ON, Canada, June 26-28, 2017, Proceedings},
  series    = {Lecture Notes in Computer Science},
  volume    = {10328},
  pages     = {136--147},
  publisher = {Springer},
  year      = {2017},
  url       = {https://doi.org/10.1007/978-3-319-59250-3\_12},
  doi       = {10.1007/978-3-319-59250-3\_12},
  timestamp = {Wed, 23 Dec 2020 10:29:24 +0100},
  biburl    = {https://dblp.org/rec/conf/ipco/CharikarGS17.bib},
  bibsource = {dblp computer science bibliography, https://dblp.org}
}

@inproceedings{KMZ19,
  author    = {Sanchit Kalhan and
               Konstantin Makarychev and
               Timothy Zhou},
  editor    = {Hanna M. Wallach and
               Hugo Larochelle and
               Alina Beygelzimer and
               Florence d'Alch{\'{e}}{-}Buc and
               Emily B. Fox and
               Roman Garnett},
  title     = {Correlation clustering with local objectives},
  booktitle = {Advances in Neural Information Processing Systems 32: Annual Conference
               on Neural Information Processing Systems 2019, NeurIPS 2019, December
               8-14, 2019, Vancouver, BC, Canada},
  pages     = {9341--9350},
  year      = {2019},
  url       = {https://proceedings.neurips.cc/paper/2019/hash/785ca71d2c85e3f3774baaf438c5c6eb-Abstract.html},
  timestamp = {Mon, 16 May 2022 15:41:51 +0200},
  biburl    = {https://dblp.org/rec/conf/nips/KalhanMZ19.bib},
  bibsource = {dblp computer science bibliography, https://dblp.org}
}

@inproceedings{PM16,
  author    = {Gregory J. Puleo and
               Olgica Milenkovic},
  editor    = {Maria{-}Florina Balcan and
               Kilian Q. Weinberger},
  title     = {Correlation Clustering and Biclustering with Locally Bounded Errors},
  booktitle = {Proceedings of the 33nd International Conference on Machine Learning,
               {ICML} 2016, New York City, NY, USA, June 19-24, 2016},
  series    = {{JMLR} Workshop and Conference Proceedings},
  volume    = {48},
  pages     = {869--877},
  publisher = {JMLR.org},
  year      = {2016},
  url       = {http://proceedings.mlr.press/v48/puleo16.html},
  timestamp = {Wed, 29 May 2019 08:41:46 +0200},
  biburl    = {https://dblp.org/rec/conf/icml/PuleoM16.bib},
  bibsource = {dblp computer science bibliography, https://dblp.org}
}

@inproceedings{Khuller2019min,
  title={Min-max correlation clustering via multicut},
  author={Ahmadi, Saba and Khuller, Samir and Saha, Barna},
  booktitle={International Conference on Integer Programming and Combinatorial Optimization},
  pages={13--26},
  year={2019},
  organization={Springer}
}

@inproceedings{cheng2000biclustering,
  title={Biclustering of expression data.},
  author={Cheng, Yizong and Church, George M},
  booktitle={Ismb},
  volume={8},
  number={2000},
  pages={93--103},
  year={2000}
}

@article{symeonidis2008nearest,
  title={Nearest-biclusters collaborative filtering based on constant and coherent values},
  author={Symeonidis, Panagiotis and Nanopoulos, Alexandros and Papadopoulos, Apostolos N and Manolopoulos, Yannis},
  journal={Information retrieval},
  volume={11},
  number={1},
  pages={51--75},
  year={2008},
  publisher={Springer}
}

@article{kriegel2009clustering,
  title={Clustering high-dimensional data: A survey on subspace clustering, pattern-based clustering, and correlation clustering},
  author={Kriegel, Hans-Peter and Kr{\"o}ger, Peer and Zimek, Arthur},
  journal={Acm transactions on knowledge discovery from data (tkdd)},
  volume={3},
  number={1},
  pages={1--58},
  year={2009},
  publisher={ACM New York, NY, USA}
}

@article{ACN-pivot,
  title={Aggregating inconsistent information: ranking and clustering},
  author={Ailon, Nir and Charikar, Moses and Newman, Alantha},
  journal={Journal of the ACM (JACM)},
  volume={55},
  number={5},
  pages={1--27},
  year={2008},
  publisher={ACM New York, NY, USA}
}

@article{mccallum2004conditional,
  title={Conditional models of identity uncertainty with application to noun coreference},
  author={McCallum, Andrew and Wellner, Ben},
  journal={Advances in neural information processing systems},
  volume={17},
  year={2004}
}

@incollection{demaine2003correlation,
  title={Correlation clustering with partial information},
  author={Demaine, Erik D and Immorlica, Nicole},
  booktitle={Approximation, Randomization, and Combinatorial Optimization.. Algorithms and Techniques},
  pages={1--13},
  year={2003},
  publisher={Springer}
}

@inproceedings{ben1999clustering,
  title={Clustering gene expression patterns},
  author={Ben-Dor, Amir and Yakhini, Zohar},
  booktitle={Proceedings of the third annual international conference on Computational molecular biology},
  pages={33--42},
  year={1999}
}

@article{BBC04,
  title={Correlation clustering},
  author={Bansal, Nikhil and Blum, Avrim and Chawla, Shuchi},
  journal={Machine learning},
  volume={56},
  number={1},
  pages={89--113},
  year={2004},
  publisher={Springer}
}

@inproceedings{jafarov2021local,
  title={Local correlation clustering with asymmetric classification errors},
  author={Jafarov, Jafar and Kalhan, Sanchit and Makarychev, Konstantin and Makarychev, Yury},
  booktitle={International Conference on Machine Learning},
  pages={4677--4686},
  year={2021},
  organization={PMLR}
}

@inproceedings{bateni2022scalable,
  title={Scalable and Improved Algorithms for Individually Fair Clustering},
  author={Bateni, Mohammadhossein and Cohen-Addad, Vincent and Epasto, Alessandro and Lattanzi, Silvio},
  booktitle={Workshop on Trustworthy and Socially Responsible Machine Learning, NeurIPS 2022}
}

@inproceedings{friggstad2021fair,
  title={Fair correlation clustering with global and local guarantees},
  author={Friggstad, Zachary and Mousavi, Ramin},
  booktitle={Workshop on Algorithms and Data Structures},
  pages={414--427},
  year={2021},
  organization={Springer}
}

@article{ahmadi2020fair,
  title={Fair correlation clustering},
  author={Ahmadi, Saba and Galhotra, Sainyam and Saha, Barna and Schwartz, Roy},
  journal={arXiv preprint arXiv:2002.03508},
  year={2020}
}

@article{cohen2022correlation,
  title={Correlation Clustering with Sherali-Adams},
  author={Cohen-Addad, Vincent and Lee, Euiwoong and Newman, Alantha},
  journal={Symposium on Foundations of Computer Science (FOCS).},
  year={2022}
}

@inproceedings{ahmadian2020fair,
  title={Fair correlation clustering},
  author={Ahmadian, Sara and Epasto, Alessandro and Kumar, Ravi and Mahdian, Mohammad},
  booktitle={International Conference on Artificial Intelligence and Statistics},
  pages={4195--4205},
  year={2020},
  organization={PMLR}
}

@inproceedings{chawla2015near,
  title={Near optimal lp rounding algorithm for correlationclustering on complete and complete k-partite graphs},
  author={Chawla, Shuchi and Makarychev, Konstantin and Schramm, Tselil and Yaroslavtsev, Grigory},
  booktitle={Proceedings of the forty-seventh annual ACM symposium on Theory of computing},
  pages={219--228},
  year={2015}
}

@inproceedings{FBSocialCircles,
 author = {Leskovec, Jure and Mcauley, Julian},
 booktitle = {Advances in Neural Information Processing Systems},
 editor = {F. Pereira and C.J. Burges and L. Bottou and K.Q. Weinberger},
 pages = {},
 publisher = {Curran Associates, Inc.},
 title = {Learning to Discover Social Circles in Ego Networks},
 url = {https://proceedings.neurips.cc/paper/2012/file/7a614fd06c325499f1680b9896beedeb-Paper.pdf},
 volume = {25},
 year = {2012}
}

@article{cohen2021solving,
  title={Solving linear programs in the current matrix multiplication time},
  author={Cohen, Michael B and Lee, Yin Tat and Song, Zhao},
  journal={Journal of the ACM (JACM)},
  volume={68},
  number={1},
  pages={1--39},
  year={2021},
  publisher={ACM New York, NY, USA}
}

@inproceedings{feather,
    title={{Characteristic Functions on Graphs: Birds of a Feather, from Statistical Descriptors to Parametric Models}},
    author={Benedek Rozemberczki and Rik Sarkar},
    year={2020},
    pages = {1325–1334},
    booktitle={Proceedings of the 29th ACM International Conference on Information and Knowledge Management (CIKM '20)},
    organization={ACM},
}

@article{HepData,
  title={Graph evolution: Densification and shrinking diameters},
  author={Leskovec, Jure and Kleinberg, Jon and Faloutsos, Christos},
  journal={ACM transactions on Knowledge Discovery from Data (TKDD)},
  volume={1},
  number={1},
  pages={2--es},
  year={2007},
  publisher={ACM New York, NY, USA}
}

\appendix

\section{Rounding Algorithm and Run-time}
\label{sec: lp_rounding_alg}

We now detail the rounding algorithm of Kalhan, Makarychev, and Zhou
that we will leverage \cite{KMZ19}.  
For vertex $u \in V$ let the ball of radius $\rho$ 
around $u$ with respect to a semi-metric $x$ on $V$
be defined as $\textsf{Ball}(u,\rho) = \{v \in V \mid x_{uv} \leq \rho\}$.
The algorithm is iterative, where vertices are clustered in each iteration.
At step $t$, the set of unclustered vertices is denoted $V_t \subseteq V$.
From $V_t$, a special vertex is chosen to be the cluster center, 
specifically the vertex that maximizes 
$$
L_t(u)= \sum_{v \in \textsf{Ball}(u,r) \cap V_t} {r-x_{uv}},
$$
which indicates how packed towards the center the vertices in
$\textsf{Ball}(u,r) \cap V_t$ are.\\

\begin{algorithm} \label{KMZ-alg}[Rounding algorithm]
\end{algorithm}

\begin{minipage}{13cm}
\rule{12cm}{0.4pt}\\
\noindent \textbf{Input: }Semi-metric $x$ on $V$.

\noindent \textbf{Output: }Clustering $\mathcal{C}$.
\begin{enumerate}
    \item Let $V_0 = V$, $r = 1/5$, $t=0$.
    \item \textbf{while} ($V_t \neq \emptyset$)
    \begin{itemize}
        \item Find $u^*_t = \arg \max_{u \in V_t} L_t(u) =  \arg \max_{u \in V_t} \sum_{v \in \text{Ball}(u,r) \cap V_t} {r-x_{uv}}$. 
        \item Create ${C_t = \text{Ball}(u^*_t,2r) \cap V_t}$.
        \item Set $V_{t+1} = V_t \setminus C_t$ and $t=t+1$.
    \end{itemize}
    \item Return ${\mathcal{C} = (C_0, \dots, C_{t-1})}$.
\end{enumerate}
    \rule{12cm}{0.4pt}
\end{minipage}\\

Let $\textsf{LP}(u,v)$ be the cost of edge $(u,v)$ to the LP in its objective value, 
so one can set $x_{uv}=\textsf{LP}(u,v)$ if $(u,v) \in E^+$ and 
$x_{uv}=1-\textsf{LP}(u,v)$ if $(u,v) \in E^-$. 
For the output of the algorithm above, 
let $\textsf{ALG}(u,v) = \mathds{1}((u,v) \text { is in disagreement})$ 
and $\textsf{ALG}(u) = \sum_{v \in V} \textsf{ALG}(u,v)$.
Kalhan, Makarychev, and Zhou show that 
\begin{equation}\label{eq: KMZ}
    \textsf{ALG}(u) = \sum_{v \in V}\textsf{ALG}(u,v) \underset{*}{\leq} 
5 \cdot \sum_{v \in V}\textsf{LP}(u,v) \leq 5 y(u),
\end{equation} 
which leads to a 5 approximation algorithm for any $\ell_p$ norm 
for complete graphs. 
The technical work in their result is in showing the inequality
marked with * in Equation \ref{eq: KMZ}. (The second inequality is due to the LP being a relaxation.)

\paragraph{Run-time}
We will first justify that the run-time of the rounding algorithm is $O(n^2)$, 
and can be obtained through the following procedure:
\begin{itemize}
    \item For each $u \in V$, precompute $L_t(u) = \sum_{v \in Ball(u,r)} r - d_{uv}$. This takes $O(n^2)$ time. 
    \item At iteration $t$, it takes $O(n)$ time to find the max of $L_t(u)$ over all unclustered vertices, i.e., over $u \in V_t$, and $O(n)$ time to create the cluster. 
    This contributes $O(n^2)$ time over all iterations. 
    \item When a vertex is removed, its contribution to $L_t(u)$ for remaining vertices $u$ must be removed as well. This takes $O(n)$ time for each vertex that is removed, as it may be a member of $O(n)$ balls. Further, each vertex is only removed once. So the updates to the $L_t$ values take $O(n^2)$ time overall.
\end{itemize}

In all, the run-time of the full KMZ algorithm is dominated by the time it takes to solve the LP to obtain the semi-metric $x$.  
The LP contains $O(n^2)$ many variables and $O(n^3)$ many constraints, 
for $n = |V|$. 
Given the current best algorithms for solving linear programs,
this obliviously gives a run-time that is no better than $O(n^{2 \omega})$ for solving the LP \cite{cohen2021solving}. 
We justify that this is a reasonable theoretical benchmark to compare against for run-time.
On sparse networks, i.e. when the graph on the $+$ edges 
is sparse, one might wonder whether it is possible to reduce the number of variables that the solver must solve for by fixing the values of the LP variables for pairs of vertices whose positive neighborhoods have empty intersection.
However, it is easy to come up with small examples that show
one cannot do this without violating the triangle inequality, 
and is is totally unclear if one could even guarantee some approximate triangle inequality here. 
Lastly, it might be possible to more quickly obtain an LP solution which is provably approximately optimal by solving the LP, 
but such study would be quite technical and a completely different approach.

\section{Choices of Constants in Section \ref{sec: approx_tri_suff}} \label{appendix: approx_tri_requirements}
Let $\eps > 0$ be sufficiently small, 
$\delta_1 = 3 + h_4(\eps)$, and $\delta_2 = h_4(\eps)$, where $h_4(\eps)$ is as in Proposition \ref{prop: pseudo_triangle}. Note that $h_4(\eps) \rightarrow 0$ as $\eps \rightarrow 0$. Take 
\begin{align*}
r &= \frac{1-\delta_2 - \delta_1 \delta_2 - \delta_1^3 \delta_2 - \delta_1^2 \delta_2}{\delta_1^2 + \delta_1^3(\delta_1+1) + \delta_1 + 1} \\
c_1 &= \delta_1 + \frac{\delta_2}{r} \\
b &= (c_1 + 1)\delta_1 + \frac{\delta_2}{r} \\
c_2 &= \delta_1(b+1) + \frac{\delta_2}{r}.
\end{align*}

One can verify that the following inequalities, as needed in Section \ref{sec: approx_tri_suff}, hold. 

\begin{align*}
    &b \geq 1 \qquad &&c_2 \cdot r<1 \\
    &c_1 \leq b < c_2 \qquad &&1-2 \delta_1 b\cdot r - \delta_2-r\geq 0\\
    &\frac{1}{\delta_1} b \cdot r - c_1 \cdot r -r - \delta_2/\delta_1\geq 0
    \qquad &&\frac{1}{\delta_1 }c_2 \cdot r - b \cdot r - r - \delta_2/\delta_1 \geq 0\\
    &\frac{1}{\delta_1}c_1 \cdot r- \delta_2/\delta_1 \geq r
    \qquad 
    &&1 - (\delta_1 b +\delta_1) r - \delta_2 \geq  r\\
    &(\frac{1}{\delta_1} b-1)r - \delta_2/\delta_1 \geq r  \qquad
    && 1 - (\delta_1 c_2 + \delta_1)r - \delta_2\geq r 
\end{align*}


\section{Supplementary Experiment Information} \label{appendix: exp}

\subsection{Description of Pivot algorithm} \label{appendix: pivot}
The Pivot algorithm of  Ailon, Charikar, and Newman is a randomized algorithm that gives a 3-approximation in expectation for classic correlation clustering (i.e., $\ell_1$ norm) on a complete graph \cite{ACN-pivot}. Pivot may perform poorly for the Min Max objective \cite{PM16}. The Pivot algorithm is as follows. Sample a uniformly random ordering of the vertices. Visit vertices in this order. Upon visiting a vertex, check whether it has been marked as clustered. If it has, visit the next vertex. If not, call the current vertex a pivot, and open a new cluster consisting of the pivot and all unclustered vertices that are in the positive neighborhood of the pivot. Mark these vertices as clustered. 

\subsection{Additional plots for Section \ref{sec: experiments}} \label{appendix: add_plots_exp_sec}
This section contains additional plots for the experiments discussed in Section \ref{sec: experiments}. 

\begin{center}
\begin{table}[H]
\small
\centering
\begin{tabular}{c | c | c | c | c | c } 
 \hline
  & \textsc{FB 348} & \textsc{FB 414} & \textsc{FB 686} & \textsc{FB 698} & \textsc{FB 3980}\\ [0.5ex] 
 \hline
\#vertices & 224 & 150 & 168 & 61 & 52 \\ [1ex] 
 \hline
 \#edges & 6384 & 3386 & 3312 & 540 & 292 \\ 
 \hline
 max positive degree & 100 & 58 & 78 & 30 & 19 \\
 \hline
 \end{tabular}
\captionsetup{width=.9\linewidth}
\caption{Graph statistics for the five small Facebook datasets in Table \ref{table: objectives_fb_small}.}
\label{table: fb_small_statistics}
\end{table}
\end{center}

\begin{center}
\begin{table}[H]
\small
\centering
\begin{tabular}{c | c | c | c | c | c } 
 \hline
  & \textsc{FB 0} & \textsc{FB 107} & \textsc{FB 1684} & \textsc{FB 1912} & \textsc{FB 3437}\\ [0.5ex] 
 \hline
\#vertices & 333 & 1034 & 786 & 747	& 534 \\ [1ex] 
 \hline
 \#edges & 5038 & 53498 & 28048 & 60050 & 9626 \\ 
 \hline
 max positive degree & 78 & 254 & 137 & 294 & 108 \\
 \hline
  \end{tabular}
\captionsetup{width=.9\linewidth}
\caption{Graph statistics for the five large Facebook datasets in Table \ref{table: objectives_fb_large}.}
\label{table: fb_large_statistics}
\end{table}
\end{center}

\begin{figure}[H]
\centering
\begin{minipage}{.5\textwidth}
  \centering
  \includegraphics[width=.9\linewidth]{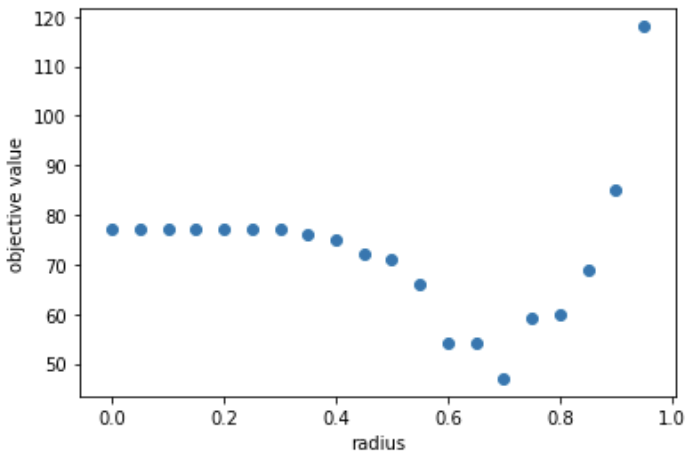}
\end{minipage}%
\begin{minipage}{.5\textwidth}
  \centering
  \includegraphics[width=.9\linewidth]{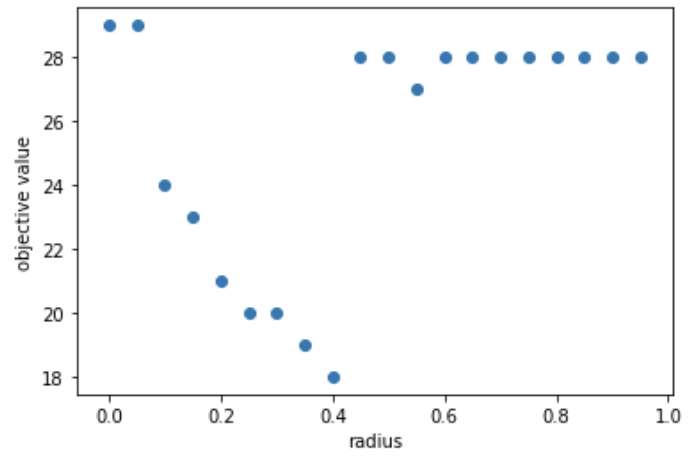}
\end{minipage}
\captionsetup{width=.9\linewidth}
\caption{Plots showing how the objective value changes for our algorithm (left) and the KMZ algorithm (right) as we sweep over a common radius $r_1 = r_2$ for datasets FB 686 (left) and FB 698 (right). Plots for other datasets are similar, so we used radii of $0.7$ and $0.4$ for our algorithm and the KMZ algorithm, respectively, in Tables \ref{table: objectives_fb_small} and \ref{table: objectives_fb_large}.}
\label{fig: sweeps}
\end{figure}

\begin{figure}[H]
\centering
\includegraphics[width=0.62\textwidth]{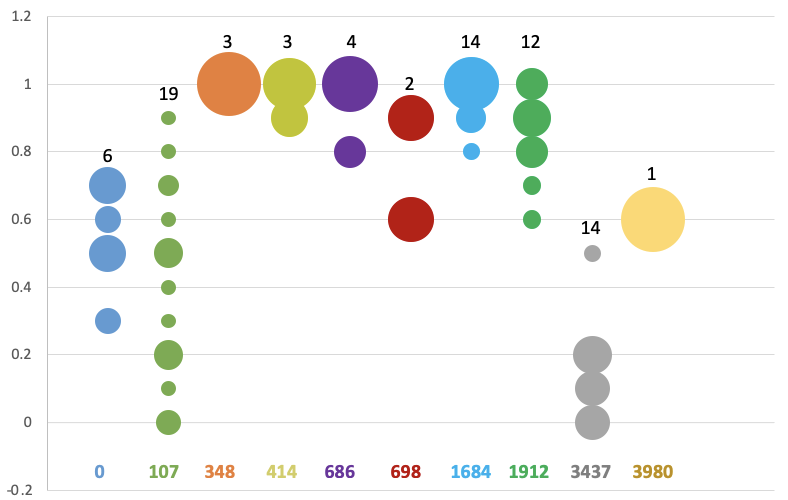}
\captionsetup{width=.9\linewidth}
\caption{For each dataset (bottom), the bubbles above it quantify the extent to which large clusters (size $\geq 10$) found by our algorithm are contained in ground truth clusters. The number of large clusters for each dataset is at the top of the column. Each bubble represents a certain number of large clusters for the corresponding dataset; the size of the bubble is proportional to how many large clusters it represents. For each large cluster, we found its best ground truth cluster, i.e., the one containing the largest number of vertices from that cluster. The $y$-axis represents the proportion of a large cluster contained in its best ground truth cluster. For example, for dataset \textsc{FB 698}, consider a given large cluster that has between 90\% (inclusive) and 100\% (exclusive) of its vertices contained in its best ground truth cluster. It is accounted for by the red circle vertically positioned at 0.9. A large cluster that has between 60\% and 70\% of its vertices contained in its best ground truth cluster, is accounted for by the red circle vertically positioned at 0.6. Since the sizes of these two red circles are equal, half of the large clusters fall into each of these two categories. Notice that for datasets \textsc{FB 348, 414, 686 1684,} and 1912, there are bubbles located at the $y$-axis position of 1; these bubbles represent large clusters that are \textit{completely} (i.e., 100\%) contained in a ground truth cluster. For dataset FB 348 in particular, \textit{every} large cluster is 100\% contained in a ground truth cluster.}
\label{fig: ground_truth_common_radius}
\end{figure}

\subsection{Results for double parameter sweep on Facebook datasets} \label{appendix: double_parameter_sweep}
For ease of discussion in Section \ref{sec: experiments} and Appendix \ref{appendix: add_plots_exp_sec}, we applied our algorithm and the KMZ algorithm to the Facebook datasets by enforcing a common radius $r = r_1 = r_2$. We used $r=0.7$ for our algorithm and $r = 0.4$ for the KMZ algorithm on \textit{all datasets}. In this section, we present a more tailored analysis, where for \textit{each dataset} and \textit{each algorithm}, we find the best $r_1$ and $r_2$ for that dataset (without requiring that $r_1 = r_2$). The results are reported in Tables \ref{table: fb_small_diff_radii} and \ref{table: fb_large_diff_radii}. While the results are similar to those reported in Section \ref{sec: experiments} and Appendix \ref{appendix: add_plots_exp_sec}, we include these results for completeness.

\begin{center}
\begin{table}[H]
\small
\centering
\begin{tabular}{c | c | c | c | c | c } 
 \hline
  & \textsc{FB 348} & \textsc{FB 414} & \textsc{FB 686} & \textsc{FB 698} & \textsc{FB 3980}\\ [0.5ex] 
 \hline
 fractional cost & 74.37	& 35.53	& 58.59	& 22.31	& 14.31 \\
  \hline
 LP objective & 39.13 & 19.66 & 30.48	& 10.64 & 7.34 \\ [1ex]
 \hline
 our objective & 71 & 31 & 43 & 18 & 12 \\
 \hline
KMZ objective  & 69 & 28 & 47 & 17 & 13 \\ [1ex]
 \hline
Pivot objective  & 85.03 & 50.73 & 65.72 & 23.51 & 16.36 \\ [1ex]
 \hline
   ($r_1$, $r_2$) & (0.45, 0.7) & (0.1, 0.65) & (0.3, 0.75) & (0.8, 0.8) & (0.7, 0.7) \\ [1ex] 
 \hline
KMZ $(r_1, r_2)$ &	(0.2, 0.45) &  (0.05, 0.6) & (0.3, 0.6) & (0.05, 0.5) & (0.3, 0.6) \\ [1ex]
 \hline
\end{tabular}
\captionsetup{width=.9\linewidth}
\caption{Experimental results for the five small Facebook datasets. The pairs $(r_1, r_2)$ and KMZ $(r_1, r_2)$ refer to an optimal pair of radii in a double parameter sweep (instead of setting $r_1 = r_2$ as in Section \ref{sec: approx_tri_suff} and Appendix \ref{appendix: add_plots_exp_sec}), which were used in the rounding phase of each algorithm.}
\label{table: fb_small_diff_radii}
\end{table}
\end{center}

\begin{center}
\begin{table}[H]
\small
\centering
\begin{tabular}{c | c | c | c | c | c } 
 \hline
  & \textsc{FB 0} & \textsc{FB 107} & \textsc{FB 1684} & \textsc{FB 1912} & \textsc{FB 3437}\\ [0.5ex] 
 \hline
 fractional cost & 64.02 & 181.49 & 103.99 & 227.74	& 98.36 \\
 \hline
our objective & 49 & 134 & 93	& 187 & 77 \\
 \hline
 Pivot objective & 71.78 &	216.65 & 130.71	& 259.01 & 99.1 \\ [1ex]
 \hline
 ($r_1$, $r_2$) & (0.7, 0.7) & (0.65, 0.65) & (0.7, 0.7) & (0.65, 0.65) & (0.3, 0.7) \\ [1ex] 
 \hline
\end{tabular}
\captionsetup{width=.9\linewidth}
\caption{Experimental results for the five larger Facebook datasets. As in Table \ref{table: fb_small_diff_radii}, $(r_1, r_2)$ refers to the optimal radii in a double parameter sweep.}
\label{table: fb_large_diff_radii}
\end{table}
\end{center}

\subsection{Synthetic data: perfect clusterings with noise} \label{appendix: synthetic}

Below we show various plots for the synthetic experiments discussed in Section \ref{sec: experiments}.

\begin{figure}[H]
\centering
\begin{minipage}{.5\textwidth}
  \centering
  \includegraphics[width=.9\linewidth]{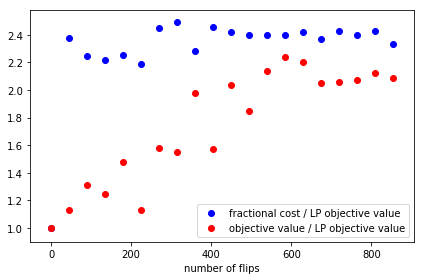}
\end{minipage}%
\begin{minipage}{.5\textwidth}
  \centering
  \includegraphics[width=.9\linewidth]{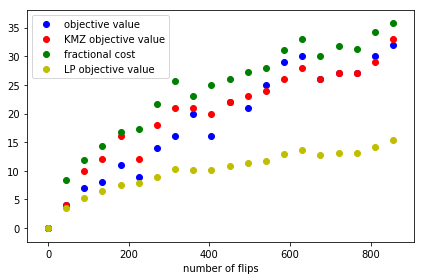}
\end{minipage}
\captionsetup{width=.9\linewidth}
\caption{In the above plots, ``fractional cost" refers to the fractional cost of our correlation metric. The term ``objective value" refers to that of our algorithm, while ``KMZ objective value" refers to that of the KMZ algorithm. These objective values are with respect to $r_1=r_2=0.7$ for our algorithm and $r_1 = r_2 = 0.4$ for the KMZ algorithm.}
\label{fig: synth_scatter}
\end{figure}

\begin{figure}[H]
\centering
\includegraphics[width=0.5\textwidth]{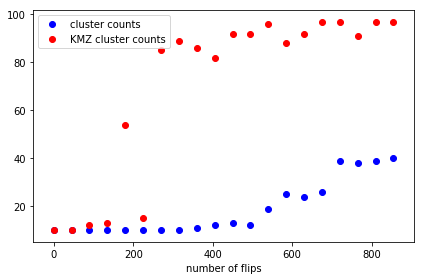}
\captionsetup{width=.9\linewidth}
\caption{The term ``cluster counts" refers to the number of clusters output by our algorithm, and ``KMZ cluster counts" for the number output by the KMZ algorithm.}
\label{fig: cluster_cts_minus_pivot}
\end{figure}

\subsection{Scalability of our algorithm} \label{appendix: scalability}
We ran our exact algorithm (using the matrix multiplication implementation) on three large datasets with approximately 10,000 vertices each to show that our algorithm scales. The first dataset, denoted LastFM \footnote{https://snap.stanford.edu/data/feather-lastfm-social.html}, is a social network of users of the music service LastFM in Asia, where vertices represent users and positive edges represent mutual follower relationship \cite{feather}. The other two datasets, denoted ca-HepTh and ca-HepPh \footnote{https://snap.stanford.edu/data/ca-HepTh.html, https://snap.stanford.edu/data/ca-HepPh.html},  are collaboration networks of high energy physics authors on arXiv, where vertices represent authors and positive edges represent co-authors \cite{HepData}. See Table \ref{table: run-times_scalable} for our algorithm's run-time on these three datasets. We observe that the algorithm takes approximately two to four minutes on each dataset.  

\begin{center}
\begin{table}[H]
\small
\centering
\begin{tabular}{c | c | c | c } 
 \hline
  & our run-time & \#vertices & \#edges \\
\hline
LastFM & 102.74 & 7624 & 27806 \\
\hline
ca-HepTh & 165.42 & 9877 & 51971 \\
\hline
ca-HepPh & 250.91 & 12008 & 237010 \\
\hline 
\end{tabular}
\caption{Run-times (in seconds) of our exact algorithm on three large datasets, along with their sizes.} 
\label{table: run-times_scalable}
\end{table}
\end{center}

\end{document}